\documentclass[10pt,twocolumn]{article}
\usepackage{causalquery}
\title{Causal Query Compression for Lindblad Dynamics\\Optimal Queries and Nearly Linear Local Simulation}
\author{Jacob Kitchen}
\date{September 25, 2026}
\hypersetup{pdftitle={Causal Query Compression for Lindblad Dynamics: Optimal Queries and Nearly Linear Local Simulation},pdfauthor={Jacob Kitchen}}
\begin{document}
\twocolumn[\begin{@twocolumnfalse}
\maketitle
\begin{abstract}
We construct a causal query compiler for time-dependent Lindblad dynamics and give
two applications with different input models. Under coherent block-encoding access to
the Hamiltonian and individual jump operators, a Lipschitz generator can be simulated
to diamond error \(\eps\) using
\[
 O\!\left(1+\tau+
 \frac{\log(1/\eps)}{\log(e+\log(1/\eps)/\tau)}\right),
 \qquad \tau=T\left(\alpha_H+\sum_\mu\alpha_\mu^2\right),
\]
controlled queries, independently of the time-grid size. For \(\tau\ge1\), this
query bound is optimal in the worst case. Weak active entry and clean return give a
factorial history tail, and two reuse lengths remove the transducer's auxiliary input.
A three-call polar correction supplies the required Lindblad cells. The compiler also
equates the minimum oracle action of bounded adaptive Markovian protocols with quantum
query complexity and the general adversary bound, up to constant factors.

For finite-range dynamics on \(n\) lattice sites with fixed local parameters and
efficient coherent matrix evaluation, we obtain
\[
 O\!\left(n(T+1)\polylog\frac{n(T+1)(1+K_t+B)}{\eps}\right)
\]
elementary gates for piecewise H\"older generators with fixed exponent, local variation
bound \(K_t\), and at most \(B\) breakpoints per term. No commutativity of the jumps
is required. A spatial decomposition on a common bath, compilation on occupied inputs,
and routing on compressed records account for evaluation, arithmetic, selection, and
environment storage. Reusing bath storage between short segments gives depth
\((T+1)\polylog X\) and space \(n\polylog X\), where
\(X=\max\{e,n(T+1)(1+K_t+B)/\eps\}\).
\end{abstract}
\end{@twocolumnfalse}
\vspace{1em}
]

\section{Introduction}

We study two resource questions for time-dependent Lindblad evolution. In the oracle
model, the Hamiltonian and individual jump operators are supplied through coherent
block encodings, and the cost is the number of controlled oracle calls. For local
lattice dynamics, the input instead consists of coherent evaluation circuits for the
local matrices, and every elementary gate is charged. Causal query compression provides
the common circuit construction; its gate implementation requires additional spatial
and bath-occupation estimates.

\subsection{Query complexity and oracle action}

Given block encodings of \(H(t)\) and \(L_\mu(t)\), the objective is to approximate
the channel from time \(0\) to \(T\) in diamond norm. Write
\(\Lambda=\alpha_H+\sum_\mu\alpha_\mu^2\) and \(\tau=T\Lambda\), using the
supplied normalization bounds. For a Lipschitz generator and \(0<\eps<1/8\),
Theorem~\ref{thm:lindblad} gives the query bound
\begin{equation}
 O\!\left(1+\tau+
 \frac{\log(1/\eps)}{\log\!\left(e+\log(1/\eps)/\tau\right)}\right).
 \label{eq:intro-bound}
\end{equation}
For \(\tau=0\), no query is needed. For \(\tau\ge1\), the worst-case lower bound
has the same order, already for Hamiltonian instances with no jump operators.

Nearly linear dependence on normalized time is known for general Lindblad
simulation~\cite{CleveWang17,LiWang23,HeEtAl24}, with a multiplicative polylogarithmic
precision factor. Additive precision costs have been obtained for scalar total jump
rates and unitary jump operators~\cite{BorrasMarvian26,ShangEtAl25}. Time dependence
also requires a mesh that resolves the generator's variation. At fixed normalization,
this mesh may contain arbitrarily many intervals; applying a constant-query channel
at each interval would incur a query cost proportional to that number.

Chen, Gao, Wang, and Zhou~\cite{CGWZ26} resolve this issue for Hamiltonian simulation
under coherent time-indexed access. Their chronological transducer has a factorial
history tail. We give local sufficient conditions for the same tail and verify them
for a short Lindblad circuit. This uses the transducer framework of Belovs, Jeffery,
and Yolcu~\cite{BJY24}.

\begingroup
\setlength{\fboxsep}{6pt}
\medskip\noindent
\fbox{\begin{minipage}{\dimexpr\linewidth-2\fboxsep-2\fboxrule\relax}
\small
\textbf{Query model.} Time and jump labels can be selected coherently. Controlled
block-encoding calls and their adjoints are supplied on the chosen grid. One unit
of cost is a master SELECT call, implemented with \(O(1)\) supplied queries.
If the encodings have only classical indices, building SELECT may require a number
of calls proportional to the grid size.
\end{minipage}}\par\medskip
\endgroup

For cells of weights \(w_j\), the compiler uses small amplitudes both at active entry
and at clean return. A clean path pays both factors; a nonclean path may pay only one,
but its fresh environment is retained and becomes orthogonal to the clean output.
Keeping interfering histories together in a generating function gives
\begin{equation}
 \norm{A^{mq}\Gamma P_{\rm in}}
 \le C\left(\frac{C\Omega}{q}\right)^{q/2},
 \qquad \Omega=\sum_jw_j.
 \label{eq:intro-tail}
\end{equation}
Here \(A\) is the private transducer block and \(m\) bounds the slots per cell.
Combining reuse lengths \(K\) and \(3K\), with coefficients \(-1/2\) and \(3/2\),
removes the auxiliary input. One rectangular amplification step restores the isometry.
The three-call polar correction of an Euler column supplies Lindblad cells with
weight \(h\Lambda\), so the total weight is \(T\Lambda\) on every mesh.

The same compiler applies to bounded adaptive Markovian protocols with measurements,
classical feedback, coherent access, and uniform temporal regularity. If \(MQ(f)\)
is the minimum oracle action for computing a finite function in this model, then
Corollary~\ref{cor:adversary} gives
\[
 MQ(f)=\Theta(Q(f))=\Theta(\operatorname{Adv}^{\pm}(f)).
\]
The environments and histories are retained coherently during compilation and are
discarded only when passing to the accessible output channel.

\subsection{Nearly linear local gate complexity}

A small query count alone does not bound the work of traversing a fine time mesh or
storing its environments. We implement that traversal for finite-range generators on
\(n\) lattice sites, in any fixed spatial dimension. The local dimensions, interaction
range, term degree, jump count, and normalization bounds are fixed. The generator of
each term is piecewise H\"older with fixed exponent, variation bound \(K_t\), and at
most \(B\) breakpoints. With
\[
 X=\max\{e,n(T+1)(1+K_t+B)/\eps\},
\]
Theorem~\ref{loc:thm:main} gives
\begin{equation}
 G=O\bigl(n(T+1)\polylog X\bigr)
 \label{loc:eq:intro-bound}
\end{equation}
one- and two-qubit gates, depth \((T+1)\polylog X\), and \(n\polylog X\) working
qubits when baths are discarded between short time segments. The jumps need not
commute. No matching local gate lower bound is claimed.

\begingroup
\setlength{\fboxsep}{6pt}
\medskip\noindent
\fbox{\begin{minipage}{\dimexpr\linewidth-2\fboxsep-2\fboxrule\relax}
\small
\textbf{Local input and gate model.} Each constant-dimensional local matrix admits
polynomial-cost coherent evaluation to \(b\) bits from its time and spatial address.
Time arithmetic and coherent lookup are included. Patch selection is implemented by
multiplexing these evaluators. Controls, adjoints, coefficient preparation, routing,
and bath operations all enter the gate count.
\end{minipage}}\par\medskip
\endgroup

For Hamiltonians, overlapping regions give nearly linear space-time gate
cost~\cite{HHKL}. In a Lindblad decomposition, the middle inverse in a product
\(U_{AB}U_B^\dagger U_{BC}\) must act on the bath records left by the preceding
regional factor. Resetting those records would change the product. We keep the same
bath through the \(3^D\) passes of the spatial construction and estimate the complete
word with vacuum only at its initial input.

At finite mesh size, a replicated-system contraction groups the word by connected
interaction support. Vacuum return contributes a factor of the bin width for each
selected interaction. Inclusion--exclusion cancels a connected component unless it
reaches opposite cut colors. Bounded-degree support counting then permits patches of
\(O(\log^D X)\) sites. The occupied-input compiler extends the shared history-tail
argument to the bath states entering subsequent regional factors.

For the gate implementation, common normalizations make the singular frames and
baseline routing coefficients independent of the sampled operators. The free
transducer preserves occupation in these coordinates. We route it on stored records
by an inhomogeneous dyadic circuit and implement the collective frame changes in a
symmetric record encoding. Separate estimates control occupation at physical regional
boundaries and throughout the compiled circuit.

Low-Hamming-weight encodings occur in continuous-query simulation~\cite{Berry}.
Our routing uses an occupation-dependent extension of the dyadic identity of Chen,
Gao, Ji, Li, Wang, and Zhou~\cite{ChenGates}. The replicated-system viewpoint is related
to quantum regression~\cite{Blocher}, and the spatial layout follows the overlapping
geometry of~\cite{HHKL}. The smooth-Gaussian-environment algorithm of Yu, Li, Cirac,
and Trivedi~\cite{Yu} requires commuting jumps in its Markovian case.
Commutator-sensitive product formulas in~\cite{Wang} improve system-size dependence;
their extrapolation gives polylogarithmic precision cost for observable estimation,
while full-channel precision cost remains polynomial.

\paragraph{Independent concurrent work.}
Chen, Gao, Wang, and Zhou~\cite{CGWZL26} give an independent query-optimal Lindblad
simulation using a different weak-cell construction. Their construction uses an
exactly trace-preserving and completely positive one-query interpolation based on
global jump-SELECT access. Here the three-call Euler correction provides the primitive
port bounds and the fixed local dilation used for the gate implementation.

\paragraph{Organization.}
Sections~\ref{sec:compiler} and~\ref{sec:slice} establish the shared compiler and
Lindblad cell. Sections~\ref{sec:time-dependent} and~\ref{sec:adaptive} prove the
general query and oracle-action results. Sections~\ref{loc:sec:cells}--\ref{loc:sec:assembly}
give the local gate construction, from occupied-input compilation and the common-bath
decomposition to routing, record cutoffs, and the total resource bounds.

\section{A compiler for causal query programs}
\label{sec:compiler}

Consider a query controlled by a qubit with a small amplitude in its active state. What happens when the control rotation is undone? The clean output acquires a second small factor; the orthogonal output does not. We can see the resulting \(q/2\) exponent in a circuit with just one query.

\paragraph{A one-query example.}
Let \(U\) be a unitary oracle and \(0\le w\le1\). Rotate a fresh qubit by
\[
 R_w=\begin{pmatrix}\sqrt{1-w}&-\sqrt w\\
                    \sqrt w&\sqrt{1-w}\end{pmatrix},
\]
apply \(U\) controlled on that qubit being one, and apply \(R_w^\dagger\). Retain the qubit as part of the output. Starting from \(\ket0\ket\psi\), the clean and nonclean output operators are
\[
 \begin{aligned}
 C_w&=(1-w)I+wU,\\
 N_w&=\sqrt{w(1-w)}(U-I).
 \end{aligned}
\]
In \(C_w\), the two rotations each contribute \(\sqrt w\) to the oracle term. Only one such factor occurs in \(N_w\). There is a nonclean contribution even when the query is replaced by its inactive projector: it is \(-\sqrt{w(1-w)}I\). We will include this case in the cell assumptions.

To see the exponent in the tail bound, replace the controlled \(U\) by \(zU\), and write \(T_w(z)=(C_w(z),N_w(z))^{\mathsf T}\). Orthogonality of the two output sectors and unitarity of \(U\) give the exact identity
\[
 \begin{aligned}
 T_w(z)^\dagger T_w(z)
 &=C_w(z)^\dagger C_w(z)+N_w(z)^\dagger N_w(z)\\
 &=(1-w+w|z|^2)I.
 \end{aligned}
\]
For a product \(\mathcal V(z)\) of these cells, each with a fresh retained qubit, set \(\Omega=\sum_jw_j>0\). Then \(\norm{\mathcal V(z)}\le e^{\Omega R^2/2}\) on \(|z|=R\). Choosing \(R^2=q/\Omega\) in Cauchy's coefficient estimate yields, for integers \(q\ge1\),
\[
 \norm{[z^q]\mathcal V(z)}
 \le R^{-q}e^{\Omega R^2/2}
 =\left(\frac{e\Omega}{q}\right)^{q/2}.
\]
At the chosen radius, \(R^{-q}=(\Omega/q)^{q/2}\), and the exponential contributes \(e^{q/2}\). This gives the power \(q/2\) above. With up to \(m\) slots per cell, the balance is between \(R^{-mq}\) and \(e^{O(\Omega R^{2m})}\), so the power is again \(q/2\). For a full query history, the last hit can occur before its cell is finished. The estimate in Section~\ref{subsec:history-generating} includes these unfinished paths.

To record successive hits, we give every oracle slot its own private port. The transducer built from these ports keeps their chronological order, so its matrix powers can be used to count query histories.

\subsection{Exposing the query history}

Let \(\mathcal W\) be the public workspace of a finite unitary query program
\begin{equation}
 V=F_M\widetilde O_MF_{M-1}\cdots F_1\widetilde O_1F_0,
 \qquad \widetilde O_t=Q_t+O_tP_t.
 \label{eq:query-program}
\end{equation}
The gates \(F_t\) and orthogonal projectors \(P_t\) are oracle independent, \(Q_t=I-P_t\), and \(O_t\) is unitary on the active subspace \(P_t\mathcal W\). We assume that the direct sum of the \(O_t\)'s can be implemented with one coherent master query. If forward and inverse queries are charged separately, a constant number suffices throughout.

For each slot, introduce a private copy \(\mathcal L_t\) of \(P_t\mathcal W\), with inclusion \(\iota_t:\mathcal L_t\to\mathcal W\). The operator
\begin{equation}
 X_t=\begin{pmatrix}Q_t&\iota_t\\ \iota_t^\dagger&0\end{pmatrix}
 \label{eq:capture-release}
\end{equation}
acts on \(\mathcal W\oplus\mathcal L_t\). Since \(\iota_t\iota_t^\dagger=P_t\), \(\iota_t^\dagger\iota_t=I\), and \(Q_t\iota_t=0\), it satisfies \(X_t=X_t^\dagger\) and \(X_t^2=I\). Thus it swaps the active public component with the private port and fixes the inactive component.

Extend each \(X_t\) by the identity on the other private ports. On \(\mathcal W\oplus\mathcal L\), where \(\mathcal L=\bigoplus_{t=1}^M\mathcal L_t\), define
\begin{align}
 S^\circ&=(F_M\oplus I)X_M\cdots(F_1\oplus I)X_1(F_0\oplus I),\notag\\
 \overline O_t&=\iota_t^\dagger O_t\iota_t,
 \qquad \widehat O=\bigoplus_{t=1}^M\overline O_t,
 \qquad S=S^\circ(I\oplus\widehat O).
 \label{eq:cut-open-transducer}
\end{align}
The routing circuit \(S^\circ\) is oracle free, and \(S\) is a unitary using one master query. Write its public--private decomposition as
\begin{equation}
 S=\begin{pmatrix}D&C\\B&A\end{pmatrix}.
 \label{eq:Sblocks}
\end{equation}
Read \(S\) by its input columns: \(D\) and \(B\) map public inputs to public outputs and private ports; \(C\) and \(A\) do the same for private inputs.

Restrict the inputs by the initialization projector \(P_{\rm in}\), which fixes all circuit ancillas. For \(x\in\operatorname{ran}P_{\rm in}\), define
\begin{align}
 \psi_0(x)&=F_0x,\notag\\
 \gamma_t(x)&=\iota_t^\dagger P_t\psi_{t-1}(x),\notag\\
 \psi_t(x)&=F_t\bigl(Q_t\psi_{t-1}(x)+\iota_t\overline O_t\gamma_t(x)\bigr),
 \qquad 1\le t\le M,
 \label{eq:chronological-states}
\end{align}
and define the coherent query history
\begin{equation}
 \Gamma x=\bigoplus_{t=1}^M\gamma_t(x),
 \qquad
 \norm{\Gamma x}^2=\sum_{t=1}^M\norm{P_t\psi_{t-1}(x)}^2.
 \label{eq:gamma-definition}
\end{equation}
If \(x\) is normalized, the summand for a slot is its active-subspace probability. Adding over slots gives the query weight \(\norm{\Gamma x}^2\).

We can apply \(S\) to \(x\oplus\Gamma x\) without normalizing that vector. Just before the swap at slot \(t\), its public component is \(\psi_{t-1}(x)\). The private component at that slot has already received its oracle call and is \(\overline O_t\gamma_t(x)\). Thus
\[
 X_t\binom{\psi_{t-1}(x)}{\overline O_t\gamma_t(x)}
 =\binom{Q_t\psi_{t-1}(x)+\iota_t\overline O_t\gamma_t(x)}{\gamma_t(x)}.
\]
Induction over the slots gives the graph relation
\begin{equation}
\begin{gathered}
 S\binom{P_{\rm in}}{\Gamma P_{\rm in}}
 =\binom{VP_{\rm in}}{\Gamma P_{\rm in}},\\
 \begin{cases}
 BP_{\rm in}+A\Gamma P_{\rm in}=\Gamma P_{\rm in},\\
 DP_{\rm in}+C\Gamma P_{\rm in}=VP_{\rm in}.
 \end{cases}
 \end{gathered}
 \label{eq:graph}
\end{equation}
This is a transducer implementation of \(V\), with auxiliary state \(\Gamma x\).

\begin{lemma}[Strict chronology]
\label{lem:strict-chronology}
In slot order, \(A\) is strictly lower triangular. Consequently \(A^M=0\), and
\begin{equation}
\begin{aligned}
 \Gamma P_{\rm in}&=\sum_{r=0}^{M-1}A^rBP_{\rm in},\\
 VP_{\rm in}&=\left(D+C\sum_{r=0}^{M-1}A^rB\right)P_{\rm in}.
 \end{aligned}
 \label{eq:graph-expansion}
\end{equation}
For \(r\ge1\), the term \(CA^{r-1}B\) sums all public paths with exactly \(r\) active hits.
\end{lemma}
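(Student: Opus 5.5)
We compute the private block \(A\) directly from the product defining \(S\), slot by slot. Write \(S=S^\circ(I\oplus\widehat O)\). Since \(I\oplus\widehat O\) is block diagonal and acts on each port \(\mathcal L_t\) separately, we have \(A=A^\circ\widehat O\), where \(A^\circ\) is the private--private block of \(S^\circ\). Because \(\widehat O\) is diagonal in slot order, it suffices to show that \(A^\circ\) is strictly lower triangular. In other words, we must show that the \((s,t)\) block of \(A^\circ\), from input port \(\mathcal L_t\) to output port \(\mathcal L_s\), vanishes unless \(s>t\).

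To do this, we trace a vector \(y\in\mathcal L_t\), placed in the private input with zero public part, through the factors of \(S^\circ\).

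\emph{Factors before slot \(t\).} The maps \(X_1,\dots,X_{t-1}\) and \(F_0\oplus I,\dots,F_{t-1}\oplus I\) act as the identity on \(\mathcal L_t\), so \(y\) reaches \(X_t\) unchanged.

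\emph{Slot \(t\).} Applying \(X_t\) to \(0\oplus y\) gives \(\iota_t y\oplus 0\). The \(\mathcal L_t\) component is now zero, and the vector lies entirely in the public space \(\mathcal W\).

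\emph{Factors after slot \(t\).} The later factors \(X_s\) with \(s>t\) only exchange \(\mathcal W\) with \(\mathcal L_s\). Each \(X_s\) acts as the identity on every other port, including \(\mathcal L_t\) and the ports \(\mathcal L_r\) with \(r<t\). Those ports therefore stay zero.

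Hence the output in \(\mathcal L_s\) vanishes for every \(s\le t\), including the diagonal case \(s=t\), because \(X_t\) is applied only once. This proves strict lower triangularity. A product of \(M\) strictly lower triangular block matrices on \(M\) blocks is zero, so \(A^M=0\).

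\emph{The expansions.} From the graph relation \eqref{eq:graph}, \(\Gamma P_{\rm in}=BP_{\rm in}+A\Gamma P_{\rm in}\). Substituting this relation into itself \(M-1\) times gives
\[
\Gamma P_{\rm in}=\sum_{r=0}^{M-1}A^rBP_{\rm in}+A^M\Gamma P_{\rm in},
\]
and the last term vanishes since \(A^M=0\). Substituting the result into \(DP_{\rm in}+C\Gamma P_{\rm in}=VP_{\rm in}\) yields the second line of \eqref{eq:graph-expansion}.

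\emph{Path interpretation.} Expand \(CA^{r-1}B\) as a sum over port sequences \(t_1<\dots<t_r\), where strict increase follows from triangularity.
\begin{itemize}
\item \(B\) runs the public input until \(X_{t_1}\) swaps its active part into \(\mathcal L_{t_1}\); this is the first hit.
\item Each factor of \(A\) applies \(\overline O\) at the current port, releases the state into public, and runs it to the next capture.
\item \(C\) applies the final oracle call and runs the state to the public output.
\end{itemize}
So each term is a public path with exactly \(r\) active hits at slots \(t_1<\dots<t_r\). Summing over all increasing sequences gives all such paths, and there are no double counts because the port sequence determines the term.

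The main obstacle is bookkeeping: one must confirm that \(\widehat O\) acts before the routing in \(S\), so that \(A\) factorizes as \(A^\circ\widehat O\) without mixing slots. One must also confirm that \(X_t\) occurs exactly once, which is what excludes diagonal blocks. Both follow directly from the definition \eqref{eq:cut-open-transducer}.
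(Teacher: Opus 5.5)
Your proof is correct and takes essentially the paper's route. The paper likewise reads off the blocks of \(A\) by following a state released at port \(t\) through the later slots, but writes the nonzero blocks explicitly as \([A]_{s,t}=\iota_s^\dagger P_s\mathcal Q_{s\leftarrow t}\iota_t\overline O_t\) for \(s>t\), which makes the path count immediate, whereas you only verify vanishing for \(s\le t\) via \(A=A^\circ\widehat O\); both then obtain the series for \(\Gamma P_{\rm in}\) and \(VP_{\rm in}\) by solving the block equations of the graph relation with \(A^M=0\).
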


\begin{proof}
After release from port \(t\), the next possible capture is at some slot \(s\) later in the circuit. The slots between release and capture contribute inactive projectors, giving
\begin{equation}
 [A]_{s,t}=\begin{cases}
 \iota_s^\dagger P_s\mathcal Q_{s\leftarrow t}\iota_t\overline O_t,&s>t,\\
 0,&s\le t,
 \end{cases}
 \label{eq:Achronological}
\end{equation}
where
\[
 \mathcal Q_{s\leftarrow t}
 =F_{s-1}Q_{s-1}\cdots F_{t+1}Q_{t+1}F_t,
 \qquad \mathcal Q_{t+1\leftarrow t}=F_t.
\]
Nilpotence now follows from chronology. Solving the first block equation in \eqref{eq:graph} gives the finite series for \(\Gamma\), and the second gives the series for \(V\). Each factor of \(A\) advances to one more active slot. An inverse oracle obeys the same ordering.
\end{proof}

\subsection{Removing the auxiliary state}

Use the construction of~\cite{BJY24}: distribute the public input over \(N\) bins with equal amplitudes, and pass a single private rail through the bins in order. At bin \(\ell\), apply \(S\) to that bin's public component together with the rail. Feed its private output into the next bin. These are unitary operations on a direct sum, with no copying. We write \(R_N\) for the public block after projecting the bin register onto the same uniform state used at initialization.

When the private input is zero, its amplitude before bin \(\ell\) satisfies
\begin{equation}
\begin{gathered}
 z_0=0,\qquad z_{\ell+1}=Az_\ell+\frac{Bx}{\sqrt N},\\
 z_\ell=\frac{(I-A^\ell)\Gamma x}{\sqrt N}.
 \end{gathered}
 \label{eq:reuse-recurrence}
\end{equation}
The public output from that bin is \(Dx/\sqrt N+Cz_\ell\). Averaging over the bins therefore yields
\begin{equation}
 R_Nx=Dx+\sum_{r=0}^{N-2}\left(1-\frac{r+1}{N}\right)CA^rBx.
 \label{eq:reuse-public-block}
\end{equation}
Equivalently,
\begin{equation}
\begin{gathered}
 (V-R_N)P_{\rm in}=CG_N(A)\Gamma P_{\rm in},\\
 G_N(z)=\frac1N\sum_{r=0}^{N-1}z^r.
 \end{gathered}
 \label{eq:reuse-identity}
\end{equation}

The coefficients below degree \(K\) in \(G_K\) and \(G_{3K}\) are constant. They can all be cancelled while preserving the target coefficient, since
\begin{equation}
 -\frac12G_K(z)+\frac32G_{3K}(z)=z^KG_{2K}(z).
 \label{eq:reuse-filter-polynomial}
\end{equation}
Define \(\widetilde R_K=-R_K/2+3R_{3K}/2\). The error becomes
\begin{equation}
 (V-\widetilde R_K)P_{\rm in}=CA^KG_{2K}(A)\Gamma P_{\rm in}.
 \label{eq:filtered}
\end{equation}
Since \(A\) and \(C\) are contractions and \(\norm{G_{2K}(A)}\le1\),
\begin{equation}
 \norm{(V-\widetilde R_K)P_{\rm in}}
 \le\eta_K,
 \qquad \eta_K:=\norm{A^K\Gamma P_{\rm in}}.
 \label{eq:filtered-error}
\end{equation}

The two coefficients have absolute sum two. A linear combination of the reuse circuits consequently encodes \(\widetilde R_KP_{\rm in}/2\) using at most \(3K\) controlled calls to \(S\). Pad the shorter branch with identities so both branches share the same query schedule. Let \(W\) be this unitary encoding, with right and left signal projectors
\[
 \Pi_R=P_{\rm in}\otimes\ketbra0{0}_{\rm work},
 \qquad \Pi_L=I_{\rm target}\otimes\ketbra0{0}_{\rm work}.
\]
Only the added work registers are selected on the output side; the environment of the target isometry remains part of its output. With \(R_R=2\Pi_R-I\), \(R_L=2\Pi_L-I\), one round of amplification gives
\begin{equation}
\begin{gathered}
 U_{\rm amp}=-WR_RW^\dagger R_LW,\\
 \Pi_LU_{\rm amp}\Pi_R=3X-4XX^\dagger X,\qquad
 X=\frac{\widetilde R_KP_{\rm in}}2.
 \end{gathered}
 \label{eq:rectangular-oaa}
\end{equation}

\begin{lemma}[Restoration of an approximate isometry]
\label{lem:cubic-restoration}
Suppose \(V\) is an isometry on \(\operatorname{ran}P\), \(\norm{T-VP}\le\eta\le1/8\), and a unitary \(W\) has projected block \(T/2\). The amplified unitary in \eqref{eq:rectangular-oaa}, restricted to the initialized input, is within \(4\eta\) of \(V\) with the work registers reset.
\end{lemma}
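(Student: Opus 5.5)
The plan is to reduce the statement to a scalar estimate on the polynomial \(p(x)=3x-4x^3\) applied through the singular value decomposition. First I identify the block. Write \(X=T/2\), so that \(\Pi_LW\Pi_R=X\). The standard rectangular oblivious amplitude amplification identity \eqref{eq:rectangular-oaa} gives \(\Pi_LU_{\rm amp}\Pi_R=3X-4XX^\dagger X\). In singular value form \(X=\sum_i s_i\,\ket{u_i}\bra{v_i}\), this block equals \(\sum_i p(s_i)\ket{u_i}\bra{v_i}\). I will briefly verify the identity by expanding \(-W(2\Pi_R-I)W^\dagger(2\Pi_L-I)W\) and projecting. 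Here I use that \(X=\Pi_LW\Pi_R\), and that \(\Pi_R\) already contains \(P_{\rm in}\), so that \(X^\dagger X\) acts within \(\operatorname{ran}P\).

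The target is \(VP/2\cdot 2=VP\), which corresponds to \(p(1/2)=3/2-1/2=1\). So I write \(Y=VP/2\), whose singular values on \(\operatorname{ran}P\) are all exactly \(1/2\) because \(V\) is an isometry there. Then \(p(Y)=3Y-4YY^\dagger Y=3Y-Y=2Y=VP\). The claim therefore becomes \(\norm{p(X)-p(Y)}\le4\eta\) with \(\norm{X-Y}\le\eta/2\).

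For the main step I expand the difference directly rather than using singular values, since \(X\) and \(Y\) need not commute. Writing \(E=X-Y\), I have \(3(X-Y)\) minus \(4(XX^\dagger X-YY^\dagger Y)\). The cubic difference telescopes as
\[
 E X^\dagger X + Y E^\dagger X + YY^\dagger E .
\]
A crude bound gives \(\norm{p(X)-p(Y)}\le 3\norm E+4\norm E(\norm X^2+\norm X\norm Y+\norm Y^2)\), which is too large: with norms near \(1/2\) it gives about \(6\norm E\). This is the main obstacle, since one must exploit the cancellation at the fixed point, where \(p'(1/2)=3-12/4=0\).

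To handle it, I use the exact structure of \(Y\) and \(YY^\dagger Y=Y/4\). Expanding \(XX^\dagger X\) to first order in \(E\) gives
\[
 Y/4+EY^\dagger Y+YE^\dagger Y+YY^\dagger E .
\]
On \(\operatorname{ran}P\) we have \(Y^\dagger Y=P/4\), and \(YY^\dagger=\Pi_V/4\), where \(\Pi_V\) is the range projector of \(V\). The linear part of \(p(X)-p(Y)\) is then \(3E-E-\Pi_VE-4YE^\dagger Y\). Its norm is at most \(\norm{(I-\Pi_V)E}+\norm{\Pi_VE-\dots}\). Here I would decompose \(E\) relative to \(\Pi_V\) and argue that the linear part is bounded by \(2\norm E\), since each piece is a contraction-weighted copy of \(E\).

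The quadratic and cubic remainders are \(O(\norm E^2)\) with constants depending on \(\norm Y=1/2\). Using \(\norm E\le\eta/2\le1/16\), they contribute at most a small multiple of \(\eta\). Together with the linear part this gives a total of at most \(4\eta\) in terms of \(\eta\), where the slack \(4\eta\) versus \(2\cdot\eta/2\cdot2\) absorbs the higher-order terms.

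Finally, I interpret \(\Pi_LU_{\rm amp}\Pi_R\approx VP\) in operator norm as saying that \(U_{\rm amp}\) restricted to initialized input is within \(4\eta\) of \(V\) tensored with the reset work registers. This follows because \(\Pi_L\) selects only the work registers, as stated before the lemma.
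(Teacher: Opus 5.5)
Your bound on the signal block holds. The linear term is at most \(2\norm{E}\le\eta\), and even a crude \(3\norm{E}\) would do. The remainder is \(O(\eta^2)\), so \(\norm{\Pi_LU_{\rm amp}\Pi_R-VP}\) is roughly \(\tfrac65\eta\).

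The gap is in your last paragraph. The lemma is about the full column \(U_{\rm amp}\Pi_R\): this is the isometry the circuit actually implements, and it is what Theorem~\ref{thm:tail-compiler} and the isometry-to-diamond conversion use. It splits as \(\Pi_LU_{\rm amp}\Pi_R+(I-\Pi_L)U_{\rm amp}\Pi_R\). The remark that \(\Pi_L\) fixes only the work registers says nothing about the second term, which is the amplitude left on nonzero work states. Block closeness and unitarity alone give only \(\norm{(I-\Pi_L)U_{\rm amp}\Pi_Rx}^2=1-\norm{\Pi_LU_{\rm amp}\Pi_Rx}^2\le2c\eta\) for a unit \(x\), when the block is within \(c\eta\) of \(VP\). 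This is essentially attained by a block such as \((1-c\eta)VP\). So your argument as written gives an isometry error of order \(\sqrt\eta\), not \(4\eta\).

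The missing ingredient is that the block's singular values are \emph{second-order} close to one. You set this up with the singular value form at the start and then dropped it. Let \(\sigma\in[1-\eta,1+\eta]\) range over the singular values of \(T\). The singular values of \(3X-4XX^\dagger X\) are then \(p(\sigma/2)=(3\sigma-\sigma^3)/2\), with \(1-p(\sigma/2)=(\sigma-1)^2(\sigma+2)/2\le\tfrac{25}{16}\eta^2\). Hence the leakage restricted to \(\operatorname{ran}\Pi_R\) satisfies \(\norm{(I-\Pi_L)U_{\rm amp}\Pi_R}^2\le1-\min p^2\le\tfrac{25}{8}\eta^2\). The two pieces have orthogonal ranges, so you can add squares with your block bound; the total is comfortably below \(4\eta\).

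The paper reaches this point more directly. It compares the full column with the polar isometry \(U\) of \(T\), for which the squared difference operator is exactly \(2I-2f(|T|)=(|T|-I)^2(|T|+2I)\preceq4\eta^2\). It then adds \(\norm{U-VP}\le2\eta\). In either version, the cancellation \(p'(1/2)=0\) that you noticed must be applied to the Gram operator of the whole column, not only to the first-order perturbation of the block.
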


\begin{proof}
Let \(T=U|T|\) be the polar decomposition. Its singular values lie in \([1-\eta,1+\eta]\), so
\begin{equation}
 \norm{U-VP}\le\norm{|T|-I}+\norm{T-VP}\le2\eta.
 \label{eq:polar-nearby}
\end{equation}
For a singular value \(s\) of \(T\), the amplified signal amplitude in the polar direction is \(f(s)=(3s-s^3)/2\). Both the full amplified column and the polar target are unit vectors, and
\[
 2-2f(s)=(s-1)^2(s+2)\le4\eta^2.
\]
This gives distance at most \(2\eta\) on a singular vector. For a superposition, compute the squared difference operator on the input space: it is \(2I-2f(|T|)\), so the bound is again \(2\eta\). The remaining error is bounded by \eqref{eq:polar-nearby}.
\end{proof}

\begin{theorem}[Compilation from a history tail]
\label{thm:tail-compiler}
Assume \eqref{eq:graph}. If \(K\ge1\) and \(\eta_K\le1/8\), there is a circuit implementing an isometry \(\widetilde V_K\) such that
\begin{equation}
 \norm{(\widetilde V_K-V)P_{\rm in}}\le4\eta_K.
 \label{eq:tail-compiler-error}
\end{equation}
The circuit uses at most \(9K\) controlled calls to \(S\) and \(S^\dagger\), and hence \(O(K)\) controlled master queries.
\end{theorem}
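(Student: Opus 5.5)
The proof assembles the three ingredients already built in this section. I start from the graph relation \eqref{eq:graph} and the reuse identity \eqref{eq:reuse-identity}, which hold for every bin count \(N\). I take \(N=K\) and \(N=3K\) and form \(\widetilde R_K=-R_K/2+3R_{3K}/2\). By \eqref{eq:reuse-filter-polynomial} and \eqref{eq:filtered}, the error is \(CA^KG_{2K}(A)\Gamma P_{\rm in}\). Its norm is bounded by \(\eta_K\) as in \eqref{eq:filtered-error}, because \(C\) and \(A\) are blocks of the unitary \(S\), and hence contractions, and \(\norm{G_{2K}(A)}\le\frac1{2K}\sum_r\norm{A}^r\le1\). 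One point needs care here. The bound uses \(A^KG_{2K}(A)\Gamma=G_{2K}(A)A^K\Gamma\), so the factor \(A^K\) acts first on \(\Gamma P_{\rm in}\). This gives \(\norm{\widetilde R_KP_{\rm in}-VP_{\rm in}}\le\eta_K\le1/8\).

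Next I build the block encoding \(W\). Each \(R_N\) is the projected public block of the bin circuit of~\cite{BJY24}. That circuit prepares the uniform bin state, threads the private rail through the bins, and applies one \(S\) per bin, so it uses \(N\) controlled calls to \(S\). I then project the bin register back onto the uniform state. The two circuits are combined by a linear combination of unitaries. A single ancilla qubit is prepared with amplitudes \(\sqrt{1/4},\sqrt{3/4}\), which matches the absolute coefficients \(1/2,3/2\) with total \(2\). The sign \(-1\) is applied as a phase on the first branch, and the selected circuit is applied controlled on that ancilla. To keep the query schedule common to both branches, I pad the \(K\)-bin branch with identity slots and make every one of the \(3K\) calls controlled. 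The result is a unitary \(W\) with \(\Pi_LW\Pi_R=\widetilde R_KP_{\rm in}/2\) that uses at most \(3K\) controlled calls to \(S\).

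Finally I apply Lemma~\ref{lem:cubic-restoration} with \(T=\widetilde R_KP_{\rm in}\), \(P=P_{\rm in}\) and \(\eta=\eta_K\). The amplified unitary \(U_{\rm amp}=-WR_RW^\dagger R_LW\), restricted to the initialized input with the work registers reset, is then within \(4\eta_K\) of \(V\). I define \(\widetilde V_K\) as the isometry obtained by restricting \(U_{\rm amp}\) to \(\operatorname{ran}P_{\rm in}\), keeping the full output, including the non-reset work-register components. This is an honest isometry, and the lemma bounds the distance of that full output column from \(VP_{\rm in}\) with the work registers in \(\ket0\).

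The query count is \(3K+3K+3K=9K\) controlled calls to \(S\) or \(S^\dagger\), because \(W^\dagger\) uses the adjoint calls. Each call to \(S\) is one master query, and \(S^\dagger\) costs one inverse master query, so the total is \(O(K)\) controlled master queries. The main obstacle is the second step, making sure the LCU and the padded schedules really give the projected block exactly \(\widetilde R_KP_{\rm in}/2\). This requires checking that projecting the bin register onto the uniform state reproduces \eqref{eq:reuse-public-block} with the private rail initialized to zero and discarded in the work register, so that the rail is counted among the selected work registers in \(\Pi_L,\Pi_R\). It also requires the output signal projector \(\Pi_L\) to leave the target's environment unselected.
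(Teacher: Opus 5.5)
Your proposal is correct and follows the paper's own proof. The two-length reuse combination has error at most \(\eta_K\); the padded linear combination gives a unitary \(W\) with \(3K\) calls and projected block \(\widetilde R_KP_{\rm in}/2\); and Lemma~\ref{lem:cubic-restoration}, applied across the three occurrences of \(W\) or \(W^\dagger\) in the amplification sequence, gives error \(4\eta_K\) with \(9K\) calls. You only spell out the linear-combination and signal-projector details that the paper leaves implicit.
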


\begin{proof}
The filtered block has error at most \(\eta_K\) by \eqref{eq:filtered-error}. Apply Lemma~\ref{lem:cubic-restoration}. Each occurrence of \(W\) or \(W^\dagger\) costs at most \(3K\) transducer calls, and \eqref{eq:rectangular-oaa} contains three such occurrences.
\end{proof}

\subsection{Weak cells}

Group consecutive slots into cells of at most \(m\) slots each. At cell \(j\), initialize a fresh workspace \(E_j\) in \(\ket0_j\). The cell may act on the system, this workspace, and the permitted old registers. Once it ends, keep every register in \(E_j\); no later cell may act on them. Choose a one-dimensional clean output subspace and write
\[
 E_j=\mathbb C\ket0_j\oplus E_j^\bullet.
\]
We may choose the clean output by an oracle-independent basis change. The retained registers exclude work that is uncomputed before the cell ends.

Replace the oracle slots in a prefix by their inactive projectors and call the prefix \emph{absorbed}. The next active port then selects a path's first hit. An \emph{absorbed suffix}, with the same replacement in later slots, selects its last hit. Only the fresh workspace is initialized; the joint system and older registers are unrestricted. The one-query rotation obeys the following conditions with \(a=m=1\).

\begin{definition}[Causal weak cell]
\label{def:cell}
A cell of weight \(w_j\ge0\) is \((a,m)\)-causal if it has at most \(m\) oracle slots and the following maps have operator norm at most \(a\sqrt{w_j}\):
\begin{cellconditions}
 \item the map from a clean input to any chosen active port along an absorbed prefix;
 \item the direct sum of the active query-input maps in the ordinary cell circuit;
 \item the map from a state released at any active port to the clean output along an absorbed suffix;
 \item the map from a clean input to \(E_j^\bullet\) with every oracle slot absorbed.
\end{cellconditions}
The estimates include the system's joint state with all older registers, and still hold after attaching any reference register. Along every path, gates and slot maps, selected or inactive, have operator norm at most one.
\end{definition}

At fixed \(m\), (i) implies (ii) with a larger constant: expand a query input by the location of its first selected hit. We list (ii) separately to make the bound on query-history weight explicit. Put
\begin{equation}
 \Omega=\sum_{j=1}^Jw_j.
 \label{eq:Omega}
\end{equation}
The next theorem is the estimate needed for compilation.

\begin{theorem}[Factorial history tail]
\label{thm:causal-tail}
For a program of \((a,m)\)-causal cells, there are constants \(c,C>0\), depending only on \(a,m\), such that every integer \(q\ge c\max\{\Omega,1\}\) satisfies
\begin{equation}
 \norm{A^{mq}\Gamma P_{\rm in}}
 \le C\left(\frac{C\Omega}{q}\right)^{q/2}.
 \label{eq:tail-main}
\end{equation}
\end{theorem}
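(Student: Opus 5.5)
We will prove \eqref{eq:tail-main} with a generating-function argument that extends the one-query example of this section to full query histories. Introduce a marking parameter \(z\) and replace every oracle slot \(O_t\) by \(zO_t\). Expand by Lemma~\ref{lem:strict-chronology}: the vector \(A^{r}\Gamma P_{\rm in}=\sum_{s\ge r}A^{s}BP_{\rm in}\) collects histories whose private output follows at least \(r+1\) active hits. Each such history ends with a release into a private port. So \(A^{mq}\Gamma P_{\rm in}\) is a sum over the last-hit slot \(t\) of terms of the form ``\(z\)-coefficients of degree \(\ge mq+1\) in the truncated circuit run up to slot \(t\), projected onto port \(t\)''. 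Because a cell has at most \(m\) slots, \(mq\) hits touch at least \(q\) distinct cells. We therefore reorganize the count by the set of touched cells and bound an operator-valued generating function \(\mathcal V_t(z)\), namely the circuit truncated at slot \(t\) and read at port \(t\), on a circle \(|z|=R\).

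The key step is a per-cell norm bound on \(|z|=R\ge1\), obtained by a block decomposition that mirrors \(T_w(z)^\dagger T_w(z)=(1-w+w|z|^2)I\). For a cell we split its output into the clean part \(\ket0_j\) and the nonclean part \(E_j^\bullet\). We write the cell map as an absorbed part plus the terms with at least one hit. Each hit term begins with a first-hit map of norm \(\le a\sqrt{w_j}\) by (i). A hit term that returns clean ends with a last-hit map of norm \(\le a\sqrt{w_j}\) by (iii), and the intervening slots contribute \(|z|^{\le m}\) with contractive gates. Hence the clean component is \(I_{\rm absorbed}+O(a^2w_jR^{m})\), where the absorbed clean part is a contraction. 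The nonclean component has norm \(O(a\sqrt{w_j}R^{m})\) by (ii)/(i), plus \(a\sqrt{w_j}\) from (iv). The nonclean outputs lie in the fresh retained \(E_j^\bullet\), which later cells never touch, so they remain orthogonal to everything else, and their squared norms add rather than their norms. We then get \(\norm{T_j(z)^\dagger T_j(z)}\le 1+c_1w_jR^{2m}\), using \(R\ge1\) and \(w_j\) bounded, which may be assumed after splitting heavy cells or absorbing them into constants. Multiplying over cells gives \(\norm{\mathcal V(z)}\le \exp(c_1\Omega R^{2m}/2)\). Here the reference-register clause in Definition~\ref{def:cell} handles the joint state with older registers.

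The truncated paths, whose last hit sits inside an unfinished cell, are handled by bounding the final partial cell separately. The map from its clean input to a port is \(\le a\sqrt{w_j}R^{m}\) by (i)/(ii). Because the direct sum over ports is controlled by (ii), summing over \(t\) costs only a constant and not a factor \(M\). We take the ports' direct sum as an orthogonal decomposition, which is exactly how \(\Gamma\) is defined in \eqref{eq:gamma-definition}.

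Finally we apply Cauchy's estimate to the tail. The coefficients of degree \(\ge mq\) are bounded by \(\sum_{k\ge mq}R^{-k}\sup_{|z|=R}\norm{\mathcal V_t(z)}\), which is at most \(2R^{-mq}e^{c_1\Omega R^{2m}}\) for \(R\ge2\). We choose \(R^{2m}=q/(c_1\Omega)\). Then \(R^{-mq}=(c_1\Omega/q)^{q/2}\) and the exponential contributes \(e^{q}\), which gives \(C(C\Omega/q)^{q/2}\). The hypothesis \(q\ge c\max\{\Omega,1\}\) guarantees \(R\ge2\).

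The main obstacle is the per-cell bound with unnormalized \(z\). The nonclean leakage when a hit returns to an unabsorbed later slot requires condition (ii) combined with the first-hit expansion, and cross terms between the clean and nonclean sectors must vanish. We would first try to prove orthogonality directly from retention of \(E_j\). If that fails, we would fall back on the Cauchy--Schwarz bound \(2|\langle c,n\rangle|\le \delta\norm c^2+\delta^{-1}\norm n^2\) with \(\delta=w_jR^{2m}\).
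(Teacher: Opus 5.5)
Your proposal is essentially the paper's proof. Both use per-cell polynomial bounds from (i), (iii) and (iv), with the clean and nonclean outputs lying in the orthogonal subspaces \(\ket0_j\) and \(E_j^\bullet\), to get \(\|T_j(z)\|\le e^{O(w_jR^{2m})}\); then a first-hit bound on the port inputs of the final partial cell, an orthogonal sum over ports, and a Cauchy estimate at \(R^{2m}\propto q/\Omega\). A few bookkeeping corrections are needed. The orthogonal sum over ports gives a prefactor of order \(\sqrt{\Omega}\,R^{m}\), not a constant, and at \(|z|=R\) it comes from (i) rather than (ii), since (ii) concerns the undeformed circuit; this prefactor is \(O(\sqrt q)\) at your radius and is absorbed into \(C^{q/2}\). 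No boundedness of \(w_j\) is needed, because \((1+x)^2+y\le e^{2x+y}\). The case \(\Omega=0\), where your radius is undefined, must be handled separately, as the paper does.
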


\subsection{A generating function for the history}
\label{subsec:history-generating}

Replace \(Q_t+O_tP_t\) formally by \(Q_t+zO_tP_t\) for a complex variable \(z\). On expanding the product, the power \(z^r\) records exactly \(r\) active oracle applications. This polynomial is a device for summing paths in the proof.

For a whole cell with initialized workspace, write this polynomial as \(T_j(z)\), with output blocks \(C_j(z)\) on the clean subspace and \(N_j(z)\) on its complement. For the inputs to the active query ports, use the direct sum \(K_j(z)\). Only preceding slots are deformed in these maps, so their degree is at most \(m-1\).

\begin{lemma}[Local polynomial bounds]
\label{lem:local-polynomial}
Let \(R\ge1\), \(|z|=R\), and \(\rho=1+R\). Then
\begin{align}
 \norm{C_j(z)}&\le1+a^2w_j(\rho^m-1),\notag\\
 \norm{N_j(z)}&\le a\sqrt{w_j}\,\rho^m,\notag\\
 \norm{K_j(z)}&\le a\sqrt{mw_j}\,\rho^{m-1},
 \label{eq:local-polynomial-bounds}
\end{align}
and
\begin{equation}
 \norm{T_j(z)}\le
 \exp\!\left(\tfrac32a^2w_j\rho^{2m}\right).
 \label{eq:cell-growth}
\end{equation}
\end{lemma}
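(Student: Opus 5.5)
Expand each cell polynomial by the number and location of selected hits. We write \(T_j(z)=\sum_{r=0}^{m}z^rT_j^{(r)}\), where \(T_j^{(r)}\) sums the paths with exactly \(r\) selected slots and inactive projectors elsewhere. A path is fixed by a set \(s_1<\dots<s_r\) of selected slots. By the last clause of Definition~\ref{def:cell}, each such path operator is a product of norm-one factors. There are at most \(\binom{m}{r}\) paths with \(r\) hits, so on \(|z|=R\)
\[
 \sum_{r\ge1}|z|^r\binom mr=\rho^m-1 .
\]
This raw count alone gives only \(\rho^m\). The factors \(a\sqrt{w_j}\) come from splitting each path at its first hit and, where applicable, at its last hit.

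\emph{Step 1: \(C_j\).} The \(r=0\) term of \(C_j\) is the clean-to-clean map with every slot absorbed, which has norm at most one. For \(r\ge1\), factor a path with first hit \(s_1\) and last hit \(s_r\) as follows. First comes the absorbed prefix from the clean input to port \(s_1\), which has norm at most \(a\sqrt{w_j}\) by (i). Next comes the middle segment, a product of norm-one factors. Last comes the map from release at port \(s_r\) along an absorbed suffix to the clean output, which has norm at most \(a\sqrt{w_j}\) by (iii). Each path therefore has norm at most \(a^2w_j\), and summing over paths weighted by \(|z|^r\) gives \(1+a^2w_j(\rho^m-1)\).

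\emph{Step 2: \(N_j\) and \(K_j\).} For \(N_j\), the \(r=0\) term is bounded by \(a\sqrt{w_j}\) via (iv). For \(r\ge1\) we use only the prefix factor from (i), with the rest of the path bounded by one. This gives
\[
 \norm{N_j(z)}\le a\sqrt{w_j}\bigl(1+\rho^m-1\bigr)=a\sqrt{w_j}\,\rho^m .
\]
For \(K_j\), the input to port \(s\) has degree at most \(m-1\) in the earlier slots. We decompose it by the first selected hit among the slots before \(s\). If there is no earlier hit, the term is the absorbed-prefix map of (i). Otherwise the path factors through (i) at that first hit, and the remaining factors have norm at most one. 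Each port input is then bounded by \(a\sqrt{w_j}\rho^{m-1}\), and the direct sum over at most \(m\) ports gives the factor \(\sqrt m\). The main subtlety is that the ports must be combined orthogonally rather than by the triangle inequality. The ports are private and mutually orthogonal, so the squared norms add, with each port bounded as above. This gives \(a\sqrt{mw_j}\rho^{m-1}\). (Condition (ii) is the \(R=0\) analogue, but the path bound already suffices here.)

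\emph{Step 3: the cell growth bound.} Because the clean and nonclean outputs lie in orthogonal subspaces,
\[
 \norm{T_j}^2\le\norm{C_j}^2+\norm{N_j}^2 .
\]
Set \(x=a^2w_j\rho^{2m}\). Since \(\rho\ge2\), we have \(\rho^m-1\le\rho^{2m}\), and so
\[
 \norm{T_j}^2\le(1+x)^2+x\le 1+3x+x^2 .
\]
This is at most \(e^{3x}\) for all \(x\ge0\), because \(e^{3x}\ge1+3x+\tfrac92x^2\). Taking square roots gives
\[
 \norm{T_j(z)}\le\exp\!\left(\tfrac32a^2w_j\rho^{2m}\right).
\]
All estimates hold uniformly with a reference register attached and for arbitrary joint states of the older registers, as Definition~\ref{def:cell} allows.

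I expect the main obstacle to be Step 1. There one has to justify that every path with at least one hit factors through both an absorbed prefix ending at its first selected port and an absorbed suffix starting at its last one. One also has to check that the chronological ordering makes these prefix and suffix maps exactly the ones controlled in (i) and (iii), including when the first and last hits coincide.
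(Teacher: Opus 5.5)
Your proof is correct and follows the paper's argument essentially step for step. It uses the expansion over selected-slot patterns with (i) at the first hit and (iii) at the last, (iv) and contractivity for the empty pattern, the binomial sum giving $\rho^m-1$, the orthogonal column sum over at most $m$ ports, and orthogonality of the clean and nonclean ranges to reach $1+3v+v^2\le e^{3v}$. One slip, in a parenthetical you never use: condition (ii) concerns the ordinary circuit, i.e.\ $z=1$, not $R=0$; it is the absorbed-prefix condition (i) that describes $z=0$.
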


\begin{proof}
Fix a nonempty subset of active slots in the cell expansion. Its first slot costs a factor \(a\sqrt{w_j}\) by (i). For a clean output, (iii) gives a second factor at the last slot. The maps between them have norm at most one. Thus each pattern contributes at most \(a^2w_j\) to the clean block, or \(a\sqrt{w_j}\) if its output is nonclean.

For the empty pattern, the clean block is a contraction and the nonclean block has norm at most \(a\sqrt{w_j}\) by condition (iv). There are at most \(\binom m\ell\) patterns with \(\ell\) selected slots. Summing their norms with weights \(R^\ell\) proves the first two inequalities of \eqref{eq:local-polynomial-bounds}. This sum includes all zero-hit nonclean paths.

To reach port \(r\), expand over the preceding \(r-1\) slots. If none is selected, the active projection at the destination is itself the first selected port. Condition (i) then contributes \(a\sqrt{w_j}\) in this case as well; all other factors are contractions. Summing the terms bounds the query-input polynomial by \(a\sqrt{w_j}\rho^{r-1}\). The direct sum contains at most \(m\) ports, which gives the third inequality.

The clean and nonclean blocks have orthogonal ranges. For \(v=a^2w_j\rho^{2m}\), their squared bounds consequently give
\begin{align*}
 \norm{T_j(z)}^2
 &\le\bigl[1+a^2w_j(\rho^m-1)\bigr]^2
       +a^2w_j\rho^{2m}\\
 &\le1+3v+v^2\le e^{3v}.
\end{align*}
Taking square roots proves \eqref{eq:cell-growth}. All estimates are operator-norm estimates with arbitrary old registers, so they apply to the intermediate states produced by earlier deformed cells.
\end{proof}

Now let \(\Gamma(z)\) be the full query history of the deformed program. First-hit expansion, or the cut-open graph relation with each active oracle multiplied by \(z\), gives
\begin{equation}
 \Gamma(z)P_{\rm in}
 =(I-zA)^{-1}BP_{\rm in}
 =\sum_{r=0}^{M-1}z^rA^rBP_{\rm in}.
 \label{eq:history-generating}
\end{equation}
The inverse is a finite polynomial because \(A^M=0\). In particular \(\Gamma(1)P_{\rm in}=\Gamma P_{\rm in}\). There is no assertion that \(\Gamma(z)\) represents a normalized state when \(|z|>1\).

The component ending at a port in cell \(j\) is
\[
 K_j(z)T_{j-1}(z)\cdots T_1(z)P_{\rm in}.
\]
Oracle-independent isometries between cells can be inserted into this product without changing the bound. Lemma~\ref{lem:local-polynomial} and the orthogonal direct sum over final ports imply
\begin{equation}
\begin{aligned}
 \norm{\Gamma(z)P_{\rm in}}
 &\le a\sqrt{m\Omega}\,(1+R)^{m-1}\\
 &\quad\times\exp\!\left(\tfrac32a^2\Omega(1+R)^{2m}\right).
 \end{aligned}
 \label{eq:history-circle-bound}
\end{equation}
All interference within a final port is already included in its polynomial. Thus this estimate also covers histories with the same final environmental record and histories carrying records made before any specified private transition.

\begin{proof}[Proof of Theorem~\ref{thm:causal-tail}]
For \(R\ge2\), absorb powers of two in \eqref{eq:history-circle-bound} into constants \(B\ge1\) and \(D\ge1\), depending only on \(a,m\), to obtain
\begin{equation}
 \sup_{|z|=R}\norm{\Gamma(z)P_{\rm in}}
 \le B\sqrt\Omega\,R^{m-1}e^{D\Omega R^{2m}}.
 \label{eq:history-simple-bound}
\end{equation}
Applying Cauchy's integral formula to the operator polynomial in \eqref{eq:history-generating} gives
\[
 \norm{A^rBP_{\rm in}}
 \le B\sqrt\Omega\,R^{m-1-r}e^{D\Omega R^{2m}}.
\]
Since \(A^K\Gamma P_{\rm in}=\sum_{r=K}^{M-1}A^rBP_{\rm in}\), summing the geometric bound yields
\begin{equation}
 \norm{A^K\Gamma P_{\rm in}}
 \le \frac{B\sqrt\Omega\,R^{m-1}}{1-R^{-1}}
      e^{D\Omega R^{2m}}R^{-K}.
 \label{eq:cauchy-tail}
\end{equation}
When \(\Omega=0\), every query-input map vanishes and the result is immediate. Otherwise set
\[
 K=mq,
 \qquad R=\left(\frac{q}{4D\Omega}\right)^{1/(2m)}.
\]
For \(q\ge4D\,2^{2m}\Omega\), this radius is at least two. Moreover
\(\sqrt\Omega R^{m-1}=\sqrt{q/(4D)}/R\le\sqrt q\). Equation \eqref{eq:cauchy-tail} becomes
\begin{equation}
 \norm{A^{mq}\Gamma P_{\rm in}}
 \le2B\sqrt q\,e^{q/4}
       \left(\frac{4D\Omega}{q}\right)^{q/2}.
 \label{eq:tail-before-constants}
\end{equation}
For integer \(q\ge1\), \(\sqrt q\le2^{q/2}\). Absorb this into the base and prefactor in \eqref{eq:tail-main}, and increase \(c\) once more to cover the lower bound on \(q\).
\end{proof}

The power \(q/2\) comes from squaring the nonclean amplitude, which is of order \(\sqrt{w_j}\); the clean amplitude is already of order \(w_j\). We use orthogonality between these two outputs inside each cell. No orthogonality of the internal query histories is required.

\subsection{Choosing the query count}

\begin{lemma}[Inverting the tail]
\label{lem:tail-inversion}
Fix \(b_0,b_1\ge1\) and \(c_0>0\). There is a constant \(C\), depending only on these parameters, such that for \(\Omega>0\), \(0<\delta<1/2\), and \(L=\log(b_0/\delta)\), the integer
\begin{equation}
 q=\left\lceil C\left(1+\Omega+
 \frac{L}{\log(e+L/\Omega)}\right)\right\rceil
 \label{eq:tail-inversion-q}
\end{equation}
satisfies \(q\ge c_0\max\{\Omega,1\}\) and
\begin{equation}
 b_0\left(\frac{b_1\Omega}{q}\right)^{q/2}\le\delta.
 \label{eq:tail-inversion-goal}
\end{equation}
\end{lemma}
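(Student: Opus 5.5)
The first claim is immediate. Since \(q\ge C(1+\Omega)\ge C\max\{\Omega,1\}\), taking \(C\ge c_0\) gives it. For the main inequality \eqref{eq:tail-inversion-goal} I will take logarithms and reduce to
\[
 \frac q2\log\frac{q}{b_1\Omega}\ \ge\ \log\frac{b_0}{\delta}=L .
\]
This suffices because \(\log b_0\ge0\) and \(\delta<1/2\), so \(L\ge\log 2>0\). Take \(C\ge 2eb_1\). Then \(q\ge Cb_1^{-1}\cdot b_1\Omega\ge e^2b_1\Omega\), so the ratio \(x:=q/(b_1\Omega)\) satisfies \(x\ge e^2\) and the logarithm is at least \(2\). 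The goal becomes \(q\log x\ge 2L\), and I will split into two regimes by comparing \(L\) with \(\Omega\).

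\emph{Regime \(L\le\Omega\).} Here \(q\ge C\Omega\ge CL\), and \(\log x\ge 2\), so \(q\log x\ge 2CL\ge 2L\) once \(C\ge1\).

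\emph{Regime \(L>\Omega\).} Set \(u=L/\Omega>1\) and \(\ell=\log(e+u)\ge1\). The lower bounds I will use are \(q\ge CL/\ell\) and \(x\ge q/(b_1\Omega)\ge (C/b_1)\,u/\ell\). The key elementary fact is
\[
 \log\frac{u}{\log(e+u)}\ \ge\ c_1\log(e+u)-c_2
\]
for absolute constants \(c_1,c_2>0\), for all \(u>1\). For large \(u\) this holds because \(\log\log(e+u)\le\frac12\log(e+u)+O(1)\), and for bounded \(u\) the constants absorb it. To avoid any problem with the sign of the left side for small \(u\), I will instead use
\[
 \log x\ \ge\ \log\frac{C}{b_1}+\log\frac{u}{\ell}\ \ge\ \log\frac{C}{b_1}-c_2+c_1\ell .
\]
Choosing \(C\) so large that \(\log(C/b_1)\ge c_2\) gives \(\log x\ge c_1\ell\). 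Then
\[
 q\log x\ \ge\ \frac{CL}{\ell}\cdot c_1\ell=c_1CL\ \ge\ 2L
\]
once \(C\ge 2/c_1\). The ceiling only increases \(q\), and increasing \(q\) at fixed \(\Omega\) increases \(q\log(q/(b_1\Omega))\) whenever \(x\ge1\), so rounding up is harmless.

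The main obstacle is the elementary inequality in the second regime, namely checking that \(\log(u/\log(e+u))\) is bounded below by a constant multiple of \(\log(e+u)\) up to an additive constant, uniformly for \(u>1\). I will verify it by noting that \(\log(e+u)\le\log(2u)+1\) for \(u\ge e\), and that \(\log y\le \frac12 y\) applied to \(y=\log(e+u)\) gives \(\log\log(e+u)\le\frac12\log(e+u)\). Hence \(\log u-\log\ell\ge \log(e+u)-\log 2-1-\frac12\log(e+u)\), so \(c_1=1/2\) works with an absolute \(c_2\), and the range \(1<u<e\) is covered by enlarging \(c_2\). The resulting \(C\) then depends only on \(b_0,b_1,c_0\) (in fact \(b_0\) enters only through \(L\)), as the statement requires.
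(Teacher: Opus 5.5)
Your proof is correct and follows the paper's argument. Both reduce to $q\log\bigl(q/(b_1\Omega)\bigr)\ge 2L$, split at $L/\Omega=1$, and use the estimate $\log\bigl(q/(b_1\Omega)\bigr)\gtrsim\log(e+L/\Omega)$ for large $C$; you verify this with explicit constants ($c_1=1/2$), where the paper argues by a limit and compactness (constant $1/4$).

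One arithmetic slip: $C\ge 2eb_1$ only gives $q/(b_1\Omega)\ge 2e$, not $e^2$. Take $C\ge e^2b_1$, or note that $\log x\ge1$ together with $C\ge2$ already handles the first regime.
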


\begin{proof}
It suffices to arrange \(q\log(q/(b_1\Omega))\ge2L\). Put \(x=L/\Omega\). If \(x\le1\), the bound \(q\ge C\Omega\) gives this inequality once \(C\log(C/b_1)\ge2\). If \(x>1\), then
\[
 \frac q\Omega\ge\frac{Cx}{\log(e+x)}.
\]
For sufficiently large \(C\), uniformly on \(x\ge1\),
\[
 \log\!\left(\frac{Cx}{b_1\log(e+x)}\right)
 \ge\tfrac14\log(e+x).
\]
To see the uniformity, the difference of the two sides tends to positive infinity as \(x\to\infty\), and increasing \(C\) handles the remaining compact interval. Thus \(q\log(q/(b_1\Omega))\ge CL/4\), which is at least \(2L\) for \(C\ge8\). A larger constant also enforces the stated lower bound on \(q\).
\end{proof}

\begin{corollary}[Causal query compression]
\label{thm:compiler}
A program of \((a,m)\)-causal cells can be approximated on \(P_{\rm in}\) to isometry error \(0<\eps<1/8\) using
\begin{equation}
 O\!\left(m\left[1+\Omega+
 \frac{\log(1/\eps)}{\log(e+\log(1/\eps)/\Omega)}\right]\right)
 \label{eq:compiler-bound}
\end{equation}
controlled master queries and their adjoints. The constants depend only on \(a,m\). If \(\Omega=0\), no oracle query is needed.
\end{corollary}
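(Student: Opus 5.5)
The plan is to combine Theorem~\ref{thm:causal-tail}, Lemma~\ref{lem:tail-inversion}, and Theorem~\ref{thm:tail-compiler}, with the reuse length taken to be a multiple of \(m\). The case \(\Omega=0\) is handled first. Every cell then has weight zero, so condition~(ii) makes every active query-input map vanish. Hence \(B P_{\rm in}=0\), so \(\Gamma P_{\rm in}=0\), and by \eqref{eq:graph} we get \(VP_{\rm in}=DP_{\rm in}\). Concretely, the program acts on \(P_{\rm in}\) exactly as the oracle-free circuit obtained by replacing each slot with its inactive projector. Since \(V\) is unitary, that restriction is already an isometry, and it is implementable with no queries: run the routing circuit with the private ports held at zero.

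For \(\Omega>0\), let \(C_1,c\) be the constants from Theorem~\ref{thm:causal-tail}, which depend only on \(a,m\). Apply Lemma~\ref{lem:tail-inversion} with \(b_0=b_1=\max\{C_1,1\}\), \(c_0=c\), and \(\delta=\eps/4<1/2\). This gives an integer
\[
 q=\left\lceil C\left(1+\Omega+\frac{L}{\log(e+L/\Omega)}\right)\right\rceil,
 \qquad L=\log(4b_0/\eps),
\]
with \(q\ge c\max\{\Omega,1\}\). The tail theorem therefore applies and yields \(\eta_{mq}=\norm{A^{mq}\Gamma P_{\rm in}}\le\eps/4\le1/8\). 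Now invoke Theorem~\ref{thm:tail-compiler} with \(K=mq\ge1\). It produces an isometry \(\widetilde V_K\) with \(\norm{(\widetilde V_K-V)P_{\rm in}}\le4\eta_K\le\eps\), using at most \(9mq\) controlled calls to \(S,S^\dagger\). Each such call costs one controlled master query or its adjoint, so the total is \(O(mq)\) queries.

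It remains to compare \(L/\log(e+L/\Omega)\) with \(\log(1/\eps)/\log(e+\log(1/\eps)/\Omega)\). Since \(\eps<1/8\), we have \(L=\log(1/\eps)+O(1)\le c'\log(1/\eps)\) with \(c'\) depending on \(b_0\). The function \(x\mapsto x/\log(e+x/\Omega)\) is increasing, and scaling \(x\) by a constant changes it by at most a constant factor, because \(\log(e+c'y)\ge\log(e+y)\) when \(c'\ge1\). Absorbing this factor and the additive constant from the ceiling gives \eqref{eq:compiler-bound}.

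This bookkeeping is the only mildly delicate step, namely checking that the \(O(1)\) shift in \(L\) and the shift \(\delta=\eps/4\) do not change the order of the bound uniformly in \(\Omega\). Monotonicity handles it. The remaining steps are direct substitutions into earlier results.
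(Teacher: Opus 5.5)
Your proposal is correct and matches the paper's proof. You invert the factorial tail of Theorem~\ref{thm:causal-tail} with Lemma~\ref{lem:tail-inversion} at target \(\eta_{mq}\le\eps/4\), apply Theorem~\ref{thm:tail-compiler} with \(K=mq\) for at most \(9mq\) calls, and use condition~(ii) to get \(\Gamma P_{\rm in}=0\), hence \(VP_{\rm in}=DP_{\rm in}\), when \(\Omega=0\); your comparison of \(L=\log(4b_0/\eps)\) with \(\log(1/\eps)\), using monotonicity of \(x\mapsto x/\log(e+x/\Omega)\), correctly fills in a step the paper leaves implicit.
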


\begin{proof}
For positive \(\Omega\), apply Lemma~\ref{lem:tail-inversion} to Theorem~\ref{thm:causal-tail} with target \(\eta_{mq}\le\eps/4\), then use Theorem~\ref{thm:tail-compiler} with \(K=mq\). The count is at most \(9mq\). If \(\Omega=0\), (ii) implies \(\Gamma P_{\rm in}=0\), hence \(DP_{\rm in}=VP_{\rm in}\) without any oracle calls.
\end{proof}

An isometry error \(\eta\) gives diamond error at most \(2\eta\). Indeed, pure-state trace distance gives this bound with a reference attached, and discarding the environment can only reduce it.

\section{A short Lindblad step from block encodings}
\label{sec:slice}

Consider a finite-dimensional Lindblad generator
\begin{equation}
 \cL(\rho)=-i[H,\rho]+\sum_{\mu=1}^r
 \left(L_\mu\rho L_\mu^\dagger-\tfrac12\{L_\mu^\dagger L_\mu,\rho\}\right).
 \label{eq:lindblad}
\end{equation}
In this section we are given exact blocks \(H/\alpha_H\) and \(L_\mu/\alpha_\mu\), with controlled queries and adjoints available. The jump label \(\mu\) may be selected coherently through SELECT. Ignore terms with zero normalization, and define
\begin{equation}
 \alpha_L^2=\sum_\mu\alpha_\mu^2,
 \qquad \Lambda=\alpha_H+\alpha_L^2.
 \label{eq:Lambda}
\end{equation}
We will construct a short-step isometry with \(O(h^2\Lambda^2)\) channel error and verify the weak-cell bounds with weight \(h\Lambda\) at its primitive query ports.

\subsection{The Euler column and its polar isometry}

Assume \(\alpha_L>0\) and prepare
\(\ket\chi=\alpha_L^{-1}\sum_\mu\alpha_\mu\ket\mu\) in the jump-label register.
Apply jump-SELECT once and project only the signal ancillas onto the
block-encoding subspace. Retaining the jump label gives the normalized
column operator
\begin{equation}
 L:\ket\psi\longmapsto\sum_\mu\ket\mu L_\mu\ket\psi,
 \qquad \frac{L}{\alpha_L}.
 \label{eq:stacked-L}
\end{equation}
We obtain $L^\dagger/\alpha_L$ from the adjoint at the same one-query cost. For $L^\dagger L/\alpha_L^2$, use an intermediate signal register to multiply the projected blocks; otherwise the unitaries simply cancel each other.

Set
\begin{equation}
 K=-iH-\tfrac12L^\dagger L,
 \qquad \mathcal A_h=\begin{pmatrix}I+hK\\ \sqrt h\,L\end{pmatrix}.
 \label{eq:Ah}
\end{equation}
Since \(K+K^\dagger=-L^\dagger L\), the terms linear in the step size cancel:
\begin{equation}
 \mathcal A_h^\dagger\mathcal A_h=I+h^2K^\dagger K.
 \label{eq:Ah-isometry}
\end{equation}
The polar factor is consequently
\begin{equation}
 V_h=\mathcal A_h(I+h^2K^\dagger K)^{-1/2}
 \label{eq:polar}
\end{equation}
and is isometric on the whole input space. Also,
\(\norm K\le\alpha_H+\alpha_L^2/2\le\Lambda\) and
\begin{equation}
 \diamondnorm{\cL}\le2\norm H+2\sum_\mu\norm{L_\mu}^2\le2\Lambda.
 \label{eq:L-diamond-bound}
\end{equation}

Trace out the row label of \(\mathcal A_h\) and call the resulting completely positive map \(\mathfrak A_h\). Expanding its two row contributions gives
\begin{equation}
\begin{gathered}
 \mathfrak A_h=\operatorname{Id}+h\cL+h^2\mathcal K,\\
 \mathcal K(\rho)=K\rho K^\dagger,\qquad
 \diamondnorm{\mathcal K}\le\Lambda^2.
 \end{gathered}
 \label{eq:A-channel-exact}
\end{equation}
Because \(e^{u\cL}\) is completely positive and trace preserving (CPTP), its diamond norm is one. The integral remainder therefore satisfies
\begin{align}
 e^{h\cL}-\operatorname{Id}-h\cL
 &=\int_0^h(h-u)\cL^2e^{u\cL}\,du,\notag\\
 \diamondnorm{e^{h\cL}-\operatorname{Id}-h\cL}
 &\le\tfrac12h^2\diamondnorm{\cL}^{\,2}\le2h^2\Lambda^2.
 \label{eq:exp-remainder}
\end{align}
In particular \(\diamondnorm{\mathfrak A_h-e^{h\cL}}\le3h^2\Lambda^2\).

For \(h\Lambda\le1\), the singular values of \(\mathcal A_h\) obey
\begin{equation}
 1\le\sigma\le\sqrt2,
 \qquad 0\le\sigma-1\le\tfrac12h^2\Lambda^2.
 \label{eq:Ah-singular-range}
\end{equation}
It follows that \(\norm{V_h-\mathcal A_h}\le h^2\Lambda^2/2\). If \(X,Y\) are two Stinespring operators with a common environment, their induced completely positive maps satisfy
\begin{equation}
 \diamondnorm{\mathcal X-\mathcal Y}
 \le(\norm X+\norm Y)\norm{X-Y}.
 \label{eq:stinespring-cp-bound}
\end{equation}
Indeed, expand their difference as
\((X-Y)\rho X^\dagger+Y\rho(X-Y)^\dagger\), apply the trace-norm product inequality with a reference system, and trace out the environment. Since \(\norm{\mathcal A_h}\le\sqrt2\), the channel \(\mathcal V_h\) induced by \(V_h\) satisfies
\begin{equation}
 \diamondnorm{\mathcal V_h-e^{h\cL}}
 \le\left(3+\frac{1+\sqrt2}{2}\right)h^2\Lambda^2
 \le5h^2\Lambda^2.
 \label{eq:polar-channel-error}
\end{equation}

\subsection{Encoding the column}

Let \(U_{L^\dagger L}\) denote the product block encoding above. A linear combination with coefficients \(1\), \(h\alpha_H\), and \(h\alpha_L^2/2\) encodes \(B_h=I+hK\). Explicitly, prepare
\begin{equation}
\begin{gathered}
 \ket{g_B}=\frac{\ket I+\sqrt{h\alpha_H}\ket H
                  +\sqrt{h\alpha_L^2/2}\ket D}{\sqrt{s_B}},\\
 s_B=1+h\alpha_H+\tfrac12h\alpha_L^2.
 \end{gathered}
 \label{eq:prepare-B}
\end{equation}
select \(I,-iU_H,-U_{L^\dagger L}\) on the three roles, and undo the role preparation. Projection gives \(B_h/s_B\). In the dissipative role, the two jump-SELECT calls have opposite directions. Identity padding fills these slots in the other roles.

Next prepare a stack qubit in
\begin{equation}
\begin{gathered}
 \ket{g_A}=\frac{s_B\ket0+\sqrt h\,\alpha_L\ket1}{\sqrt c},\\
 c=s_B^2+h\alpha_L^2,\qquad s=c^{-1/2}.
 \end{gathered}
 \label{eq:prepare-stack}
\end{equation}
On its zero branch apply the encoding of \(B_h/s_B\), and on its one branch apply the rectangular encoding of \(L/\alpha_L\). The stack qubit is retained in the output. With input and output signal projectors \(\Pi_{\rm in}\) and \(\Pi_{\rm out}\), this defines a unitary \(W_h\) such that
\begin{equation}
 \Pi_{\rm out}W_h\Pi_{\rm in}=s\mathcal A_h.
 \label{eq:explicit-Wh}
\end{equation}
All signal projectors refer to known ancilla states or subspaces.

Complete each coefficient preparation to a unitary by rotating in the plane spanned by the reference and prepared states, through the minimal angle, and fixing the orthogonal complement. On padding branches, fix all ancillary registers. This specifies the whole unitary, so the comparison with an oracle-bypassing circuit also applies when we run the preparation in reverse.

The normalization satisfies
\begin{equation}
\begin{gathered}
 c=1+2h\Lambda+h^2(\alpha_H+\alpha_L^2/2)^2,\\
 1+h^2\norm K^2\le c\le(1+h\Lambda)^2.
 \end{gathered}
 \label{eq:c-bounds}
\end{equation}
Hence \(s\norm{\mathcal A_h}\le1\), and \(s\in[1/2,1]\) for \(h\Lambda\le1\).

\subsection{An explicit three-call correction}

The polar correction can be implemented by a short phase-modulated amplitude-amplification sequence. It is a cubic instance of projected-unitary singular value transformation~\cite{GSLW19}, but its phases and full output error can be derived directly.

\begin{lemma}[Cubic polar correction]
\label{lem:cubic-polar}
Suppose \(W\) is a unitary with projected block \(\Pi_LW\Pi_R=sA\), where \(s\in[1/2,1]\), \(s\norm A\le1\), and every singular value of \(A\) lies in \([1,\sqrt2]\). Let \(A=V|A|\), and define
\begin{equation}
\begin{gathered}
 d_s=\sqrt{4s^2-1},\qquad g_s=-\frac{1+id_s}{2s},\\
 u_s=1-\frac1{2s^2}+i\frac{d_s}{2s^2}.
 \end{gathered}
 \label{eq:explicit-phases}
\end{equation}
Both \(u_s\) and \(g_s\) have modulus one. For \(\Phi_P(u)=I+(u-1)P\), the unitary
\begin{equation}
 U_s=g_sW\Phi_{\Pi_R}(u_s)W^\dagger\Phi_{\Pi_L}(u_s)W
 \label{eq:explicit-cubic-sequence}
\end{equation}
uses three signal calls. Its restriction \(J=U_s\Pi_R\) satisfies
\begin{equation}
 \norm{J-V}^2\le\max_\sigma(\sigma-1)^2(\sigma+2),
 \label{eq:cubic-full-column}
\end{equation}
where the target \(V\) is embedded in the output signal subspace.
\end{lemma}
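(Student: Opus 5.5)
The plan is to reduce everything to the two-dimensional invariant subspaces given by the singular value decomposition of the projected block, as in the proof of Lemma~\ref{lem:cubic-restoration}, and then to do a scalar computation for the cubic polynomial that the phase sequence produces.

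\emph{Step 1: decomposition.} Write \(A=\sum_k\sigma_k\ket{l_k}\bra{r_k}\), so that \(V=\sum_k\ket{l_k}\bra{r_k}\), and set \(x_k=s\sigma_k\in[1/2,1]\). By Jordan's lemma (the cosine--sine decomposition of \(W\) relative to \(\Pi_R,\Pi_L\)), for each \(k\) with \(x_k<1\) the vector \(W\ket{r_k}\) equals \(x_k\ket{l_k}+y_k\ket{l_k^\perp}\), with \(y_k=\sqrt{1-x_k^2}\) and \(\ket{l_k^\perp}\perp\operatorname{ran}\Pi_L\). Likewise \(W^\dagger\ket{l_k}=x_k\ket{r_k}+y_k\ket{r_k^\perp}\) with \(\ket{r_k^\perp}\perp\operatorname{ran}\Pi_R\). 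The span of \(\ket{r_k},\ket{r_k^\perp}\) is mapped by \(W\) onto the span of \(\ket{l_k},\ket{l_k^\perp}\), and these planes are mutually orthogonal for different \(k\). When \(x_k=1\) the plane degenerates to a line and the computation below still applies.

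\emph{Step 2: the column on one plane.} On each such pair the reflections act as the diagonal matrices \(\operatorname{diag}(u_s,1)\). Consequently \(U_s\ket{r_k}\) lies in the span of \(\ket{l_k}\) and \(\ket{l_k^\perp}\). I will compute its signal component
\[
 p(x)=g_s\,x\bigl(u_s^2x^2+u_s(1-x^2)-(1-x^2)\bigr)
\]
up to the exact arrangement of terms, which I will fix by multiplying out the \(2\times2\) matrices. Its leak component is \(q(x)\ket{l_k^\perp}\), and \(|p|^2+|q|^2=1\) by unitarity.

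\emph{Step 3: the key identity.} The point of the phases \eqref{eq:explicit-phases} is that \(g_sp(s\sigma)\) should be \emph{real} along the whole family, with value exactly one at \(\sigma=1\). The relation \(u_s\bar u_s=1\) follows from \(|1-1/(2s^2)|^2+d_s^2/(4s^4)=1\), and \(|g_s|=1\) is immediate. Using \(d_s^2=4s^2-1\), I expect the phase-corrected amplitude to collapse to
\[
 f(\sigma)=\tfrac12(3\sigma-\sigma^3).
\]
This is the same cubic as in Lemma~\ref{lem:cubic-restoration}, and it satisfies \(2-2f(\sigma)=(\sigma-1)^2(\sigma+2)\). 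Verifying that the \(s\)-dependence cancels exactly is the main computational obstacle. I would first check it at \(s=1/2\), where \(u=-1\) and \(g=-1\) and the sequence is ordinary amplitude amplification giving \(3x-4x^3\) with \(x=\sigma/2\). I would then do the general case symbolically, writing \(u_s=-\bar g_s^{\,2}\) (to be confirmed) so that \(g_s\) absorbs the phase.

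\emph{Step 4: assembly.} Since \(J\ket{r_k}\) and \(V\ket{r_k}=\ket{l_k}\) are unit vectors, \(\norm{(J-V)\ket{r_k}}^2=2-2\operatorname{Re}\langle l_k|J|r_k\rangle=2-2f(\sigma_k)\). For a general input \(\sum_k c_k\ket{r_k}\), the vectors \((J-V)\ket{r_k}\) lie in mutually orthogonal planes. Hence \((J-V)^\dagger(J-V)\) is diagonal in the \(\ket{r_k}\) basis with entries \((\sigma_k-1)^2(\sigma_k+2)\), and taking the operator norm gives \eqref{eq:cubic-full-column}. Finally, the three occurrences of \(W,W^\dagger\) in \eqref{eq:explicit-cubic-sequence} give the three signal calls.
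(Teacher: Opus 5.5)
The plane decomposition and your Step~4 assembly work, but the identity you single out as the key step in Step~3 is false. The $s$-dependence does not cancel exactly, so the verification you plan there would fail.

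Multiplying out one plane correctly gives the signal amplitude
\[
 g_s\,x\bigl[u_s^2x^2+(2u_s-1)(1-x^2)\bigr]=g_s\bigl[(2u_s-1)x+(u_s-1)^2x^3\bigr].
\]
Your displayed $p(x)$ is missing a $u_s(1-x^2)$ term, and you then apply $g_s$ a second time. Use $x=s\sigma$ and the scalar identities $g_s(2u_s-1)=(3-id_s)/(2s)$ and $g_s(u_s-1)^2=(-1+id_s)/(2s^3)$. The amplitude is then
\[
 F_s(\sigma)=\tfrac12(3\sigma-\sigma^3)+\tfrac{i}{2}\,d_s(\sigma^3-\sigma),
\]
which is not real for $s\in(1/2,1)$ and $1<\sigma\le\min\{\sqrt2,1/s\}$.

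No unimodular phases could do what you expect. Reality for all $\sigma$ in an interval forces both $g(2u-1)$ and $g(u-1)^2$ to be real. Then $(2u-1)/(u-1)^2\in\mathbb R$, which forces $u=\pm1$, and with $u=\pm1$ one gets $|F(1)|<1$ for $s\in(1/2,1)$. Your relation $u_s=-\bar g_s^{\,2}$ is correct, but it does not remove the $s$-dependence; that survives in the imaginary part.

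The repair is small, because Step~4 only uses $\operatorname{Re}\langle l_k|J|r_k\rangle$, and $\operatorname{Re}F_s(\sigma)=\tfrac12(3\sigma-\sigma^3)$ for every $s$. So $\|(J-V)r_k\|^2=(\sigma_k-1)^2(\sigma_k+2)$ still holds, your assembly goes through, and the leak amplitude is never needed. This is exactly the paper's observation (``its real part is independent of $s$'').

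The paper does not use Jordan's lemma. It multiplies out the two phase factors at the operator level, obtaining $\Pi_LU_s\Pi_R=g_s[(2u_s-1)X+(u_s-1)^2XX^\dagger X]$ with $X=sV|A|$. It then uses $J^\dagger J=V^\dagger V=I$ to get $(J-V)^\dagger(J-V)=2I-V^\dagger J-J^\dagger V=2I-3|A|+|A|^3$, where the imaginary parts cancel in the Hermitian combination. That avoids your bookkeeping of degenerate planes ($x_k=1$), repeated singular values, and orthogonality between planes. Your version gives a more explicit picture of where each column is sent.
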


\begin{proof}
Write \(X=\Pi_LW\Pi_R\). Multiplying the two phase factors gives
\begin{equation}
 \Pi_LU_s\Pi_R
 =g_s\bigl[(2u_s-1)X+(u_s-1)^2XX^\dagger X\bigr].
 \label{eq:cubic-block-multiplication}
\end{equation}
The scalar identities
\[
 g_s(2u_s-1)=\frac{3-id_s}{2s},
 \qquad g_s(u_s-1)^2=\frac{-1+id_s}{2s^3}
\]
show that the projected output on a right singular vector \(v_\sigma\), with polar left vector \(Vv_\sigma\), has amplitude
\begin{equation}
 F_s(\sigma)=\frac{3\sigma-\sigma^3}{2}
       +i\frac{d_s}{2}(\sigma^3-\sigma).
 \label{eq:cubic-singular-amplitude}
\end{equation}
In particular its real part is independent of \(s\). Since \(J\) and \(V\) are isometries on the input space,
\begin{align}
 (J-V)^\dagger(J-V)
 &=2I-V^\dagger J-J^\dagger V\notag\\
 &=2I-3|A|+|A|^3\notag\\
 &= (|A|-I)^2(|A|+2I).
 \label{eq:cubic-error-operator}
\end{align}
Taking the largest eigenvalue proves \eqref{eq:cubic-full-column}. This also controls all output components outside the signal subspace.
\end{proof}

At \(s=1/2\), the phases are \(u_s=g_s=-1\), recovering the usual cubic amplification sequence. At every allowed \(s\), \(F_s(1)=1\). The phases depend only on the known normalization, and their implementation introduces no oracle queries.

Applying the lemma to \(W_h\) and \(\mathcal A_h\), equation \eqref{eq:Ah-singular-range} gives
\begin{equation}
 \norm{J_h-V_h}\le h^2\Lambda^2,
 \qquad
 \diamondnorm{\mathcal J_h-e^{h\cL}}\le7h^2\Lambda^2.
 \label{eq:local-error}
\end{equation}
The channel estimate follows from \eqref{eq:polar-channel-error} and the factor-two conversion between isometry and channel error.

\subsection{Bounds at the primitive oracle ports}

The channel estimate is not sufficient for compression: the weak-cell definition concerns individual query ports. In particular we must include the ports in \(W_h^\dagger\). We therefore treat the three signal calls as one cell, supplying a fresh bath and retaining the block-encoding workspace at the output.

\begin{lemma}[Weakness of the short-step circuit]
\label{lem:slice-skeleton}
The circuit \(J_h\), with the preparation extensions specified above, is an \((a,m)\)-causal cell of weight \(h\Lambda\), for universal constants \(a\) and \(m\).
\end{lemma}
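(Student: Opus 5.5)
The check goes condition by condition through Definition~\ref{def:cell}, treating $J_h=U_s\Pi_R$ with $U_s=g_sW_h\Phi_{\Pi_R}(u_s)W_h^\dagger\Phi_{\Pi_L}(u_s)W_h$ as one cell. Its fresh workspace consists of the role register, the stack qubit, the jump label and the signal ancillas. The clean output is the all-zero reference state of the retained work registers, selected by an oracle-independent basis change, and it is convenient to take this to be the direction $V_h$ inside the signal subspace. Each $W_h$ or $W_h^\dagger$ contains $O(1)$ primitive slots: one in the $H$ role, two jump-SELECT calls in the dissipative role, and one on the stack branch. Hence $m=O(1)$.

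The key observation is that every primitive slot sits on a branch whose preparation amplitude is $O(\sqrt{h\Lambda})$. In $\ket{g_B}$ these amplitudes are $\sqrt{h\alpha_H/s_B}$ and $\sqrt{h\alpha_L^2/(2s_B)}$. In $\ket{g_A}$ the stack amplitude is $\sqrt h\,\alpha_L/\sqrt c$. The identity role, which has amplitude $1/\sqrt{s_B}$, uses only padding.

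\textbf{Condition (ii).} The direct sum of query inputs consists of the state after the preparation, projected onto the active role or stack branch. Because every earlier gate is a contraction, each of the $O(1)$ ports has norm at most $O(\sqrt{h\Lambda})$. This is where the minimal-angle completion of the preparation unitaries matters: when $W_h^\dagger$ runs the preparation in reverse on an arbitrary state, the inverse rotation by an angle $\theta=O(\sqrt{h\Lambda})$ moves a reference-role state into an active role with amplitude only $\sin\theta$.

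\textbf{Conditions (i) and (iii).} With all earlier slots absorbed (replaced by inactive projectors), the circuit up to a chosen port is an oracle-free, oracle-bypassing circuit. On a clean input it reaches an active role only through one preparation rotation, so (i) holds with factor $O(\sqrt{h\Lambda})$. Condition (iii) is the adjoint statement: from a released port, the absorbed suffix must pass through a reverse preparation or a later preparation rotation, and each costs $\sin\theta$ to return to the reference state of the role or stack register. I would make this precise by writing every preparation as $I+O(\sqrt{h\Lambda})$ off-diagonal plus $O(h\Lambda)$ diagonal corrections relative to the reference/prepared plane. The phase gates $\Phi(u_s)$ are diagonal in the signal projectors, so they never mix roles.

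\textbf{Condition (iv), the main obstacle.} With every slot absorbed, the cell becomes an oracle-bypassing circuit $U_s^{(0)}$. We need its image of a clean input to lie within $O(\sqrt{h\Lambda})$ of the clean direction. I would compute this directly. With all oracles absent, each $W_h^{(0)}$ equals a unitary $W^{(0)}$ whose projected block is $s\,I'$ for some contraction $I'$ that agrees with the embedding up to $O(h\Lambda)$, and whose preparations are $O(\sqrt{h\Lambda})$-close to the identity. Lemma~\ref{lem:cubic-polar}'s scalar identity $F_s(1)=1$, together with $s=1-O(h\Lambda)$ from \eqref{eq:c-bounds}, then shows that the absorbed circuit maps the clean input to the clean output up to amplitude deviation $O(h\Lambda)$. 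The nonclean component therefore has norm at most $\sqrt{1-(1-O(h\Lambda))^2}=O(\sqrt{h\Lambda})$.

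Where this route could fail is the signal-subspace bookkeeping. The absorbed column need not be exactly $I$ on the role-reference branch, because the stack branch with the jump absent contributes an $O(\sqrt h\alpha_L)$ amplitude into the stack-one sector. I would handle this by the same $\sin\theta$ argument, and if necessary by placing the stack qubit's one-state in $E_j^\bullet$, since that case is already charged only $O(\sqrt{h\Lambda})$.

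Finally, all the bounds above are operator-norm statements on the system tensored with arbitrary older registers and references. They are therefore stable under attaching a reference, which completes the verification with universal $a$ and $m$.
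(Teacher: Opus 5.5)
There is a genuine gap. Your intuition for (i)--(iii) is the right one: every active port is reached only through a preparation rotation of angle $O(\sqrt{h\Lambda})$, and this is where the minimal-angle completion matters. But the argument does not reach the slots inside $W_h^\dagger$ and the second $W_h$. The state entering those slots is not a reference state. After the first $W_h$, the role register carries an $O(\sqrt{h\Lambda})$ nonreference component, and the next $R_B$ can rotate it into an active role with no further suppression. The observation that ``every earlier gate is a contraction'' gives no smallness at all.

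What you need is that the whole trajectory stays $O(\sqrt{h\Lambda})$-close to one that never meets an active port. The paper gets this as follows:
\begin{itemize}
\item It compares each inter-slot unitary $F_r(h)$ with $F_r^0$, in which only the $g_A,g_B$ preparations and their inverses are replaced by zero-coupling versions. The phases $u_s,g_s$, the signal projectors and the slot order are kept fixed. This gives $\norm{F_r(h)-F_r^0}\le c\sqrt{h\Lambda}$.
\item The comparison circuit has clean subspaces $\mathcal K_r$ with $P_r\mathcal K_r=0$, $Q_r\mathcal K_r=\mathcal K_r$ and $F_r^0\mathcal K_r=\mathcal K_{r+1}$.
\item Telescoping over the $O(1)$ factors then gives (i) and (ii).
\item For (iii), a released active state lies in $\mathcal K_r^\perp$. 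The unitary $F_r^0$ carries $\mathcal K_r^\perp$ to $\mathcal K_{r+1}^\perp$, and so does the self-adjoint $Q_r$, which is the identity on $\mathcal K_r$. Hence the comparison suffix has zero clean output, and telescoping gives $O(\sqrt{h\Lambda})$.
\end{itemize}

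The same comparison settles (iv) at once. With every slot absorbed, the zero-coupling circuit sends the clean input into $\mathcal K_{\rm out}$, up to a known phase. Telescoping then bounds the actual nonclean output by $O(\sqrt{h\Lambda})$. No $O(h\Lambda)$ estimate on the clean amplitude is needed, so what you call the main obstacle is not one.

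Your proposed route to (iv) also fails as written, for three reasons:
\begin{itemize}
\item An absorbed slot is the projector $Q_t$, not an oracle bypass $U\mapsto I$. The all-absorbed circuit is therefore a non-unitary contraction, not a unitary $W^{(0)}$. The block identity behind Lemma~\ref{lem:cubic-polar} uses $WW^\dagger=I$, so you cannot invoke $F_s(1)=1$ for this circuit.
\item The clean output must be a fixed, oracle-independent one-dimensional subspace of the fresh workspace, namely the joint reference state of stack, role, jump label and signal ancillas. ``The direction $V_h$'' is an oracle-dependent isometry, not such a subspace.
\item Your worry about a residual stack-one amplitude comes from the bypass picture. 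In the comparison circuit the zero-coupling $g_A$ preparation keeps the stack at zero throughout.
\end{itemize}
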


\begin{proof}
Flatten \eqref{eq:explicit-cubic-sequence} into its \(m=O(1)\) primitive oracle slots. Compare its inter-slot unitaries \(F_r(h)\) with unitaries \(F_r^0\) obtained by replacing the preparations of \(g_A\) and \(g_B\), and their inverses, by their zero-coupling versions. Keep the actual phases \(u_s,g_s\), the signal projectors, and the slot order fixed during this comparison.

The probabilities outside the reference components are
\begin{equation}
 1-\frac1{s_B}=\frac{h\alpha_H+h\alpha_L^2/2}{s_B}\le h\Lambda,
 \qquad \frac{h\alpha_L^2}{c}\le h\Lambda.
 \label{eq:off-probabilities}
\end{equation}
For a minimal rotation preparing a state with reference overlap \(\cos\theta\ge0\), the distance from the identity is \(2\sin(\theta/2)\), at most twice the norm of the off-reference component. Each inter-slot unitary contains only a constant number of such rotations. Therefore
\begin{equation}
 \max_r\norm{F_r(h)-F_r^0}\le c\sqrt{h\Lambda}
 \label{eq:free-map-hybrid}
\end{equation}
for a universal constant \(c\).

The stack qubit is on the no-jump branch of the comparison circuit, while each role register selects identity and each block ancilla stays in its reference state. These labels are preserved by the signal phase rotations. Hence the oracle slots act trivially on the initialized trajectory. The comparison circuit may acquire a known phase, but it ends in the same clean ancillary subspace. No removal of this phase is needed.

Let \(\mathcal K_r\) be its clean trajectory subspace immediately before slot \(r\), with the system and old registers unrestricted. Then
\begin{equation}
 P_r\mathcal K_r=0,
 \qquad Q_r\mathcal K_r=\mathcal K_r,
 \qquad F_r^0\mathcal K_r=\mathcal K_{r+1}.
 \label{eq:skeleton-invariant}
\end{equation}
The final subspace \(\mathcal K_{\rm out}\) is the clean boundary subspace. This statement also holds when the phases are controlled by an old history register: the system and history are unrestricted in \(\mathcal K_r\), and the reference labels remain fixed.

For products of contractions, the telescoping identity gives
\begin{equation}
 \norm{G_b\cdots G_a-G_b^0\cdots G_a^0}
 \le\sum_{r=a}^b\norm{G_r-G_r^0}.
 \label{eq:finite-hybrid}
\end{equation}
There are only a constant number of factors. Along an absorbed prefix the comparison trajectory is annihilated by the final active projection, so \eqref{eq:free-map-hybrid} and \eqref{eq:finite-hybrid} give condition (i). The same comparison with the ordinary, unabsorbed prefix gives an \(O(\sqrt{h\Lambda})\) bound at each query-input port; taking their finite direct sum gives condition (ii).

For (iii), begin with an active state in \(\mathcal K_r^\perp\). Unitarity carries orthogonal complements of the clean subspaces to one another in the comparison circuit. Each inactive projector also preserves the appropriate complement, since it is self-adjoint and is the identity on the clean subspace. No clean component can therefore reach the output of the comparison suffix. Its \(O(\sqrt{h\Lambda})\) distance from the actual suffix proves (iii). For (iv) we absorb all slots. The comparison ends in \(\mathcal K_{\rm out}\) and has no nonclean component; \eqref{eq:finite-hybrid} then bounds the actual nonclean output by \(O(\sqrt{h\Lambda})\).

The three signal calls constitute one cell. In particular, cancellation of a component of \(W_h\) by \(W_h^\dagger\) is already covered by the cell estimates. They are operator-norm estimates, valid with a reference system. Keep the fresh workspace after the cell and let every later cell act trivially on it.
\end{proof}

\begin{proposition}[Lindblad cell]
\label{prop:lindblad-cell}
For \(h\Lambda\le1\), the isometry \(J_h\) is implementable with a constant number of controlled Hamiltonian and jump-SELECT queries and their adjoints. It is an \((a,m)\)-causal cell of weight \(h\Lambda\), and its reduced channel has diamond error at most \(7h^2\Lambda^2\) from \(e^{h\cL}\).
\end{proposition}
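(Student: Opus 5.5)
The proposition is an assembly of the three preceding subsections, and I would prove it by checking its three claims in turn against results already established.

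\emph{Query count.} The circuit \(J_h=U_s\Pi_R\) is the sequence \eqref{eq:explicit-cubic-sequence} applied to \(W_h\). Each occurrence of \(W_h\) or \(W_h^\dagger\) contains:
\begin{itemize}
\item one call to \(U_H\), on the Hamiltonian role of \(B_h/s_B\);
\item two jump-SELECT calls of opposite direction, inside \(U_{L^\dagger L}\) on the dissipative role;
\item one further jump-SELECT call, for the stack branch encoding \(L/\alpha_L\).
\end{itemize}
All of these are made controlled by the role register and the stack qubit. Identity padding aligns the branches, so \(W_h\) uses a fixed constant number of controlled calls. Since \(U_s\) contains three occurrences of \(W_h\) or \(W_h^\dagger\), the total is \(O(1)\) controlled queries and adjoints. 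The phases \(u_s,g_s\) depend only on the known scalar \(s=c^{-1/2}\), so the reflections \(\Phi_{\Pi_R}(u_s)\) and \(\Phi_{\Pi_L}(u_s)\) are oracle free.

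\emph{Hypotheses of Lemma~\ref{lem:cubic-polar}.} For \(h\Lambda\le1\):
\begin{itemize}
\item \eqref{eq:explicit-Wh} gives the projected block \(s\mathcal A_h\);
\item \eqref{eq:c-bounds} gives \(s\in[1/2,1]\) and \(s\norm{\mathcal A_h}\le1\);
\item \eqref{eq:Ah-singular-range} places every singular value of \(\mathcal A_h\) in \([1,\sqrt2]\), with polar factor \(V_h\).
\end{itemize}
Lemma~\ref{lem:cubic-polar} then bounds \(\norm{J_h-V_h}^2\) by \(\max(\sigma-1)^2(\sigma+2)\). Using \(\sigma-1\le h^2\Lambda^2/2\) and \(\sigma+2\le 2+\sqrt2<4\), this gives \(\norm{J_h-V_h}\le h^2\Lambda^2\). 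So \(J_h\) is an isometry into the output signal subspace together with its complement, and the retained workspace forms the environment.

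\emph{Channel error.} The isometry-to-channel factor two, combined with \eqref{eq:polar-channel-error}, gives \(2h^2\Lambda^2+5h^2\Lambda^2=7h^2\Lambda^2\), which is \eqref{eq:local-error}. One point must be confirmed here: \eqref{eq:polar-channel-error} traces out the row label of \(\mathcal A_h\), while \(J_h\) also retains block ancillas outside the signal subspace. Both are discarded in the reduced channel, so the comparison is legitimate. The degenerate case \(\alpha_L=0\) is handled by dropping the stack qubit. The same estimates then hold with \(L=0\).

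\emph{Causality.} This is Lemma~\ref{lem:slice-skeleton}, which identifies \(J_h\) as an \((a,m)\)-causal cell of weight \(h\Lambda\) with universal \(a,m\).

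The main obstacle is making sure the cell of Lemma~\ref{lem:slice-skeleton} is literally the same circuit whose error was bounded. Three choices must agree in both places:
\begin{itemize}
\item the minimal-rotation completions of the preparations;
\item the fixed padding;
\item the clean output subspace, taken as the reference labels of the comparison circuit.
\end{itemize}
Once these are fixed consistently, the proposition follows by citing the four results above.
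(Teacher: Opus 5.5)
Your proposal is correct and follows the paper's own route. The paper states the proposition without a separate proof, as a collection of three earlier results: the construction of \(W_h\) and the three-call sequence (constant query count, oracle-free phases); Lemma~\ref{lem:cubic-polar} with \eqref{eq:c-bounds} and \eqref{eq:Ah-singular-range}, which gives \(\norm{J_h-V_h}\le h^2\Lambda^2\) and hence the \(2+5=7\) channel bound in \eqref{eq:local-error}; and Lemma~\ref{lem:slice-skeleton} for the causal-cell property, whose phrase ``with the preparation extensions specified above'' is exactly the circuit-identity check you flag.
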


For a time-dependent generator, the mesh must be fine enough to control the error from freezing the generator as well as the accumulated channel error. Once that mesh is fixed, we apply the compiler with the sum of the cell weights.

\section{Time-dependent Lindblad evolution}
\label{sec:time-dependent}

Let
\begin{equation}
\begin{aligned}
 \cL_t(\rho)={}&-i[H(t),\rho]\\
 &+\sum_{\mu=1}^r\Bigl(L_\mu(t)\rho L_\mu(t)^\dagger\\
 &\qquad-\tfrac12\{L_\mu(t)^\dagger L_\mu(t),\rho\}\Bigr)
 \end{aligned}
 \label{eq:time-lindblad}
\end{equation}
act on a finite-dimensional system for \(t\in[0,T]\). Choose known normalizations valid throughout this interval and write
\begin{equation}
 \Lambda=\alpha_H+\sum_\mu\alpha_\mu^2,
 \qquad \tau=T\Lambda.
 \label{eq:tau}
\end{equation}
For a grid \(t_j=jT/J\), the access model supplies exact encodings
\begin{equation}
\begin{aligned}
 (\bra0\otimes I)U_H(t_j)(\ket0\otimes I)&=\frac{H(t_j)}{\alpha_H},\\
 (\bra0\otimes I)U_\mu(t_j)(\ket0\otimes I)&=\frac{L_\mu(t_j)}{\alpha_\mu},
 \end{aligned}
 \label{eq:time-blocks}
\end{equation}
with common ancillary spaces after padding. We assume coherent selection over the labels:
\begin{equation}
 \sum_j\ketbra j j\otimes U_H(t_j),
 \qquad
 \sum_{j,\mu}\ketbra{j,\mu}{j,\mu}\otimes U_\mu(t_j),
 \label{eq:time-select}
\end{equation}
and controlled access to these unitaries and their adjoints, for the grid chosen by the algorithm.

There are a constant number of oracle roles within a slice: Hamiltonian, jump-SELECT, the factors in the encoding of \(L^\dagger L\), and identity padding. With \(U^{(+)}=U\), \(U^{(-)}=U^\dagger\), they can be collected into
\begin{equation}
 O_J^{\rm mast}
 =\sum_{j,\varrho,d}\ketbra{j,\varrho,d}{j,\varrho,d}
       \otimes U_\varrho(t_j)^{(d)}.
 \label{eq:master-oracle}
\end{equation}
The private port specifies which time, role, direction, and padding registers a call to \(O_J^{\rm mast}\) needs. Reversibly compute those labels, make the master call, and undo the computation. This implements the private oracle of \eqref{eq:cut-open-transducer}; jump-SELECT handles the jump label. It uses one master query, hence only a constant number of the supplied coherent queries.

The normalization constants are known, so no queries enter the coefficient states. We assume exact coherent selection, control, and adjoints throughout the chosen grid. At each grid interval, apply Section~\ref{sec:slice} to the supplied encodings of \(H(t)\) and \(L_\mu(t)\).

\begin{theorem}[Time-dependent Lindblad simulation]
\label{thm:lindblad}
Assume \eqref{eq:time-blocks}--\eqref{eq:master-oracle} and a finite constant \(\beta\) such that
\begin{equation}
 \diamondnorm{\cL_t-\cL_s}\le\beta|t-s|.
 \label{eq:lipschitz}
\end{equation}
For \(0<\eps<1/8\), the propagator
\begin{equation}
 \Phi(T,0)=\mathcal T\exp\!\left(\int_0^T\cL_t\,dt\right)
 \label{eq:target-propagator}
\end{equation}
can be approximated in diamond norm to error \(\eps\) with
\begin{equation}
 O\!\left(1+\tau+
 \frac{\log(1/\eps)}{\log(e+\log(1/\eps)/\tau)}\right)
 \label{eq:lindblad-query-bound}
\end{equation}
controlled master queries and their adjoints. At \(\tau=0\), the simulation makes no queries. At \(\tau\ge1\), the worst-case query count in this access model also has a lower bound of the stated order. This remains true when all the jump operators vanish.
\end{theorem}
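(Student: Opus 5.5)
We combine a mesh-error estimate with Corollary~\ref{thm:compiler}, treating the upper and lower bounds separately. For the upper bound with \(\tau>0\), we choose a uniform grid with \(J\) intervals of length \(h=T/J\) and place on interval \(j\) the Lindblad cell \(J_h\) of Proposition~\ref{prop:lindblad-cell}, built from the frozen generator \(\cL_{t_j}\). Each cell reaches its time label through the master oracle \eqref{eq:master-oracle}. The product of these cells is a causal program whose cells are \((a,m)\)-causal with weight \(h\Lambda\), so \(\Omega=\sum_j h\Lambda=\tau\) regardless of \(J\). This independence from \(J\) is the point of the construction. Corollary~\ref{thm:compiler} with isometry error \(\eps/4\) then costs \(O(1+\tau+\log(1/\eps)/\log(e+\log(1/\eps)/\tau))\) master queries, since \(\log(4/\eps)=O(\log(1/\eps))\) for \(\eps<1/8\). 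Converting isometry error to diamond error multiplies it by two, leaving diamond error \(\eps/2\) from the compiled program to the uncompiled product channel.

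It remains to choose \(J\) so that the product channel \(\prod_j\mathcal J_h^{(j)}\) is within \(\eps/2\) of \(\Phi(T,0)\). Telescoping over intervals works because every factor is a channel of diamond norm one. This bounds the distance by \(\sum_j\diamondnorm{\mathcal J_h^{(j)}-\Phi(t_{j+1},t_j)}\). For each interval we split the term into two pieces. The first is \(\diamondnorm{\mathcal J_h^{(j)}-e^{h\cL_{t_j}}}\le7h^2\Lambda^2\) by Proposition~\ref{prop:lindblad-cell}. The second is \(\diamondnorm{e^{h\cL_{t_j}}-\Phi(t_{j+1},t_j)}\), which a Duhamel comparison bounds by \(\int_0^h\diamondnorm{\cL_{t_j+u}-\cL_{t_j}}du\le\beta h^2/2\), using \eqref{eq:lipschitz} and the fact that both propagators are contractive. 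Summing gives \(J h^2(7\Lambda^2+\beta/2)=T^2(7\Lambda^2+\beta/2)/J\). This is below \(\eps/2\) for \(J\) large enough. The condition \(h\Lambda\le1\) holds as well, because \(J\ge\tau\) is compatible with that choice. The grid size affects only the labels and not the query count.

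The case \(\tau=0\) needs separate handling. Then every normalization vanishes, \(\cL_t=0\), and \(\Phi(T,0)\) is the identity, which requires no queries. The corollary also states this directly.

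For the lower bound at \(\tau\ge1\), we restrict to Hamiltonian instances with no jumps and a time-independent \(H\), so that \(\beta=0\). In that case the master oracle reduces, at constant overhead, to an ordinary block encoding of \(H/\alpha_H\), and simulating \(e^{-iHT}\) to diamond error \(\eps\) yields unitary simulation error \(O(\eps)\) up to a global phase. We then invoke the known worst-case lower bound \(\Omega(\tau+\log(1/\eps)/\log(e+\log(1/\eps)/\tau))\) for block-encoded Hamiltonian simulation. This is the lower bound that matches quantum signal processing, obtained from parity and no-fast-forwarding arguments together with the Bessel-tail precision argument. Because \(\tau\ge1\) absorbs the additive \(1\), this gives the stated order. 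The main obstacle is exactly this reduction. Any instance of the oracle model must be realizable as a time-dependent master oracle, the diamond-norm guarantee must be converted into the operator-norm guarantee used in the cited bound, and the cited bound's joint dependence on \(\tau\) and \(\eps\), rather than two separate bounds, must be checked to match \eqref{eq:lindblad-query-bound}. A maximum of two separate bounds would already be within a factor of two of the sum.
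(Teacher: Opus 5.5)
Your upper bound is the paper's argument with slightly different bookkeeping. You telescope interval by interval against the exact propagators \(\Phi(t_{j+1},t_j)\). The paper instead first compares \(\Phi(T,0)\) with the frozen product \(e^{h\cL_{t_{J-1}}}\cdots e^{h\cL_{t_0}}\) by a single variation-of-constants estimate, and then swaps in the cells. Both routes give error \(O((\tau^2+\beta T^2)/J)\), and in both the compiler sees \(\Omega=\tau\) independently of \(J\). There is a minor slip in your \(\tau=0\) case: \(\tau=0\) also occurs when \(T=0\) with nonzero normalizations, but there the target is the identity anyway.

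The gap is in the lower bound, at the step you flag as the main obstacle and then leave open. A diamond-norm guarantee for \(\rho\mapsto e^{-iHT}\rho e^{iHT}\) determines \(e^{-iHT}\) only up to a phase that may depend on the oracle. The precision part of the bound you cite (the Bessel-tail, polynomial-degree argument, as stated for block-encoded simulation) constrains the operator \(e^{-iHT}\) itself, phase included. So ``unitary error \(O(\eps)\) up to a global phase'' does not plug into it directly. A one-dimensional system shows the issue is real: there every \(e^{-iHT}\) induces the identity channel, which needs no queries.

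The missing idea is the paper's controlled embedding \(\overline H=\ketbra0 0\otimes0+\ketbra1 1\otimes H\). This is still a time-independent zero-jump Lindblad instance. Its block encoding costs one controlled call to the encoding of \(H\) at the same normalization \(\alpha_H\), so \(\tau\) is unchanged, and its evolution is \(I\oplus e^{-iHT}\). On the input \(\ket+\ket\psi\), the phase of \(e^{-iHT}\) becomes a relative phase between the two branches, so any channel simulator must reproduce it. The controlled-Hamiltonian lower bound of \cite{LowChuang17,CGWZ26} then applies in its joint \(\tau\)--\(\eps\) form.

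The alternative would be to cite a bound proved purely by distinguishing measured outputs, since parity-type arguments are phase-blind, and to check that it has the joint \(\log(e+\log(1/\eps)/\tau)\) dependence. Your proposal does neither.
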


\begin{proof}
Assume \(\tau>0\). Choose an integer
\begin{equation}
\begin{gathered}
 J\ge\max\left\{1,\lceil\tau\rceil,
       \left\lceil\frac{2\beta T^2}{\eps}\right\rceil,
       \left\lceil\frac{28\tau^2}{\eps}\right\rceil\right\},\\
 h=T/J.
 \end{gathered}
 \label{eq:grid-choice}
\end{equation}
Define the frozen-generator channel and the product of short-step channels by
\begin{equation}
\begin{aligned}
 \Phi_{\rm grid}&=e^{h\cL_{t_{J-1}}}\cdots e^{h\cL_{t_0}},\\
 \Phi_{\rm cells}&=\mathcal J_{h,J-1}\cdots\mathcal J_{h,0}.
 \end{aligned}
 \label{eq:grid-products}
\end{equation}
Apply variation of constants with the piecewise constant generator \(\cL_{t_j}\). CPTP contractivity removes the surrounding propagators from the diamond-norm estimate, leaving
\begin{align}
 \diamondnorm{\Phi(T,0)-\Phi_{\rm grid}}
 &\le\sum_{j=0}^{J-1}\int_{t_j}^{t_{j+1}}
       \diamondnorm{\cL_u-\cL_{t_j}}\,du\notag\\
 &\le\frac{\beta T^2}{2J}\le\frac\eps4.
 \label{eq:freezing-error}
\end{align}
With \(h\Lambda\le1\), Proposition~\ref{prop:lindblad-cell} applies slice by slice. A telescoping sum of channel differences yields
\begin{equation}
 \diamondnorm{\Phi_{\rm grid}-\Phi_{\rm cells}}
 \le7Jh^2\Lambda^2=\frac{7\tau^2}{J}\le\frac\eps4.
 \label{eq:cell-product-error}
\end{equation}

Let \(V_{\rm cells}\) be the product Stinespring isometry with all slice environments retained. Each cell has weight \(h\Lambda\), hence
\begin{equation}
 \Omega=\sum_{j=0}^{J-1}h\Lambda=\tau.
 \label{eq:weight-sum}
\end{equation}
Apply Corollary~\ref{thm:compiler} with isometry error \(\eps/4\). The compressed isometry induces a channel \(\widetilde\Phi\) with
\begin{equation}
 \diamondnorm{\widetilde\Phi-\Phi_{\rm cells}}
 \le\frac\eps2.
 \label{eq:compression-channel-error}
\end{equation}
Summing \eqref{eq:freezing-error}, \eqref{eq:cell-product-error}, and \eqref{eq:compression-channel-error} proves the error bound. In the query estimate, insert the constant \(m\) and total weight \(\Omega=\tau\) to obtain \eqref{eq:lindblad-query-bound}. There is no \(J\) dependence.

For a lower bound it suffices to set the jumps to zero and make the Hamiltonian constant in time. We use a control qubit to keep its phase observable at the channel level:
\begin{equation}
 \overline H=\ketbra0 0\otimes0+\ketbra1 1\otimes H.
 \label{eq:controlled-H}
\end{equation}
For \(\overline H\), control the encoding of \(H\) and use a known ancillary operation to give a zero block on the other branch. This costs one controlled call. Its evolution is \(I\oplus e^{-iHT}\), so the input \(\ket+\ket\psi\), with system state \(\ket\psi\), becomes
\[
 \frac{\ket0\ket\psi+\ket1e^{-iHT}\ket\psi}{\sqrt2}.
\]
A phase multiplying \(e^{-iHT}\) changes the relative phase of these two branches. It is therefore visible to a channel simulator. The controlled Hamiltonian lower bound~\cite{LowChuang17,CGWZ26} retains the phase information it needs. Since the construction is a valid zero-jump Lindblad instance, that lower bound matches \eqref{eq:lindblad-query-bound} for \(\tau\ge1\) in the worst case.
\end{proof}

For very small action, the upper bound need not be interpreted as a positive lower bound. Indeed, another variation-of-constants estimate gives \(\diamondnorm{\Phi(T,0)-\operatorname{Id}}\le2\tau\); if \(2\tau\le\eps\), the identity channel already meets the target accuracy.

For gate costs without additional input structure, let \(G_{\rm route}(J)\) be the cost of visiting the \(J\) cells once without the oracle. There are \(O(q)\) such visits, costing \(O(qG_{\rm route}(J))\) gates, before accounting for reuse and selection. The output also stores all \(J\) slice workspaces. Nothing in the query bound controls these resources without further information about the preparation and routing circuits. If the given encodings can only be indexed classically, even a coherent SELECT may cost order \(J\) queries.

For efficiently evaluable local matrices, Sections~\ref{loc:sec:cells}--\ref{loc:sec:assembly} provide the additional structure needed to implement the routing and retained baths with nearly linear total gate cost.

\section{Adaptive Markovian query computation}
\label{sec:adaptive}

An oracle algorithm may use dissipation as well as Hamiltonian evolution, with measurement outcomes determining its later couplings. We measure its cost by oracle action: the integrated strength of the interactions specified below. Under the access and regularity conditions of this section, minimizing that action gives ordinary quantum query complexity up to constant factors. For this comparison, take a standard Hermitian involutory oracle
\begin{equation}
 O_x=O_x^\dagger=O_x^{-1},
 \label{eq:standard-oracle}
\end{equation}
with a known identity sector for coherent bypass. Controlled uses have the usual constant query overhead. The description of a protocol, including its controls, supplied normalizations, and initial state preparation, is common to all oracle inputs \(x\).

\begin{definition}[Adaptive Markovian query protocol]
\label{def:adaptive}
A protocol has finite-dimensional accessible workspace, finitely many continuous intervals, and finitely many discrete interventions. On a continuous interval its generator is the sum of an oracle-independent GKSL generator \(F_t\) and an oracle-dependent term selected by the classical history \(r\):
\begin{equation}
\begin{aligned}
 Q_{x,r,t}(\rho)&=-i[H_{x,r}(t),\rho]\\
 &\quad+\sum_\mu\mathcal D[L_{x,r,\mu}(t)](\rho),\\
 \mathcal D[L](\rho)&=L\rho L^\dagger-\tfrac12\{L^\dagger L,\rho\}.
 \end{aligned}
 \label{eq:adaptive-generator}
\end{equation}
The interventions are oracle-independent CPTP maps or instruments with finite outcome sets. Their outcomes may select subsequent controls and generators. The displayed operators and generators are uniformly bounded, and all jump-label sets are finite.

For a fixed \(b\ge1\), the block encodings of \(H_{x,r}(t)\) and \(L_{x,r,\mu}(t)\) can each be implemented with at most \(b\) calls to \(O_x\). We require this bound also for their coherent SELECT: time, history, jump label, role, and the choice between an encoding and its adjoint may all be in superposition. Known oracle-independent normalizations satisfy
\begin{equation}
 \norm{H_{x,r}(t)}\le\alpha_{H,r}(t),
 \qquad \norm{L_{x,r,\mu}(t)}\le\alpha_{r,\mu}(t).
 \label{eq:adaptive-normalizations}
\end{equation}
The action density and total oracle action are
\begin{equation}
\begin{gathered}
 \lambda(t)=\sup_r\left(\alpha_{H,r}(t)
                     +\sum_\mu\alpha_{r,\mu}(t)^2\right),\\
 \tau=\int_0^T\lambda(t)\,dt.
 \end{gathered}
 \label{eq:adaptive-action}
\end{equation}
We require \(\lambda\) to be bounded and Riemann integrable on each interval. The chronology is fixed after padding branches, discarded environments are fresh and never reused, and the protocol is trace preserving.
\end{definition}

Padding puts all branches on a common schedule and assigns zero oracle generator after a branch halts. Keep an append-only register of outcomes and use it to control later operations. If the protocol needs to overwrite its history, it can do so on a work copy while the recorded outcomes stay intact. In \eqref{eq:adaptive-action}, the supremum includes every history, regardless of its probability. Thus the action is a uniform bound over branches. An oracle-independent term inside a jump operator is included in that operator's normalization; only an additive oracle-independent GKSL generator in \(F_t\) is free.

\subsection{Retaining classical histories coherently}

A measurement record can be retained in an isometric implementation, but the extension of a history-conditioned generator to coherent records must still be specified. Different choices can agree on classical histories and act differently on their off-diagonal blocks. The following extension provides a convenient common implementation.

\begin{lemma}[History-controlled extension]
\label{lem:history-extension}
Let \(P_r=\ketbra r r_R\). On the history and system registers define
\begin{equation}
\begin{aligned}
 \widehat H_x(t)&=\sum_rP_r\otimes H_{x,r}(t),\\
 \widehat L_{x,r,\mu}(t)&=P_r\otimes L_{x,r,\mu}(t),
 \end{aligned}
 \label{eq:history-extension}
\end{equation}
and let \(\widehat Q_{x,t}\) be their GKSL generator. Suppose an environment \(E\) retains the history in orthogonal subspaces, so that
\begin{equation}
\begin{gathered}
 \rho_{RSE}=\sum_{r,s}(P_r\otimes\Pi_r^E)\rho_{RSE}
                             (P_s\otimes\Pi_s^E),\\
 \Pi_r^E\Pi_s^E=\delta_{r,s}\Pi_r^E.
 \end{gathered}
 \label{eq:history-record-state}
\end{equation}
Then
\begin{equation}
 \Tr_E\bigl[(e^{h\widehat Q_{x,t}}\otimes\operatorname{Id}_E)(\rho_{RSE})\bigr]
 =\sum_rP_r\otimes e^{hQ_{x,r,t}}(\rho_{rr}),
 \label{eq:history-extension-channel}
\end{equation}
where \(\rho_{rr}\) is the subnormalized system state in the \(r\)-th diagonal block after tracing \(E\).

For \(h\lambda(t)\le1\), this extended evolution admits a causal short-step cell of weight \(h\lambda(t)\), with diamond error \(O(h^2\lambda(t)^2)\). A master call in the cell uses at most \(b\) ordinary queries.
\end{lemma}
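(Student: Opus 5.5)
The identity \eqref{eq:history-channel-block} and the cell statement are separate, so I will prove them separately. The second part then reduces to Proposition~\ref{prop:lindblad-cell}.

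\emph{Block structure of the extended generator.} First I show that \(\widehat Q_{x,t}\) acts blockwise. Write \(\rho_{RSE}=\sum_{r,s}\ketbra r s_R\otimes X_{rs}\), where each \(X_{rs}\) acts on \(S\otimes E\) and satisfies \(X_{rs}=(I\otimes\Pi^E_r)X_{rs}(I\otimes\Pi^E_s)\). The extended Hamiltonian commutes with every \(P_r\), and each extended jump \(\widehat L_{x,r,\mu}\) carries the factor \(P_r\) on both sides. Hence \(\widehat Q_{x,t}\) maps the block \(\ketbra r s\otimes X\) to \(\ketbra r s\otimes\mathcal G_{rs}(X)\). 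Here \(\mathcal G_{rr}=Q_{x,r,t}\otimes\operatorname{Id}_E\). For \(r\ne s\), the jump sandwich terms vanish, leaving
\[
\mathcal G_{rs}(X)=-i\bigl(H_rX-XH_s\bigr)-\tfrac12\sum_\mu\bigl(L_{r\mu}^\dagger L_{r\mu}X+XL_{s\mu}^\dagger L_{s\mu}\bigr),
\]
which acts only on the system factor. Exponentiating preserves this block decomposition. The off-diagonal block \(e^{h\mathcal G_{rs}}(X_{rs})\) still has the form \((I\otimes\Pi^E_r)(\cdot)(I\otimes\Pi^E_s)\), because \(\mathcal G_{rs}\) does not touch \(E\). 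Tracing out \(E\) therefore kills it, since \(\Tr(\Pi_r^E Y\Pi_s^E)=\Tr(\Pi_s^E\Pi_r^EY)=0\). The diagonal blocks give \(\Tr_E(e^{hQ_{x,r,t}}\otimes\operatorname{Id})(X_{rr})=e^{hQ_{x,r,t}}(\Tr_EX_{rr})\), which is \eqref{eq:history-channel-block}.

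\emph{The short-step cell.} I apply Section~\ref{sec:slice} to the extended generator, with the history register adjoined to the system. This needs block encodings of \(\widehat H_x(t)\) and of \(\widehat L_{x,r,\mu}(t)\), with known normalizations \(\alpha_H=\sup_r\alpha_{H,r}(t)\) and \(\alpha_{\mu}^2\) replaced by a sum over pairs \((r,\mu)\).

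For the Hamiltonian, I build the history-controlled SELECT over \(H_{x,r}\), rescaling each branch to a common normalization. This rescaling is an oracle-free coefficient rotation on a flag ancilla. The assumption on coherent SELECT over histories makes it cost at most \(b\) queries.

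The jumps are indexed by \((r,\mu)\), and this is where the normalization has to be handled with care. The stacked column \(L\) of Section~\ref{sec:slice} multiplies by \(\alpha_L\), and summing \(\alpha_{r,\mu}^2\) over all \(r\) would overcount. It suffices to note that for \(\widehat L_{x,r,\mu}=P_r\otimes L_{x,r,\mu}\) the stacked column is
\[
\ket\psi_R\ket\phi\mapsto\sum_{r,\mu}\ket{r,\mu}\,P_r\ket\psi\,L_{x,r,\mu}\ket\phi .
\]
This operator is block diagonal in \(r\), so its norm is \(\max_r\bigl(\sum_\mu\norm{L_{x,r,\mu}}^2\bigr)^{1/2}\). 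I will therefore prepare the jump-label state conditioned on \(r\), as \(\ket{\chi_r}\propto\sum_\mu\alpha_{r,\mu}\ket\mu\). I pad each branch with a zero-block dummy label so that every branch has the common normalization \(\alpha_L^2=\sup_r\sum_\mu\alpha_{r,\mu}^2\). I will also allow \(\alpha_H\) and \(\alpha_L^2\) to be taken separately as suprema over \(r\). The same cell analysis should then go through with \(\Lambda\le 2\lambda(t)\), or exactly \(\lambda(t)\) if the coefficient state is also allowed to depend on \(r\).

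Every preparation is a history-controlled oracle-free rotation whose off-reference probability is at most \(h\lambda(t)\). The bounds \eqref{eq:off-probabilities}--\eqref{eq:free-map-hybrid} therefore hold branchwise, and so they hold in operator norm on the direct sum.

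With these encodings, Lemma~\ref{lem:slice-skeleton} and Proposition~\ref{prop:lindblad-cell} apply verbatim. The proof of Lemma~\ref{lem:slice-skeleton} already allows phases and controls conditioned on an old history register. This gives an \((a,m)\)-causal cell of weight \(O(h\lambda(t))\), which can be rescaled to weight \(h\lambda(t)\) by enlarging \(a\), and a diamond error of \(O(h^2\lambda^2)\) for the extended channel.

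Each primitive slot is one call to a SELECT over time, history, label, role and direction, and by assumption this costs at most \(b\) queries to \(O_x\). Controls add the usual constant overhead. I expect the main obstacle to be the normalization bookkeeping just described: the weight must be governed by the supremum over histories, not by the sum over them.
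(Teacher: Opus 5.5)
Your proof of \eqref{eq:history-extension-channel} is correct. It differs from the paper only in the order of operations. You evolve each block \(\ketbra{r}{s}\otimes X_{rs}\) and then trace. The paper traces out \(E\) first: the extended channel acts trivially on \(E\), so the orthogonal records already make the reduced state block diagonal, and it then uses that \(\widehat Q_{x,t}\) preserves the block-diagonal algebra. Your plan for the cell, running Section~\ref{sec:slice} with history-controlled preparations and taking branchwise bounds, is also the paper's. The gap is the claim that Proposition~\ref{prop:lindblad-cell} then applies verbatim. The overcounting you flagged for the stacked column reappears in two places you did not check.

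First, the channel error in Proposition~\ref{prop:lindblad-cell} comes from \eqref{eq:exp-remainder}, which uses \(\diamondnorm{\cL}\le2\norm H+2\sum_\mu\norm{L_\mu}^2\) from \eqref{eq:L-diamond-bound}. For \(\widehat Q_{x,t}\) the jump labels are the pairs \((r,\mu)\), so this sum runs over all histories and gives a constant that grows with their number. Redefining \(\alpha_L\) as a supremum does not make the final inequality of \eqref{eq:L-diamond-bound} valid for these labels. You need to bound the generator through the stacked column, \(\widehat Q(\rho)=\widehat K\rho+\rho\widehat K^\dagger+\Tr_J(\widehat L\rho\widehat L^\dagger)\), with \(\norm{\widehat K}\le\lambda(t)\) and \(\norm{\widehat L}^2\le\lambda(t)\). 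This gives \(\diamondnorm{\widehat Q_{x,t}}\le3\lambda(t)\) (or \(2\Lambda\) with your normalizations), and it is how the paper obtains \(9h^2\lambda(t)^2\).

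Second, with \(\alpha_H\) and \(\alpha_L^2\) taken as separate suprema, \(\Lambda\) can reach \(2\lambda(t)\). The hypothesis \(h\Lambda\le1\) of Proposition~\ref{prop:lindblad-cell} then covers only \(h\lambda(t)\le1/2\). Above that, \(s\) can drop below \(1/2\), and the parameter \(d_s=\sqrt{4s^2-1}\) of Lemma~\ref{lem:cubic-polar} is no longer real. Your aside is the right repair, and it is the paper's construction: use \(r\)-dependent coefficient states \emph{and} \(r\)-dependent phases \(u_{s_r},g_{s_r}\). Since \(W_h\), the signal projectors and the phases all preserve the history sector, the cubic sequence is a direct sum over \(r\). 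The Euler column \(\widehat{\mathcal A}_h\) contains no normalizations and is block diagonal, so its polar factor is the direct sum of the branch polar factors. Lemma~\ref{lem:cubic-polar} then applies sector by sector with \(\Lambda_r\le\lambda(t)\), and every isometry and port bound is the maximum of the branch bounds.
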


\begin{proof}
Tracing out \(E\) removes the off-diagonal history blocks. The generator preserves the block-diagonal algebra and satisfies
\[
 \widehat Q_{x,t}\!\left(\sum_rP_r\otimes\rho_{rr}\right)
 =\sum_rP_r\otimes Q_{x,r,t}(\rho_{rr}).
\]
Exponentiation proves \eqref{eq:history-extension-channel}.

For the short-step implementation, use the rectangular jump map
\begin{equation}
 \widehat L_x(t):\ket r\ket\psi
 \longmapsto\sum_\mu\ket{r,\mu}_J\ket r_R
                            L_{x,r,\mu}(t)\ket\psi.
 \label{eq:history-stacked-jump}
\end{equation}
The jump environment explicitly retains \(r\); this implements the separate operators in \eqref{eq:history-extension}. In particular,
\begin{equation}
\begin{gathered}
 \widehat L_x(t)^\dagger\widehat L_x(t)
 =\sum_rP_r\otimes\sum_\mu L_{x,r,\mu}(t)^\dagger L_{x,r,\mu}(t),\\
 \norm{\widehat L_x(t)}^2\le\lambda(t).
 \end{gathered}
 \label{eq:history-rate-bound}
\end{equation}
If \(\widehat K=-i\widehat H-\widehat L^\dagger\widehat L/2\), then \(\norm{\widehat K}\le\lambda(t)\). Writing
\(\widehat Q(\rho)=\widehat K\rho+\rho\widehat K^\dagger+\Tr_J(\widehat L\rho\widehat L^\dagger)\) also gives \(\diamondnorm{\widehat Q}\le3\lambda(t)\).

Apply the column construction of Section~\ref{sec:slice} in each history block, with its own known coefficient preparations and phases. The no-jump output has a common environmental label, while jump outputs carry the label \((r,\mu)\) in \eqref{eq:history-stacked-jump}. The input and output history sectors remain orthogonal. Thus the full isometry bounds are the maxima of the branch bounds, and every primitive port has amplitude \(O(\sqrt{h\lambda(t)})\).

To check the channel error also on coherent histories, form the global Euler column from \(\widehat K\) and \(\widehat L\). It satisfies
\(\widehat{\mathcal A}_h^\dagger\widehat{\mathcal A}_h=I+h^2\widehat K^\dagger\widehat K\). Its induced completely positive map is exactly
\(\operatorname{Id}+h\widehat Q+h^2(\widehat K\,\cdot\,\widehat K^\dagger)\).
The remainder estimate \eqref{eq:exp-remainder}, now with \(\diamondnorm{\widehat Q}\le3\lambda(t)\), and the same polar and phase-correction estimates give a diamond error at most \(9h^2\lambda(t)^2\). The bound holds on the whole coherent extension, including its off-diagonal blocks. To implement a master slot, use the assumed history SELECT and pad with identities as needed. This takes at most \(b\) ordinary queries.
\end{proof}

\subsection{Uniform discretization}

We impose a common time-regularity condition on the padded protocol. On every continuous interval \(I\), let \(B_I\) be a finite set of breakpoints, independent of \(x\). Assume
\begin{equation}
\begin{aligned}
 M_F&:=\sup_{I,t\in I\setminus B_I}\diamondnorm{F_t}<\infty,\\
 M_Q&:=\sup_{I,x,t\in I\setminus B_I}\diamondnorm{\widehat Q_{x,t}}<\infty.
 \end{aligned}
 \label{eq:uniform-bounds}
\end{equation}
For each \(I\), let \(S_\eta(I)\) consist of the pairs \(s,t\) in the same component of \(I\setminus B_I\) with \(|t-s|\le\eta\). Assume that the common modulus
\begin{equation}
\begin{aligned}
 \omega(\eta):={}&\sup_{\substack{I,x\\(s,t)\in S_\eta(I)}}
 \Bigl(\diamondnorm{F_t-F_s}\\
 &\qquad+\diamondnorm{\widehat Q_{x,t}-\widehat Q_{x,s}}\Bigr)\\
 &\longrightarrow0\quad(\eta\downarrow0).
 \end{aligned}
 \label{eq:uniform-modulus}
\end{equation}
The free generator here is an oracle-independent extension to the history register that preserves its classical sectors. Generator values at the finitely many breakpoints do not affect the propagator.

A sufficient condition is piecewise operator-norm continuity of \(H_{x,r}(t)\) and \(L_{x,r,\mu}(t)\), with a common finite partition and moduli uniform in \(x,r,\mu\), together with the corresponding diamond-norm continuity of \(F_t\). The finite history and jump-label sets and the inequalities
\begin{equation}
\begin{gathered}
 \diamondnorm{-i[H-K,\,\cdot\,]}\le2\norm{H-K},\\
 \diamondnorm{\mathcal D[L]-\mathcal D[M]}
 \le2(\norm L+\norm M)\norm{L-M}
 \end{gathered}
 \label{eq:operator-continuity}
\end{equation}
then imply \eqref{eq:uniform-modulus}.

\begin{theorem}[Digitalization of adaptive Markovian queries]
\label{thm:adaptive}
Let \(\mathcal A_x\) be the channel of a protocol in Definition~\ref{def:adaptive} satisfying \eqref{eq:uniform-bounds}--\eqref{eq:uniform-modulus}, with action \(\tau\). For \(0<\delta<1/8\), an ordinary quantum query circuit implements a channel \(\widetilde{\mathcal A}_x\) such that
\begin{equation}
 \sup_x\diamondnorm{\widetilde{\mathcal A}_x-\mathcal A_x}\le\delta
 \label{eq:adaptive-error}
\end{equation}
using
\begin{equation}
 O\!\left(b\left[1+\tau+
 \frac{\log(1/\delta)}{\log(e+\log(1/\delta)/\tau)}\right]\right)
 \label{eq:adaptive-query-bound}
\end{equation}
queries to \(O_x\). The accessible classical record may be included in the output. At \(\tau=0\), no oracle query is required.
\end{theorem}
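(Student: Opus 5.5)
The proof will follow the template of Theorem~\ref{thm:lindblad}, with Lemma~\ref{lem:history-extension} supplying the cells. The plan is to discretize the padded protocol on a fine mesh adapted to the breakpoints and to the action density. Then every continuous piece and every intervention is replaced by a Stinespring isometry that retains all environments and the classical record coherently, and the resulting causal-cell program is compressed with Corollary~\ref{thm:compiler}.

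\textbf{Mesh and cells.} Refine the common breakpoint sets \(B_I\) into a grid whose intervals have lengths \(h_j\) with \(h_j\sup_{t\in\text{interval}}\lambda(t)\le1\) and \(\max_j h_j\le\eta\). The threshold \(\eta\) is chosen so that the freezing error \(\sum_j h_j\,\omega(\eta)\le T\omega(\eta)\) is at most \(\delta/4\); this is possible by \eqref{eq:uniform-modulus}. The error is estimated by variation of constants, exactly as in \eqref{eq:freezing-error}, with CPTP contractivity.

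On each subinterval, freeze the total generator \(F_{t_j}+\widehat Q_{x,t_j}\) at a point \(t_j\) where \(\lambda(t_j)\) is within a factor two of its supremum there, or simply use a sample point. Riemann integrability then gives \(\sum_j h_j\lambda(t_j)\to\tau\) as the mesh shrinks. The sum can therefore be made at most \(2\tau+1\), or \(\tau+\delta'\), which is all the query bound needs.

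Split the frozen step by a first-order Lie product into \(e^{h_jF_{t_j}}\) followed by \(e^{h_j\widehat Q_{x,t_j}}\). The splitting error is \(O(h_j^2M_FM_Q)\), and it sums to \(O(\eta\, T M_FM_Q)\), which a finer mesh absorbs. The free factor is implemented exactly as an oracle-independent isometry with a fresh environment; it acts as an inter-cell gate. The oracle factor is realized by the cell of Lemma~\ref{lem:history-extension}, which has weight \(h_j\lambda(t_j)\) and error \(9h_j^2\lambda(t_j)^2\le 9h_j\lambda(t_j)\cdot\eta\max\lambda\). Summed over cells this is \(O(\eta\tau\sup\lambda)\), also absorbed by the mesh. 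Interventions become oracle-free isometries that write their outcomes into the append-only record, which is exactly the block structure \eqref{eq:history-record-state} required by Lemma~\ref{lem:history-extension}.

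\textbf{Compression.} The whole purified protocol is now a program of \((a,m)\)-causal cells with universal \(a,m\), separated by oracle-independent isometries. Such isometries may be inserted between cells without changing the bound in \eqref{eq:history-circle-bound}. The total weight is \(\Omega=\sum_jh_j\lambda(t_j)\le\tau+o(1)\). Corollary~\ref{thm:compiler} with isometry error \(\delta/4\) then gives \(O(1+\Omega+\log(1/\delta)/\log(e+\log(1/\delta)/\Omega))\) master queries, independent of the mesh. Each master query costs at most \(b\) calls to \(O_x\), and the involution \eqref{eq:standard-oracle} supplies adjoints for free. Since the function \(\Omega\mapsto\Omega+\log(1/\delta)/\log(e+\log(1/\delta)/\Omega)\) is monotone, replacing \(\Omega\) by \(2\tau\) or \(\tau+1\) changes only constants, which yields \eqref{eq:adaptive-query-bound}. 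Summing the freezing, splitting, cell and compression errors gives \(\delta\) uniformly in \(x\), because all bounds depend only on the uniform constants \(M_F,M_Q,\omega,\lambda\). Tracing the environments but not the record gives the output channel with the classical record included. At \(\tau=0\), condition (ii) forces \(\Gamma P_{\rm in}=0\) and no query is made.

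\textbf{Main obstacle.} The delicate point is ensuring that the discretized mesh weight is controlled by the true action \(\tau\) rather than by \(\sup\lambda\cdot T\). The cell weights must use the sampled density \(\lambda(t_j)\), while the step condition \(h_j\lambda\le1\) and the \(h^2\lambda^2\) error must hold uniformly on each subinterval. I would handle this by taking upper Darboux sums: use \(\sup_{\text{interval}}\lambda\) as the normalization in each cell. Those normalizations are valid for every \(t\) in the interval, and Riemann integrability makes the upper sum converge to \(\tau\). A second subtlety is that the normalizations \(\alpha_{H,r},\alpha_{r,\mu}\) vary with the history \(r\). Lemma~\ref{lem:history-extension} already handles this through per-block preparations and the supremum over \(r\) in \eqref{eq:adaptive-action}, so the weight \(h\lambda\) bounds every branch.
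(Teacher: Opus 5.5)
Your route is the paper's own. You use a common partition containing every breakpoint and intervention time, and you bound the freezing error by $T\omega(\|\Pi\|)$ from the uniform modulus. You split each step as $e^{h_jF_{s_j}}e^{h_j\widehat Q_{x,s_j}}$ with error $h_j^2M_FM_Q$, and purify the interventions to obtain the record structure. The oracle factors become cells from Lemma~\ref{lem:history-extension} with upper Darboux weights $w_j=h_j\sup_{I_j}\lambda$, and you finish with Corollary~\ref{thm:compiler} at isometry accuracy $\delta/4$.

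\textbf{The $\tau=0$ clause.} This is the step that fails as written. Condition (ii) forces $\Gamma P_{\rm in}=0$ only when every cell weight is zero, and $\tau=0$ does not guarantee this. A bounded nonnegative Riemann-integrable $\lambda$ with zero integral can be positive at a single instant, or on a dense null set as for Thomae's function. Then the upper sum $\Omega$ is strictly positive on every partition. In that case the compiler still makes a positive constant number of master queries, since $q\ge C$ in Lemma~\ref{lem:tail-inversion}.

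The missing argument is the direct one. Since $\lambda\ge0$, $\tau=0$ forces $\lambda=0$ almost everywhere. The normalization bounds then give $H_{x,r}(t)=0$ and $L_{x,r,\mu}(t)=0$ off a null set, and such a set does not affect the propagator. The oracle-independent free dynamics and interventions therefore implement $\mathcal A_x$ with no query at all.

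\textbf{The query-bound substitution.} Only your substitution $\Omega\le 2\tau$ works. Replacing $\Omega$ by $\tau+1$, or by $2\tau+1$, changes more than constants, because the last term of the bound is sensitive to small $\tau$. Take $L=\log(1/\delta)$ and $\tau=e^{-L^2}$. The target bound is then $O(1)$, whereas $L/\log(e+L/(\tau+1))\ge L/\log(e+L)$ is unbounded.

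For $\tau>0$ this is easy to fix. Since $\Omega\to\tau$ under refinement, you can insist on $\Omega\le2\tau$. Then $\log(e+L/(2\tau))\ge(1-\log 2)\log(e+L/\tau)$ gives the constant-factor comparison.
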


\begin{proof}
First retain the environments of all interventions. An instrument with Kraus operators \(K_{y,a}\) can be implemented by the isometry
\begin{equation}
 W_{\mathcal M}=\sum_{y,a}\ket y_R\ket{y,a}_E K_{y,a}.
 \label{eq:purified-instrument}
\end{equation}
The relation \(\sum_{y,a}K_{y,a}^\dagger K_{y,a}=I\) makes this an isometry. Use \(R\) as the control register for future steps and retain \(E\) without further action. Iterating gives \eqref{eq:history-record-state}. Dilate the other oracle-independent channels in the same manner. The feedback evolution now has the chronological workspace of Lemma~\ref{lem:history-extension}.

For a common discretization over \(x\), include every intervention and breakpoint in \(\Pi=\{I_j=[t_j,t_{j+1}]\}\). Let \(h_j=t_{j+1}-t_j\) and take \(s_j\) in the regularity piece containing the interior of \(I_j\). Freeze the continuous generator by replacing
\[
 G_x(t)=F_t+\widehat Q_{x,t}
 \quad\text{to}\quad
 G_{x,\Pi}(t)=F_{s_j}+\widehat Q_{x,s_j}\quad(t\in I_j).
\]
If \(U_x\) and \(U_{x,\Pi}\) are the corresponding propagators, variation of constants and CPTP contractivity give
\begin{equation}
\begin{aligned}
 &\sup_x\diamondnorm{U_x-U_{x,\Pi}}\\
 &\quad\le\sum_j\int_{I_j}\sup_x\diamondnorm{G_x(t)-G_x(s_j)}\,dt\\
 &\quad\le T\omega(\norm\Pi).
 \end{aligned}
 \label{eq:adaptive-freezing}
\end{equation}
The mesh width is \(\norm\Pi=\max_jh_j\). In the sum over intervals, \(T\) is their total continuous duration.

For the frozen generators, the Lie--Trotter error is
\begin{equation}
\begin{aligned}
 &\diamondnorm{e^{h_j(F_{s_j}+\widehat Q_{x,s_j})}
             -e^{h_jF_{s_j}}e^{h_j\widehat Q_{x,s_j}}}\\
 &\quad\le\tfrac12h_j^2\diamondnorm{[F_{s_j},\widehat Q_{x,s_j}]}\\
 &\quad\le h_j^2M_FM_Q.
 \end{aligned}
 \label{eq:adaptive-splitting}
\end{equation}
For completeness, differentiating the product
\[
 e^{(h-u)(F+Q)}e^{uF}e^{uQ}
\]
writes the difference as an integral containing \([e^{uF},Q]\). Express this commutator by integrating over \(v\in[0,u]\); the resulting double integral ranges over \(0\le v\le u\le h\). Its integrand is \([F,Q]\) multiplied on either side by CPTP exponentials of diamond norm one. The first inequality in \eqref{eq:adaptive-splitting} follows.

Telescoping across intervals and interventions consequently bounds the total freezing and splitting error by
\begin{equation}
 T\omega(\norm\Pi)+M_FM_QT\norm\Pi\longrightarrow0.
 \label{eq:uniform-product-limit}
\end{equation}
The common modulus and breakpoints are what make this convergence uniform in the oracle input.

Use the same partition to define the upper action weights
\begin{equation}
\begin{gathered}
 \lambda_j^+=\sup_{t\in I_j}\lambda(t),\qquad w_j=h_j\lambda_j^+,\\
 \Omega=\sum_jw_j=U(\lambda,\Pi).
 \end{gathered}
 \label{eq:darboux-weights}
\end{equation}
The free factors \(e^{h_jF_{s_j}}\) and all discrete interventions have oracle-independent Stinespring implementations. Replace each \(e^{h_j\widehat Q_{x,s_j}}\) by the cell in Lemma~\ref{lem:history-extension}. Its branch weights are at most \(w_j\), and its local channel error is at most \(9w_j^2\), once \(w_j\le1\).

Bounded Riemann integrability gives
\begin{equation}
\begin{gathered}
 \max_jw_j\le\norm\lambda_\infty\norm\Pi\longrightarrow0,
 \qquad \Omega\longrightarrow\tau,\\
 \sum_jw_j^2\le(\max_jw_j)\Omega\longrightarrow0.
 \end{gathered}
 \label{eq:darboux-convergence}
\end{equation}
For \(\tau>0\), refine the partition until \(\Omega\le2\tau\) and the sum of freezing, splitting, and cell errors is at most \(\delta/2\) for every \(x\).

The product is chronological and isometric, with free contractions on absorbed paths and no reuse of retained environments. We may therefore use Corollary~\ref{thm:compiler} at isometry accuracy \(\delta/4\). Let \(\mathcal A_{\Pi,x}\) denote the channel on this partition. The two approximations obey
\[
\begin{aligned}
 \sup_x\diamondnorm{\widetilde{\mathcal A}_x-\mathcal A_{\Pi,x}}&\le\delta/2,\\
 \sup_x\diamondnorm{\mathcal A_{\Pi,x}-\mathcal A_x}&\le\delta/2.
\end{aligned}
\]
Their sum bounds the total error in \eqref{eq:adaptive-error}. Substitution of \(\Omega\le2\tau\) in \eqref{eq:compiler-bound}, with at most \(b\) queries at a master slot, gives \eqref{eq:adaptive-query-bound}.

For \(\tau=0\) there is no oracle coupling except on a set of measure zero, so the free dynamics and interventions implement the protocol without queries.
\end{proof}

\subsection{An adversary characterization}

Let \(f:\mathcal D\to E\), with \(\mathcal D\subseteq\Sigma^n\) and \(\Sigma,E\) finite; the function may be partial or non-Boolean. Let \(Q(f)\) be its ordinary quantum query complexity with success probability at least \(2/3\). Fix the access constant \(b=O(1)\). Among protocols in Definition~\ref{def:adaptive} that \emph{satisfy \eqref{eq:uniform-bounds}--\eqref{eq:uniform-modulus}} and compute \(f\) with success probability at least \(2/3\), take the infimum of the action \(\tau\). We denote this infimum by \(MQ(f)\).

\begin{corollary}[Open-system adversary completeness]
\label{cor:adversary}
For every such function,
\begin{equation}
 MQ(f)=\Theta(Q(f))=\Theta(\operatorname{Adv}^{\pm}(f)).
 \label{eq:adversary-completeness}
\end{equation}
If \(f\) is constant, each quantity is zero. Otherwise the constants in the \(\Theta\)-relations can be chosen uniformly over \(f\), with dependence only on the fixed access constant \(b\).
\end{corollary}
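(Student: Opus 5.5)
The plan is to prove two inequalities, \(MQ(f)=O(Q(f))\) and \(Q(f)=O(MQ(f))\), and then to quote the known equality \(Q(f)=\Theta(\operatorname{Adv}^{\pm}(f))\) for partial, non-Boolean functions over finite alphabets (Reichardt; Lee--Mittal--Reichardt--\v{S}palek--Szegedy). That equality has universal constants, so the uniformity claim reduces to the two inequalities. If \(f\) is constant, the relevant quantities vanish: \(Q(f)=0\), \(\operatorname{Adv}^{\pm}(f)=0\) by convention, and a free protocol that simply outputs the constant has action zero, so \(MQ(f)=0\). Assume from now on that \(f\) is nonconstant, so that \(Q(f)\ge1\).

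\textbf{Upper bound \(MQ(f)=O(Q(f))\).} Take an optimal query circuit with \(k=Q(f)\) queries. We turn each query into a continuous Hamiltonian interval with action \(O(1)\). Since \(O_x\) is Hermitian and involutory, \(O_x=I-2P_x\) with \(P_x=(I-O_x)/2\). Evolve for time \(\pi/2\) under \(H_x=P_x\), which gives \(e^{-i\pi P_x/2}\). A single application does not reproduce \(O_x\) exactly, so instead evolve for time \(\pi\): this gives \(e^{-i\pi P_x}=I-2P_x=O_x\) exactly. The block encoding of \(P_x=(I-O_x)/2\) is a linear combination of unitaries and uses one call, so it fits \(b=O(1)\) with normalization \(\alpha_H=1\). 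Each query therefore costs action \(\pi\), and the total is \(\tau=\pi k\). The gates between queries go into instantaneous oracle-independent interventions, and the final measurement is one more intervention. The generator is piecewise constant with breakpoints independent of \(x\), so \eqref{eq:uniform-bounds}--\eqref{eq:uniform-modulus} hold trivially (\(\omega\equiv0\)). The protocol reproduces the circuit exactly, which gives \(MQ(f)\le\pi Q(f)\).

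\textbf{Lower bound \(Q(f)=O(MQ(f))\).} This is the step that carries the content and the main obstacle. Take a protocol with action \(\tau\le 2MQ(f)\), or \(\tau\le MQ(f)+1\) when the infimum is small, succeeding with probability \(2/3\). Apply Theorem~\ref{thm:adaptive} with, say, \(\delta=1/24\). This yields an ordinary query circuit with \(O(b(1+\tau))\) queries whose output distribution is within \(1/24\) of the protocol's, so it succeeds with probability at least \(2/3-1/24\). Standard constant-factor amplification, by majority or plurality over \(O(1)\) repetitions (valid for non-Boolean outputs because the correct value has probability bounded above \(1/2\)), restores success \(2/3\). Hence \(Q(f)=O(b(1+\tau))\).

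The delicate point is removing the additive \(1\). For nonconstant \(f\), I would show \(MQ(f)\ge c\) for a universal \(c>0\). Take two inputs \(x,y\) with \(f(x)\ne f(y)\). A variation-of-constants argument, as in the remark after Theorem~\ref{thm:lindblad}, compares the protocol to the oracle-free one. The oracle-dependent part of the generator has diamond norm at most \(3\lambda(t)\), by the estimate in the proof of Lemma~\ref{lem:history-extension}, so \(\diamondnorm{\mathcal A_x-\mathcal A_y}\le 6\tau\). Distinguishing the two outputs with success \(2/3\) on each requires this to be \(\Omega(1)\), which forces \(\tau\ge c\). Then \(1+\tau=O(\tau)\), and \(Q(f)=O(b\,MQ(f))\). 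Taking the infimum and combining with the upper bound and the adversary theorem gives \eqref{eq:adversary-completeness}, with constants depending only on \(b\).

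The step I expect to need the most care is checking that the distinguishability bound holds uniformly over adaptive branches. This is handled by the supremum over histories in \eqref{eq:adaptive-action}, together with the coherent history extension of Lemma~\ref{lem:history-extension}.
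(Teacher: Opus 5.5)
Your proposal is correct and follows the paper's proof. Theorem~\ref{thm:adaptive} at a constant $\delta$, plus constant-factor amplification, gives $Q(f)=O(b(1+\tau))$; a variation-of-constants distinguishability bound for two inputs with $f(x)\ne f(y)$ bounds $\tau$ below by a constant, so the additive $1$ is absorbed; a Hamiltonian pulse per query gives $MQ(f)=O(Q(f))$; and the adversary equality is quoted from \cite{LMRSS11}.

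The differences are only in constants. The paper uses $H_x=O_x$ for time $\pi/2$ (action $\pi/2$, up to a global phase $-i$) and the direct blockwise bound $\diamondnorm{Q_{x,r,t}-Q_{y,r,t}}\le4\lambda(t)$. You use $(I-O_x)/2$ for time $\pi$, and a triangle inequality through the oracle-free protocol with the $3\lambda(t)$ coherent-extension bound. Both variants are valid.
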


\begin{proof}
Set \(\delta=1/16\) in Theorem~\ref{thm:adaptive}. The total-variation error is at most \(1/32\), leaving success probability at least \(2/3-1/32>1/2\). A constant number of repetitions restores success to \(2/3\), and hence
\begin{equation}
 Q(f)=O\bigl(b(1+\tau)\bigr).
 \label{eq:query-from-action}
\end{equation}
For a nonconstant function, the additive one can be removed by an action lower bound.

For any fixed history \(r\), equation \eqref{eq:L-diamond-bound} implies
\begin{equation}
 \diamondnorm{Q_{x,r,t}-Q_{y,r,t}}\le4\lambda(t).
 \label{eq:pairwise-generator}
\end{equation}
Even after adjoining a reference \(Z\), the accessible history register is classical. A reachable state can therefore be written as \(\sum_rP_r\otimes\rho_{r,SZ}\), with positive blocks whose traces sum to one. On block \(r\), equation \eqref{eq:pairwise-generator} bounds the trace norm of the generator difference applied to that block by \(4\lambda(t)\Tr\rho_{r,SZ}\). The trace norm adds over the diagonal blocks, giving the bound \(4\lambda(t)\) for the full state.

Apply variation of constants between interventions. The free generator is the same for inputs \(x\) and \(y\), so only the oracle terms remain in the generator difference. Propagation after each insertion, including later interventions, is CPTP and contracts trace distance. Integrating the bound \eqref{eq:pairwise-generator} and telescoping at the intervention times gives, for channels from the initialized history space,
\begin{equation}
 \diamondnorm{\mathcal A_x-\mathcal A_y}\le4\tau.
 \label{eq:pairwise-channel}
\end{equation}
Choose inputs with \(f(x)\ne f(y)\). Success probability \(2/3\) makes their output distributions differ in total variation by at least \(1/3\), so the output states differ in trace norm by at least \(2/3\). It follows that \(\tau\ge1/6\). Thus \(1+\tau\le7\tau\) in \eqref{eq:query-from-action}, and taking the infimum gives \(Q(f)=O(MQ(f))\).

Conversely, replace each oracle call in a \(Q(f)\)-query circuit by a Hamiltonian pulse \(H_x=O_x\) of duration \(\pi/2\). Since
\begin{equation}
 e^{-i(\pi/2)O_x}=-iO_x,
 \label{eq:oracle-pulse}
\end{equation}
each pulse has the effect of one oracle call apart from a global phase. Insert the free circuit gates as discrete interventions between pulses. The resulting protocol is piecewise constant, obeys the required regularity conditions, and has action \((\pi/2)Q(f)\). This proves \(MQ(f)\le(\pi/2)Q(f)\); the equality with the adversary bound follows from its characterization of quantum query complexity for finite functions~\cite{LMRSS11}.
\end{proof}

Continuous-to-discrete conversions and adversary bounds for Hamiltonian oracle models were established in~\cite{CGMSY09,YongeMallo11}, and constant-factor equivalence for general state conversion follows from~\cite{LMRSS11}. Here the permitted oracle couplings also include dissipators, and the protocol may choose oracle-independent CPTP controls using its measurement record. Corollary~\ref{cor:adversary} shows that the general adversary bound remains a lower bound, up to a constant factor, on the action required in this setting.

The output in our approximation is the accessible system together with its classical record. Specifying this channel leaves the emitted bath state and the trajectory unraveling undetermined. Bath monitoring can be included only after choosing an instrument whose implementation meets the cell conditions. Reinteraction with old baths, indefinite causal order~\cite{Abbott24}, unbounded generators, expected-action accounting, and uncharged postselection are outside this model.

\section{Local lattice dynamics and occupied-input compilation}\label{loc:sec:cells}

We now impose spatial locality and charge the elementary gates used by the compiler.
The block encodings in this part are built from coherent local-matrix evaluation rather
than supplied as unit-cost oracles. Each regional sweep visits a given bath only once,
but consecutive sweeps act on a common bath and may receive occupied inputs. The
extension in Lemma~\ref{loc:lem:sparse} treats those inputs. These shared baths are simulation
registers used to compose a dilation; their reuse does not enlarge the physical
adaptive-protocol model of Section~\ref{sec:adaptive}.

In this part, \(\Lambda_{\rm loc}\) is a per-term normalization bound, and \(t_\star\) denotes
a fixed short segment length. The total action in the general query model remains
\(\tau\) as defined in Sections~\ref{sec:time-dependent} and~\ref{sec:adaptive}.

Let the lattice be a finite $n$-site box in $\mathbb Z^D$, with $q$-dimensional sites,
fixed $D$, and open or periodic boundaries. Write $Z_e$ for the support of $e\in E$, and
require
\[
1\le |Z_e|\le s_0,\qquad \operatorname{diam}_\infty Z_e\le\rho.
\]
Here periodic intervals may cross a boundary. We fix $q,s_0,\rho,g,D$, take $\rho\ge1$,
and permit no more than $g$ supports at a site. Hence $|E|\le gn$. Supports of size one
and zero-valued terms are included.

The generator is
\begin{equation}\label{loc:eq:generator}
\begin{split}
\cL(t)&=\sum_{e\in E}\cL_e(t),\\
\cL_e(t)(\rho)&=-i[H_e(t),\rho]\\
&\quad+\sum_{\mu=1}^{r_{\rm j}}
\left(L_{e\mu}(t)\rho L_{e\mu}(t)^\dagger
-\tfrac12\{L_{e\mu}(t)^\dagger L_{e\mu}(t),\rho\}\right).
\end{split}
\end{equation}
The jump count $r_{\rm j}$ is fixed, and common positive rational bounds satisfy
\[
\norm{H_e(t)}\le\alpha_H,\qquad
\norm{L_{e\mu}(t)}\le\alpha_\mu.
\]
Put
\[
\alpha_L^2=\sum_\mu\alpha_\mu^2,\qquad
\Lambda_{\rm loc}=\alpha_H+\alpha_L^2.
\]
These normalization bounds do not vary with the support or sampled time. Let $\mathcal
E(t,s)$ denote the propagator of $\cL(t)$.

For each $e$, allow at most $B$ breakpoints in $[0,T]$, and assume that on every
interval between them,
\begin{equation}\label{loc:eq:holder}
\norm{\cL_e(t)-\cL_e(s)}_\diamond
\le K_t|t-s|^\alpha,\qquad 0<\alpha\le1.
\end{equation}
Fix $\alpha$. We need neither a common set of breakpoints nor their locations as input.
Values at a breakpoint can be arbitrary subject to the normalization bounds.

We impose regularity on the generator itself. Operator-norm H\"older bounds for the
Hamiltonian and jump matrices suffice, by
\begin{align*}
\norm{-i[H-H',\,\cdot\,]}_\diamond&\le2\norm{H-H'},\\
\norm{\mathcal D[L]-\mathcal D[L']}_\diamond
&\le2(\norm L+\norm{L'})\norm{L-L'}.
\end{align*}
The argument requires no higher time derivatives.

The reversible evaluator for a term takes a binary sample address and returns matrix
entries to error $2^{-b}$. The uniform cost bound is polynomial in $b$ and the temporal
and spatial address lengths; time arithmetic and coherent lookup, if used, are charged.
Selecting $p$ patch descriptions by multiplexing costs at most $p$ evaluations.

Every operator norm here is spectral. Identity factors outside a bath projector's
specified registers are implicit. Reference systems are allowed in the operator-norm
estimates; our constants may depend on fixed local parameters, $\alpha$, and uniform
evaluation algorithms.

\begin{theorem}[Time-dependent lattice simulation]\label{loc:thm:main}
For the class above, let $T\ge0$ and $0<\eps<1/8$. There is a circuit implementing a
channel $\widetilde{\mathcal E}$ with
\[
\norm{\widetilde{\mathcal E}-\mathcal E(T,0)}_\diamond\le\eps
\]
using \eqref{loc:eq:intro-bound} gates, depth $(T+1)\polylog X$, and $n\polylog X$ working
qubits if baths are discarded between short time segments. Geometrically neighboring
one- and two-qubit gates in the patch layout suffice.
\end{theorem}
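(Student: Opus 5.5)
The proof will proceed in three layers: a time-segment reduction, an HHKL-style spatial decomposition within each segment, and a gate-level implementation of the causal compiler on each patch.

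\emph{Time segments.} First split \([0,T]\) into \(O(T+1)\) segments of fixed length \(t_\star\), with \(t_\star\Lambda_{\rm loc}\) a small constant. Allocate error \(\eps/(T+1)\) to each segment. By CPTP contractivity the segment errors add. Inside a segment, choose a mesh \(h\) small enough that the frozen-generator error and the Euler-cell error of Proposition~\ref{prop:lindblad-cell} are at most \(\eps/(n(T+1))\) per term. Using \eqref{loc:eq:holder} and at most \(B\) breakpoints per term, \(J=\operatorname{poly}(X)\) grid points suffice. Breakpoints need not be located: a breakpoint interval contributes at most \(O(h\Lambda_{\rm loc})\), and there are \(B\) of them per term. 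The query compiler is insensitive to \(J\), as in Theorem~\ref{thm:lindblad}. The gate cost depends on \(J\) only through address lengths \(\log J=O(\log X)\) in the evaluator, so this factor is polylogarithmic.

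\emph{Spatial decomposition with a common bath.} Within a segment, apply the overlapping-region decomposition of~\cite{HHKL} with \(3^D\) passes. Use patches of diameter \(\ell=O(\log^{1/D}\text{-scale})\), i.e.\ \(O(\log^D X)\) sites. This approximates \(\mathcal E\) by a word \(U_{AB}U_B^\dagger U_{BC}\cdots\) of regional Stinespring isometries that act on a \emph{common} bath. Since the inverse factors see occupied bath records, the standard Lieb--Robinson argument is not directly available. Instead I would bound the word with vacuum only at its initial input. Expand each regional dilation by a replicated-system contraction, grouping terms by connected interaction support. Vacuum return gives a factor of bin width per selected interaction. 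Inclusion--exclusion kills every connected component that fails to reach opposite cut colors. Bounded-degree counting of connected supports of size \(\ge\ell\) then gives error \(e^{-\Omega(\ell)}\) per boundary. Summed over \(O(n)\) boundaries and \(O(T+1)\) segments, this is at most \(\eps/3\) for \(\ell^D=O(\log^D X)\). I expect this step to be the main obstacle: the cancellation must survive the fact that later regional factors take occupied bath inputs, so the analysis needs an occupied-input version of the cell conditions.

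\emph{Compiling each patch.} Each regional factor is a product over its \(O(\log^D X)\) sites and \(J\) time cells of Lindblad cells from Proposition~\ref{prop:lindblad-cell}. I would build these cells from coherent matrix evaluation at \(b=O(\log X)\) bits. To handle occupied inputs, I would extend Theorem~\ref{thm:causal-tail} (this is Lemma~\ref{loc:lem:sparse}): the history generating function still has norm \(\exp(O(\Omega R^{2m}))\) when the bath enters occupied, provided the occupation is bounded. Total weight \(\Omega=O(t_\star\log^DX)\) per patch, so Corollary~\ref{thm:compiler} uses \(q=O(\polylog X)\) transducer calls per patch.

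\emph{Routing and bath storage.} Each call must traverse \(J\) cells, and a naive traversal costs \(O(J)\). I would avoid this by using common normalizations. These make the singular frames and routing coefficients independent of the sampled operators, so the oracle-free transducer preserves bath occupation. The bath can then be stored as a compressed list of \(O(\polylog X)\) occupied records (time label plus local state), since the tail bound caps the occupation with high amplitude. Truncate occupation above \(\polylog X\) at error \(\eps/\operatorname{poly}\). Routing on these records uses an inhomogeneous dyadic circuit extending~\cite{ChenGates}, with depth \(O(\log J)\) per record. The collective frame change is applied in a symmetric record encoding. Each patch then costs \(\polylog X\) gates, and there are \(O(n/\log^DX)\) patches per pass per segment, giving \(O(n(T+1)\polylog X)\) total.

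\emph{Depth, space and locality.} Patches in the same pass are disjoint and run in parallel, giving depth \((T+1)\polylog X\). Discarding the bath after each segment, which is valid because the segment channels compose, keeps the workspace at \(n\polylog X\). Laying out each patch's ancillas next to its sites makes the gates geometrically neighboring. Finally, I would collect the three error budgets (temporal, spatial, compiler and truncation), each at most \(\eps/3\).
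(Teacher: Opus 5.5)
You follow the paper's architecture closely: segments of length $t_\star$, a mesh that enters only through $\log J=O(\log X)$, the $3^D$-pass word on a common bath handled by replicated-system contraction and connected-support cancellation, the sparse-input compiler, common-normalization frames, and dyadic routing on compressed records. The genuine gap is occupation control, which you attribute to ``the tail bound.'' The history tail, whether Theorem~\ref{thm:causal-tail} or the radius-two estimate in Lemma~\ref{loc:lem:sparse}, bounds the query-history vector $A^k\Gamma P$, not the number of occupied time-bin baths. Lemma~\ref{loc:lem:sparse} takes an occupation bound $N$ as a hypothesis, and its history estimate carries a factor $d^N$, so it cannot supply that bound. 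Two separate estimates are needed, both uniform in $J=\mathrm{poly}(X)$.

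\emph{At regional inputs.} An interaction's baths have been visited by up to $3^D$ forward and backward passes. Adding the creation amplitudes $\kappa\sqrt h$ over its $J$ cells gives $\kappa\sqrt{Jt}$, which diverges with the mesh, and a mean-occupation bound gives only a polynomial tail. The paper proves $\norm{D_duD_{2d}^{-1}}\le e^{\frac23d^2\kappa^2h}$ with $D_d=P^0+dP^1$ (Lemma~\ref{loc:lem:physical-weight}). It halves the weight at each visit and telescopes to obtain $\norm{\mathbf1_{\cN_e>k}V_{\rm pre}P_{\vac}}\le2^{-k}e^{C_{\rm occ}}$, with $C_{\rm occ}$ independent of $J$, and then \eqref{loc:eq:regional-tail} for a region.

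\emph{Inside a compiled regional circuit.} The $Q$ addressed queries and the four collective frame rotations create further records. Lemma~\ref{loc:lem:cutoff} controls these with the $e^{\cN}$-weighted norm and the Fock bound $\norm{T_sP_N}\le B_s\sqrt{N+1}$, where $B_s=O(\sqrt\Omega)$ is the $\ell^2$ column norm; an $\ell^1$ norm would again grow with $J$. Only this justifies a record capacity $K=O(1+N_0+Q+\Omega+\log(1/\delta))$.

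Finally, each compiled factor is accurate only on $P_{\le N_0}$. The global error therefore has to be assembled by a hybrid argument in which the occupation tail is evaluated on exact prefixes and the bad component is charged at norm two.

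Relatedly, the obstacle you anticipate in the spatial step does not arise there. The cancellation is an exact identity for the vacuum-to-vacuum overlap of the whole word $U_{E,J}^\dagger\mathcal W_D(E)$. It is evaluated bin by bin through the transfer matrices $M_j$ of Lemma~\ref{loc:lem:transfer}, which already sum over occupied intermediate bath states. It uses only $u(0)P^0=P^0$, $\norm{u(h)-u(0)}=O(\sqrt h)$, and the $O(h)$ vacuum block of Lemma~\ref{loc:lem:cell}. Occupied inputs matter for the compiler, not for the decomposition.

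Your temporal budget is also stated per term. It omits the $O(n^2h^2)$ Lie-product error per bin from noncommuting terms, which is the source of the $n^2Th$ term in \eqref{loc:eq:mesh-channel}. This is absorbed by $J=\mathrm{poly}(X)$, but it must enter the mesh choice.
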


Section~\ref{loc:sec:assembly} gives the gate count and an explicit overhead exponent.

\subsection{A fixed local dilation}\label{loc:sec:local-dilation}

For one local term at a sampled time, suppress the spatial and temporal labels.
Use the Euler column and its polar isometry from
\eqref{eq:Ah}--\eqref{eq:polar}, writing \(A_h=\mathcal A_h\) in this part and
\(\Gamma_L=L^\dagger L=\sum_\mu L_\mu^\dagger L_\mu\).
The bound \eqref{eq:polar-channel-error} is \(O(h^2)\) under the fixed local
normalizations. What must also be specified here is a full unitary extension whose
free gates admit efficient routing on occupied bath inputs.

For a constant-size contraction $A$, its local matrix also specifies the unitary
\[
\begin{pmatrix}
A&\sqrt{I-AA^\dagger}\\
\sqrt{I-A^\dagger A}&-A^\dagger
\end{pmatrix}.
\]
The signal block of this unitary is $A$, so choosing $H/\alpha_H$ or $L_\mu/\alpha_\mu$
gives the desired encoding. The dimension is constant. We may therefore evaluate its
positive square roots in polynomial bit cost, rescaling the rounded input slightly
inward to keep it a contraction. At error $\nu$, this gives $\poly(\log(1/\nu))$ gates
for the encoding, a control, or an adjoint.

Use the three-call correction of Lemma~\ref{lem:cubic-polar}. Its coefficient
states, with the common local normalizations, are
\begin{equation}\label{loc:eq:cell-coefficients}
\begin{split}
s_B&=1+h\alpha_H+\tfrac12h\alpha_L^2,\\
c&=s_B^2+h\alpha_L^2,\qquad s=c^{-1/2},\\
\ket{g_B}&=\frac{\ket I+\sqrt{h\alpha_H}\ket H
+\sqrt{h\alpha_L^2/2}\ket D}{\sqrt{s_B}},\\
\ket{g_A}&=\frac{s_B\ket0+\sqrt h\,\alpha_L\ket1}{\sqrt c}.
\end{split}
\end{equation}
Prepare the stack state $g_A$. On stack zero, use the role state $g_B$ to select
$I,-iU_H,-U_{\Gamma_L}$, and unprepare the role. On stack one, prepare the jump label and
select the individual jump encodings. All inactive branches act identically on their
work registers.

To encode the product $U_{\Gamma_L}$, we use a private jump label and two independent block
signals, keeping a separate label and signal for the retained jump branch. The output
projector fixes the role and all work signals, including the private product label,
while leaving the stack and retained jump label free. The input projector fixes those
retained labels as well. This gives
\begin{equation}\label{loc:eq:column-encoding}
\Pi_{\rm out}W_h\Pi_{\rm in}=sA_h,
\end{equation}
with zero rows added where needed. Keeping these registers separate preserves the
intermediate projection in the encoding of $L^\dagger L$ and specifies the signal
projectors used in Section~\ref{loc:sec:frames}.

For \(s\ge1/2\), take \(g_s,u_s\) from \eqref{eq:explicit-phases} and
\(\Phi_P(u)=I+(u-1)P\). The full cell is
\begin{equation}\label{loc:eq:full-cell}
u(h)=g_sW_h\Phi_{\Pi_{\rm in}}(u_s)
W_h^\dagger\Phi_{\Pi_{\rm out}}(u_s)W_h.
\end{equation}
Both scalar phases have modulus one.

\begin{lemma}[Local cell bounds]\label{loc:lem:cell}
There are constants $h_0,a,b,\kappa,c_0>0$, independent of the local term and sampled
time, for which the circuit \eqref{loc:eq:full-cell} satisfies
\begin{align}
u(0)P^0&=P^0,\label{loc:eq:cell-zero}\\
\norm{u(h)-u(0)}&\le a\sqrt h,\label{loc:eq:cell-weak}\\
\norm{P^0(u(h)-u(0))P^0}&\le bh,\label{loc:eq:cell-vac}\\
\norm{P^1u(h)P^0},\ \norm{P^0u(h)P^1}
&\le\kappa\sqrt h,\label{loc:eq:cell-off}\\
\norm{\mathcal J_h-e^{h\cL_e}}_\diamond&\le c_0h^2,\label{loc:eq:cell-channel}
\end{align}
for $0\le h\le h_0$. Here $P^1=I-P^0$, and $\mathcal J_h$ is the reduced channel from a
vacuum bath. Both $u(h)$ and its adjoint, with their respective slot orders, are weak
causal cells of weight $O(h)$.
\end{lemma}

\begin{proof}
The full-column estimate \eqref{eq:local-error} gives
\[
\norm{u(h)\Pi_{\rm in}-V_h}\le h^2\Lambda_{\rm loc}^2,
\qquad
\norm{\mathcal J_h-e^{h\cL_e}}_\diamond\le7h^2\Lambda_{\rm loc}^2.
\]
Lemma~\ref{lem:cubic-polar} controls the entire output column, including the
component outside the signal space.

The vacuum row of $V_h$ is $I+hK+O(h^2)$, which gives \eqref{loc:eq:cell-vac}, and its
nonvacuum rows have norm $O(\sqrt h)$. The two off-diagonal blocks of a unitary have
equal norm. Hence \eqref{loc:eq:cell-off} holds for either orientation.

Choose each preparation extension to be the minimal rotation from its reference. Its
deviation from identity is $O(\sqrt h)$; for the scalar phases, the deviation from the
zero-coupling value is $O(h)$. There are constantly many factors, so telescoping proves
\eqref{loc:eq:cell-weak}. When the coupling vanishes, the initialized path takes the
identity role alone. On that role the phases of \eqref{loc:eq:full-cell} have product one,
proving \eqref{loc:eq:cell-zero}.

For the weak causal estimates, replace each free gate by its zero-preparation version
without changing the slots, projectors, or scalar phases. Along this comparison path the
state is clean and never enters an active port. There are only constantly many gates to
telescope, giving $O(\sqrt h)$ bounds on active entry, absorbed clean return, and
nonclean output along the all-inactive path. The reversed calculation applies to
$u(h)^\dagger$.
\end{proof}

Only the vacuum is fixed by $u(0)$. On $P^1$ there may be scattering, different at
different sample times. We keep this part of the cell in the locality and occupation
estimates instead of assuming that zero coupling makes the full unitary the identity.

\subsection{Compilation on occupied inputs}\label{loc:sec:sparse}

A finite sweep uses distinct cell baths $E_1,\ldots,E_M$. Cell $j$ acts on the system
and $E_j$, and later cells leave $E_j$ unchanged. Define
\[
\cN=\sum_{j=1}^M P_j^1,\qquad
P_{\le N}=\mathbf1_{[0,N]}(\cN),\qquad
\Omega=\sum_jw_j.
\]
The occupied locations may be in superposition and entangled with the system.

We use Definition~\ref{def:cell}: each cell has a fixed number of
oracle slots, and its active-entry maps, direct sum of actual query-input maps, absorbed
clean-return maps, and all-inactive nonclean column have norms $O(\sqrt{w_j})$. In the
deformed circuit, every actual oracle call receives a factor $z$, including calls to an
adjoint.

\begin{lemma}[Sparse-input compiler]\label{loc:lem:sparse}
A sweep of weak causal cells can be approximated on $\ran P_{\le N}$ to operator error
$\delta$ using
\[
O\bigl(1+N+\Omega+\log(1/\delta)\bigr)
\]
controlled master queries. The additional compiler work is reset in the target isometry.
Its free operations consist of transducer calls, coefficient preparations, and
reflections about the specified input subspace.
\end{lemma}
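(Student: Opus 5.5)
Our plan is to rerun the generating-function argument of Theorem~\ref{thm:causal-tail}, now with an occupied input, and then feed the resulting tail into the compiler of Theorem~\ref{thm:tail-compiler} with \(P_{\rm in}\) replaced by \(P_{\le N}\) (after the fresh compiler ancillas are initialized). The transducer \(S\), the graph relation \eqref{eq:graph}, strict chronology (Lemma~\ref{lem:strict-chronology}), the reuse filter \eqref{eq:filtered}, and the cubic restoration (Lemma~\ref{lem:cubic-restoration}) use only the chronological structure and the fact that \(V\) is an isometry on the chosen input subspace. They therefore carry over unchanged once \(\Gamma\) is defined on \(\ran P_{\le N}\). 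It then suffices to show
\[
 \norm{A^{mq}\Gamma P_{\le N}}\le C^{q}\,\Bigl(\frac{C(N+\Omega)}{q}\Bigr)^{q/2}
 \quad\text{or at least}\quad \le C\,e^{C(N+\Omega)}\,2^{-q}
\]
for \(q\ge c(1+N+\Omega)\), and to choose \(q=\Theta(1+N+\Omega+\log(1/\delta))\). The geometric form suffices here, because the stated bound has no \(\log\log\) improvement.

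For the local estimate, we decompose the input by occupation pattern. On a cell whose bath \(E_j\) enters in \(P_j^0\), Lemma~\ref{lem:local-polynomial} applies verbatim. On a cell whose bath enters occupied, the weak bounds (i) and (iii) are unavailable. We would instead use the crude bound that every slot map is a contraction, so that on \(|z|=R\) the deformed cell satisfies \(\norm{T_j(z)}\le\rho^m\) and \(\norm{K_j(z)}\le\sqrt m\,\rho^{m-1}\). Since cells act only on the system and their own bath, we write \(T_j(z)=T_j(z)P_j^0+T_j(z)P_j^1\). The occupation projector \(P_j^1\) commutes with all earlier cells and with the initialization, because \(E_j\) has not yet been touched. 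Hence we can expand \(P_{\le N}\) as an orthogonal sum over occupied sets \(S\) with \(|S|\le N\). Each term picks up at most \(\rho^{mN}\) from its occupied cells and \(\exp(\tfrac32a^2\Omega\rho^{2m})\) from the rest. Summing with the final-port direct sum as in \eqref{eq:history-circle-bound} gives
\[
 \sup_{|z|=R}\norm{\Gamma(z)P_{\le N}}\lesssim \sqrt{m(M)}\,\rho^{mN+m}e^{D\Omega\rho^{2m}}.
\]
We must avoid the superposition over sets \(S\) producing a \(\binom MN\) factor. To do this, we bound the operator directly: for fixed \(z\), the product \(\prod_j(T_j(z)P_j^0+T_j(z)P_j^1)\), restricted to \(\cN\le N\), is controlled by inserting the number operator. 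We would first try a weighted norm, \(\lambda^{\cN}\) conjugation with \(\lambda=\rho^{-m}\), making occupied cells contractive and costing \(\rho^{mN}\) at the input only. This is the step we expect to be the main obstacle, since the occupied and clean blocks of one cell do not have orthogonal ranges. A fallback is to use (iv)-type orthogonality, namely that vacuum-input cells send nonclean output into \(P_j^1\), which later cells never touch. This would let the clean/occupied split be tracked as a commuting label.

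Given that bound, Cauchy's estimate at a fixed radius \(R=R_0\ge 2\) (a constant) yields
\[
 \norm{A^K\Gamma P_{\le N}}\le C' \,\rho_0^{mN}e^{D\Omega\rho_0^{2m}}R_0^{-K}.
\]
Taking \(K=\Theta(1+N+\Omega+\log(1/\delta))\) with large enough constants makes this at most \(\min\{\delta/4,1/8\}\). A polynomial prefactor in \(M\), if it survives, would have to be removed, which again points to the weighted-norm route. Theorem~\ref{thm:tail-compiler} then yields \(9K\) controlled calls to \(S\), hence \(O(K)\) master queries. Its filter and amplification steps only use transducer calls, coefficient preparations for the \(-1/2,3/2\) combination and the uniform bin state, and reflections about \(\Pi_R\) built from \(P_{\le N}\) and the work vacuum. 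This matches the stated free operations, and the restoration lemma returns the work registers to their reset state up to the stated error.
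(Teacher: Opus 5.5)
\paragraph{Verdict.} The proposal has a genuine gap. Your outer structure matches the paper's: bound the deformed history at a fixed radius, apply Cauchy's estimate to get a geometric tail, then use Theorem~\ref{thm:tail-compiler} with source projector \(P_{\le N}\). What is missing is the central estimate, a bound on \(\sup_{|z|=2}\norm{\Gamma(z)P_{\le N}}\) independent of the number of cells \(M\). You leave this as the main obstacle, and the route you sketch does not establish it.

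\paragraph{Clean return is available.} You say (iii) is unavailable on an occupied cell, but clean return bounds the map from a released port state to the clean output and says nothing about the bath. So every nonempty slot pattern from an occupied input to the clean output costs \(O(\sqrt{w_j})\). The empty pattern is handled by unitarity of \(T_j(1)\). Its two off-diagonal blocks relative to \(P_j^0\) have equal norm, and the forward block is \(O(\sqrt{w_j})\). Subtracting the nonempty patterns from this block bounds the empty one. Hence \(\norm{P_j^0T_j(z)P_j^1}=O(\sqrt{w_j})\) on \(|z|=2\), and the cross Gram block between the clean and occupied columns is \(O(\sqrt{w_j})\). This is exactly what controls the non-orthogonality you flag. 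Your fallback of tracking output occupation as a commuting label does not address it, because the interference occurs inside the clean output sector.

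\paragraph{The weight needs a margin.} The weight \(\lambda=\rho^{-m}\) only makes the occupied column contractive. Even with the cross bound, the Gram matrix then has diagonal blocks of norm up to \(1+O(w_j)\) and \(1\), and a cross block of norm \(O(\sqrt{w_j})\). That only gives \(1+O(\sqrt{w_j})\) per cell, so the product is \(\exp(O(\sum_j\sqrt{w_j}))\), which is \(\exp(O(\sqrt{M\Omega}))\) on a uniform mesh. The paper instead takes \(D_j=P_j^0+dP_j^1\) with \(d\) large enough that the occupied Gram block of \(\overline T_j=T_j(z)D_j^{-1}\) is at most \(1/4\). This margin absorbs the cross term via \(2\beta xy\le\tfrac34y^2+\tfrac43\beta^2x^2\) with \(\beta=O(\sqrt{w_j})\). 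The result is \(\norm{\overline T_j}\le e^{Cw_j}\), at the cost of a single factor \(d^N\) at the input.

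\paragraph{The history prefactor.} Your bound keeps a factor like \(\sqrt M\), which would make the reuse length grow with \(\log M\). Instead, with \(y=D_{\rm all}x\), write the history ending in cell \(j\) as \(K_jD_j^{-1}(\prod_{i<j}\overline T_i)(\prod_{i>j}D_i^{-1})y\). Then use four facts:
\begin{itemize}
\item \(P_j^1\) commutes with the earlier cells;
\item \(\norm{K_jP_j^0}=O(\sqrt{w_j})\);
\item \(\norm{K_j}=O(1)\);
\item \(\sum_j\norm{P_j^1y}^2\le N\norm y^2\).
\end{itemize}
Together these give \(Cd^Ne^{C\Omega}(\sqrt\Omega+\sqrt N)\), so \(k=O(1+N+\Omega+\log(1/\delta))\) suffices.
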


\begin{proof}
Let $T_j(z)$ denote the deformed full cell. On $|z|=2$, its clean column has Gram norm
at most $e^{Cw_j}$, both off-diagonal blocks are $O(\sqrt{w_j})$, and the full norm is
bounded by a constant. For the reverse off-diagonal block, expand over nonempty slot
subsets and apply the absorbed clean-return condition. Subtracting these contributions
from $T_j(1)$ bounds the empty subset.

Choose a fixed $d>1$ large enough that
\[
D_j=P_j^0+dP_j^1,\qquad
\overline T_j(z)=T_j(z)D_j^{-1}
\]
has occupied-column Gram block at most $1/4$. The cross Gram block is $O(\sqrt{w_j})$,
so it can be absorbed into the margin in the occupied sector. We obtain
\begin{equation}\label{loc:eq:sparse-weight}
\norm{\overline T_j(z)}\le e^{Cw_j}.
\end{equation}
If $w_j$ exceeds a fixed constant, the unrestricted cell bound gives the same estimate
after increasing $C$.

Let $K_j(z)$ be the direct sum of query-input maps within cell $j$. Its clean
restriction is $O(\sqrt{w_j})$, and its unrestricted norm is $O(1)$. Put $D_{\rm
all}=\prod_jD_j$. For $x\in\ran P_{\le N}$, write $y=D_{\rm all}x$. Then
\[
\norm y\le d^N\norm x,\qquad
\sum_j\norm{P_j^1y}^2\le N\norm y^2.
\]
The history component ending in cell $j$ is
\[
K_jD_j^{-1}
\left(\prod_{i<j}^{\leftarrow}\overline T_i\right)
\left(\prod_{i>j}D_i^{-1}\right)y.
\]
The preceding cells commute with $P_j^1$, and the remaining weights are contractions.
Summing squared norms over the orthogonal query ports therefore gives
\begin{equation}\label{loc:eq:sparse-history}
\norm{\Gamma(z)P_{\le N}}
\le C d^N e^{C\Omega}(\sqrt\Omega+\sqrt N),
\qquad |z|=2.
\end{equation}

Write the one-query transducer as
\[
S=\begin{pmatrix}D&C\\B&A\end{pmatrix}.
\]
Its graph relation and chronological history expansion are
\begin{equation}\label{loc:eq:graph}
S\begin{pmatrix}P\\\Gamma P\end{pmatrix}
=\begin{pmatrix}VP\\\Gamma P\end{pmatrix},
\qquad
\Gamma(z)P=\sum_{r\ge0}z^rA^rBP,
\end{equation}
where $P=P_{\le N}$, including the other initialized work. The sum is finite. Cauchy's
estimate at radius two gives
\[
\norm{A^k\Gamma(1)P}
\le C2^{-k}d^Ne^{C\Omega}(\sqrt\Omega+\sqrt N).
\]
Taking $k=O(1+N+\Omega+\log(1/\delta))$ makes this at most $\delta/4$.

Theorem~\ref{thm:tail-compiler} now gives error at most \(\delta\) with at most
\(9k\) transducer calls. This is the same two-length reuse and rectangular
amplification construction proved in Section~\ref{sec:compiler}, with the source
projector enlarged from the vacuum condition to \(P\). The source reflection includes
\(P\), while the output reflection fixes only the added compiler work.
\end{proof}

Reversing the sweep and taking the adjoint of each cell gives the same bound for the
inverse on its own sparse input subspace. We use this construction for each backward
regional factor.

\section{Spatial decomposition on a common bath}\label{loc:sec:locality}

Divide a short interval $[a,a+t]$ into $J$ equal bins of width $h=t/J$, freezing the
local matrices in bin $j$ at
\[
\theta_j=a+(j-\tfrac12)h,\qquad 1\le j\le J.
\]
Give $(e,j)$ its own fresh bath. In each bin use one fixed order of interactions; the
restriction to $R\subseteq E$ is denoted $U_{R,J}$. A regional restriction changes only
which terms are applied, leaving the sampled matrices, unitary extensions, and bath
registers in common.

\subsection{Temporal discretization}\label{loc:sec:temporal-mesh}

First consider a common bin width $h$ for the full evolution, and write $\mathcal
J_h(T)$ for the reduced channel of the forward sweep. We will form the short time
segments by grouping these bins.

\begin{lemma}[Freezing and local-cell error]\label{loc:lem:temporal-mesh}
For $h\le h_0$, the finite channel satisfies
\begin{equation}\label{loc:eq:mesh-channel}
\norm{\mathcal J_h(T)-\mathcal E(T,0)}_\diamond
\le C\bigl(n^2Th+nK_tTh^\alpha+nBh\bigr).
\end{equation}
Neither mesh size nor breakpoint positions enter the constant.
\end{lemma}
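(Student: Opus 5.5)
We split the error by a triangle inequality into three pieces: freezing, product splitting within each bin, and the replacement of each local factor by its cell. Throughout, the only tools are CPTP contractivity and telescoping. Let \(\cL_{\rm fr}(t)=\cL(\theta_j)\) for \(t\) in bin \(j\), with propagator \(\mathcal E_{\rm fr}\).

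\textbf{Freezing.} Variation of constants, as in \eqref{eq:freezing-error}, gives
\[
\norm{\mathcal E(T,0)-\mathcal E_{\rm fr}(T,0)}_\diamond
\le\sum_j\int_{\rm bin\ j}\sum_e\norm{\cL_e(u)-\cL_e(\theta_j)}_\diamond\,du .
\]
Split the bins for each term \(e\) into good and bad ones.
\begin{itemize}
\item A good bin contains no breakpoint of \(e\). There \eqref{loc:eq:holder} gives an integrand at most \(K_t(h/2)^\alpha\). Summing over at most \(T/h\) bins of width \(h\) contributes \(K_tTh^\alpha\).
\item A bad bin contains a breakpoint of \(e\); there are at most \(B\) of them. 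There we only use the normalization bound \(\norm{\cL_e}_\diamond\le2\Lambda_{\rm loc}\), which contributes \(O(h)\) per bin and \(O(Bh)\) in total.
\end{itemize}
Summing over \(|E|\le gn\) terms gives \(O(nK_tTh^\alpha+nBh)\). No breakpoint locations enter, since each term is counted separately.

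\textbf{Splitting and cells.} In bin \(j\), compare \(e^{h\cL(\theta_j)}\) with the ordered product \(\prod_e e^{h\cL_e(\theta_j)}\). Iterating the two-factor Lie--Trotter bound \eqref{eq:adaptive-splitting} over the fixed order, each prefix sum \(\sum_{e'<e}\cL_{e'}\) has diamond norm \(O(n)\) and each \(\cL_e\) has norm \(O(1)\). The error per bin is therefore \(O(h^2\sum_e n)=O(n^2h^2)\), and summing over \(T/h\) bins gives \(O(n^2Th)\).

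Next replace each factor \(e^{h\cL_e(\theta_j)}\) by the reduced cell channel \(\mathcal J_h\) from a fresh vacuum bath. Lemma~\ref{loc:lem:cell}, \eqref{loc:eq:cell-channel}, gives \(c_0h^2\) per factor, hence \(O(nTh)\) in total. This is absorbed into the \(n^2Th\) term once we note \(n\ge1\).

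Each factor's bath is fresh and discarded, so the forward sweep's reduced channel \(\mathcal J_h(T)\) is exactly the ordered product of these reduced channels. A telescoping sum with contractive CPTP neighbours then adds the errors. All constants come from \(\Lambda_{\rm loc}\), \(g\), and \(c_0\), independent of the mesh and the breakpoints.

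\textbf{Main obstacle.} The step needing the most care is the multi-term splitting count: we must check that the commutator bound scales as \(n^2\) rather than worse, and that it does not depend on a non-commuting order. The crude bound \(\norm{[\sum_{e'<e}\cL_{e'},\cL_e]}\le O(n)\) suffices for the stated \(n^2Th\). Sharper, locality-based bounds are not needed here.
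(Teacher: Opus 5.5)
Your proposal is correct and follows essentially the same route as the paper. The paper also uses variation of constants with a per-term split into good bins (H\"older bound at the midpoint) and bad bins (normalization bound), then an $O(n^2h^2)$ Lie-product error and an $O(nh^2)$ cell-replacement error per bin, telescoped over $T/h$ bins by CPTP contractivity. One cosmetic difference: the paper counts bins whose closures meet a breakpoint, giving total bad time at most $2Bh$, whereas your bound of at most $B$ bad bins treats bins as open; this is equally valid, since a breakpoint on a bin boundary has measure zero in the integral.
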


\begin{proof}
Write $\cL_h(s)$ for the midpoint-frozen generator. The variation-of-constants identity reads
\[
\begin{split}
\mathcal E(T,0)-\mathcal E_h(T,0)
={}&\int_0^T\mathcal E(T,s)
[\cL(s)-\cL_h(s)]\\
&\hspace{3.4em}\cdot\mathcal E_h(s,0)\,ds.
\end{split}
\]
The two propagators here are CPTP, hence each has diamond norm one.

Fix a local term. Bins whose closures meet one of its breakpoints cover at most $2Bh$
time; call them bad. On a good bin use $K_th^\alpha$ from \eqref{loc:eq:holder}, and on a
bad bin use the constant normalization bound. The time integral, summed over $|E|=O(n)$
terms, is $O(nK_tTh^\alpha+nBh)$. Breakpoints are counted over the entire evolution for
each term, rather than recounted after dividing time into segments.

For the frozen evolution, the Lie-product error is $O(n^2h^2)$ per bin, and replacing
its local channels by the cells of Lemma~\ref{loc:lem:cell} adds $O(nh^2)$. All these maps
are CPTP. Telescoping over $T/h$ bins therefore gives $O(n^2Th)$, completing the
estimate.
\end{proof}

\subsection{Folding nonstationary passes}\label{loc:sec:transfer}

Let $v_{s,j}$ act on a $d$-dimensional system and bin bath $j$, and write
\[
U_s=v_{s,J}\cdots v_{s,1},\qquad
W_J=U_r^{\eta_r}\cdots U_1^{\eta_1},
\quad \eta_s\in\{+1,-1\}.
\]
Give each pass its own system copy. On bin $j$, put
\[
\widehat v_{s,j}=
\begin{cases}
v_{s,j},&\eta_s=+1,\\
(v_{s,j}^\dagger)^{T_S},&\eta_s=-1,
\end{cases}
\]
where the transpose acts only on system indices. Define
\[
M_j=\bra0\widehat v_{r,j}^{(S_rE_j)}
\cdots\widehat v_{1,j}^{(S_1E_j)}\ket0.
\]
For simple tensors, set
\[
\cC_\eta(A_1\otimes\cdots\otimes A_r)
=A_r^{\sharp_r}\cdots A_1^{\sharp_1},
\]
where $\sharp_s$ is the identity for a forward pass and transpose for a backward pass.
Extend this map linearly.

\begin{lemma}[Finite-pass transfer]\label{loc:lem:transfer}
For every finite $J$, including different cells in different bins,
\begin{equation}\label{loc:eq:transfer}
\bra{\vac}W_J\ket{\vac}
=\cC_\eta(M_J\cdots M_1).
\end{equation}
\end{lemma}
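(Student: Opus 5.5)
The identity is purely combinatorial, so I would prove it by expanding every unitary in matrix units of the bin baths and matching the two sides term by term. For each \(v_{s,j}\) write
\[
 v_{s,j}=\sum_{k,l}\ketbra{k}{l}_{E_j}\otimes v_{s,j}^{kl},
\]
with system operators \(v_{s,j}^{kl}\) on the \(d\)-dimensional system. Then \(v_{s,j}^\dagger\) has bath entry \((v_{s,j}^{lk})^\dagger\) at position \((k,l)\). The backward factor \(\widehat v_{s,j}=(v_{s,j}^\dagger)^{T_S}\) therefore has entries \(((v_{s,j}^{lk})^\dagger)^{T}\), because the transpose acts only on the system indices. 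The argument must hold for arbitrary, different cells in different bins, and nothing in it will use unitarity beyond these expansions.

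\emph{Step 1: expand the left side.} A forward pass is \(U_s=v_{s,J}\cdots v_{s,1}\). A backward pass is \(U_s^{-1}=U_s^\dagger=v_{s,1}^\dagger\cdots v_{s,J}^\dagger\), so it visits the bins in the reverse order. Each pass touches each bath \(E_j\) exactly once, and operators on distinct baths commute as bath operators. Hence \(\bra{\vac}W_J\ket{\vac}\) is a sum over index paths. For each bin \(j\), the path is a sequence \(0=k_{j,0},k_{j,1},\ldots,k_{j,r}=0\), with \(k_{j,s}\) the state of \(E_j\) after pass \(s\).

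The summand is the ordered system product over passes \(s=r,\ldots,1\). Within a forward pass the block is \(\prod_{j=J}^{\leftarrow 1} v_{s,j}^{k_{j,s}k_{j,s-1}}\). Within a backward pass the block is \(\prod_{j=1}^{\to J}(v_{s,j}^{k_{j,s-1}k_{j,s}})^\dagger\), with the leftmost factor at \(j=1\).

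\emph{Step 2: expand the right side with the same paths.} On \(S_1\otimes\cdots\otimes S_r\), the bath contraction in \(M_j\) gives
\[
 M_j=\sum_{k_{j,\cdot}}\bigotimes_{s=1}^r \widehat v_{s,j}^{\,k_{j,s}k_{j,s-1}},
\]
with the same boundary conditions \(k_{j,0}=k_{j,r}=0\). The bath of bin \(j\) is composed in pass order \(1,\ldots,r\), exactly as in \(W_J\). Since the copies \(S_s\) are distinct, the product \(M_J\cdots M_1\) factorizes copywise. On copy \(S_s\) it gives \(\widehat v_{s,J}^{\,\cdot}\cdots\widehat v_{s,1}^{\,\cdot}\), with the same index path as in Step 1.

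For a forward pass this is already the block from Step 1. For a backward pass it equals
\[
 \bigl((v_{s,J}^{\cdot})^\dagger\bigr)^T\cdots\bigl((v_{s,1}^{\cdot})^\dagger\bigr)^T
 =\Bigl((v_{s,1}^{\cdot})^\dagger\cdots(v_{s,J}^{\cdot})^\dagger\Bigr)^T .
\]
Applying \(\sharp_s=T\) therefore recovers precisely the backward block of Step 1, in the reversed bin order.

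\emph{Step 3: apply \(\cC_\eta\).} On simple tensors, \(\cC_\eta\) undoes the transposes on backward copies and multiplies the copies in the order \(r,\ldots,1\). This reproduces the pass order of \(W_J\). Since \(\cC_\eta\) is linear, it can be applied summand by summand over the index paths, and the two expansions agree. This proves \eqref{loc:eq:transfer}.

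The only delicate step is the index and order bookkeeping for backward passes. Three things must be checked together: that the bath index order \((k_{s-1},k_s)\) is swapped by the adjoint, that the reversal of bin order is absorbed by the transpose, and that the system-only transpose does not touch the bath contraction. I would first verify all three on \(J=2\), \(r=2\) with \(\eta=(+1,-1)\), then write the general case. As an alternative, one can argue by induction on \(J\), appending one bin at a time and checking that \(M_{J+1}\) multiplies on the left copywise in the correct orientation.
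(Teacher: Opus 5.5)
Your proof is correct and follows essentially the paper's argument. You expand the local factors in matrix units, use that the system-only transpose reverses the multiplication order on backward passes, factor copywise over the replicated systems, and contract each bath against the vacuum. The only difference is cosmetic: the paper expands in system matrix units, while you expand in bath matrix units and track the ordered system products directly.
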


\begin{proof}
Use system matrix units to expand a local unitary. Transposition for a backward pass
reverses their multiplication while preserving the order of factors on distinct baths.
With one system copy for each pass, different pass/bin factors can be commuted into bin
order. Vacuum contraction then gives $M_J\cdots M_1$, and joining the remaining system
indices in their physical order is precisely $\cC_\eta$. The cells can be chosen
independently in every bin.
\end{proof}

Inside a bin, distinct interactions again have distinct baths, so the oriented sweep
factors into cells by the same argument. These system copies are contraction indices
only; the simulation circuit does not allocate replicated systems.

For $r=2v$ alternating signs, write $\cC_r$ for the contraction. Set
\[
\ket\Omega=\sum_{i=1}^d\ket{ii},\qquad
\ket\Phi=\ket\Omega_{12}\cdots\ket\Omega_{2v-1,2v}.
\]
For $w=\bra\Phi R$,
\[
[\cC_r(R)]_{ki}
=\sum_{j_1,\ldots,j_{v-1}}
w_{i\,j_1\,j_1\,\cdots\,j_{v-1}\,j_{v-1}\,k}.
\]
Cauchy--Schwarz on these $d^{v-1}$ terms gives
\begin{equation}\label{loc:eq:boundary}
\begin{split}
\norm{\cC_r(R)}
&\le\norm{\cC_r(R)}_{\rm F}\\
&\le d^{(v-1)/2}\norm w
\le d^{(r-1)/2}\norm R.
\end{split}
\end{equation}
Here $\norm\Phi=d^{v/2}$, and $\cC_r(I)=I$. On a local support, only its system
dimension is used in \eqref{loc:eq:boundary}.

The exponent cannot be improved for a general contraction of this form. Take
$R=\ket{\Phi/\norm\Phi}\bra z$, with $z$ the normalized equal superposition of the index
strings appearing in the sum for a fixed pair $i,k$. Then $\norm R=1$, while that output
entry is $d^{(r-1)/2}$.

\subsection{One factor of \texorpdfstring{$h$}{h} per interaction}\label{loc:sec:support}

Consider $r$ regional passes, where $r$ is fixed. Expand each oriented local factor as
its zero-coupling value plus its difference. Group terms by the set $S\subseteq E$ on
which at least one difference was selected:
\[
M_j=I+\sum_{\varnothing\ne S\subseteq E}M_{j,S}.
\]
Write $V(S)=\bigcup_{e\in S}Z_e$.

\begin{lemma}[Finite-bin support bound]\label{loc:lem:support}
There is a constant $c_r$, uniform in the sampled times, such that
\begin{equation}\label{loc:eq:one-bin-bound}
\supp M_{j,S}\subseteq V(S),\qquad
\norm{M_{j,S}}\le(c_rh_j)^{|S|}.
\end{equation}
The statement allows unequal bin widths $h_j$. If $R_{J,S}$ collects all terms in
$M_J\cdots M_1$ with union support exactly $S$, then
\begin{equation}\label{loc:eq:support-powers}
\norm{R_{J,S}}
\le\left(\prod_j(1+c_rh_j)-1\right)^{|S|}
\le(e^{c_rt}-1)^{|S|},
\end{equation}
where $t=\sum_jh_j$.
\end{lemma}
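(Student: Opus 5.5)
The plan is to handle each bin by expanding in bath-occupation blocks for each interaction separately. After that, \eqref{loc:eq:support-powers} follows by a combinatorial resummation over bins.

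\emph{Single bin.} In $M_j$ the bath $E_{e,j}$ is touched only by the $r$ oriented factors of interaction $e$, one per pass, and these act on the system copies $S_1,\ldots,S_r$ restricted to $Z_e$. Between consecutive factors of $e$ I will insert $I=P^0_{e,j}+P^1_{e,j}$, and I will do the same before the first factor and after the last. The vacuum contraction then fixes the outer projectors to $P^0$.

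A bath path for $e$ is \emph{trivial} if every block it uses is $P^0\widehat v P^0$ with $\widehat v$ at zero coupling. By \eqref{loc:eq:cell-zero}, and by its adjoint form $u(0)^\dagger P^0=P^0$ used for backward passes, the trivial path contributes exactly the identity on the system copies. Every other path belongs to one of two types:
\begin{itemize}
 \item a pure-vacuum path containing at least one factor $P^0(u(h)-u(0))P^0$, which is $O(h)$ by \eqref{loc:eq:cell-vac};
 \item a path that leaves $P^0$ and later returns to it, which uses at least two off-diagonal blocks, each $O(\sqrt h)$ by \eqref{loc:eq:cell-off}.
\end{itemize}
All remaining blocks are contractions. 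For each $e$ there are at most $2^{r+1}$ such paths.

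I then define $M_{j,S}$ as the sum of all terms in which the set of interactions carrying a nontrivial path is exactly $S$. Interactions outside $S$ contribute identity, so $\supp M_{j,S}\subseteq V(S)$.

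For the norm, I keep the baths $E_{e,j}$ with $e\in S$ uncontracted. I write the contribution as a vacuum-projected product of the ordered local factors, with the chosen bath projectors inserted for each $e\in S$. Bath-block estimates on different baths multiply, because they are operator-norm estimates that hold with arbitrary reference systems. This gives at most $(C2^{r+1}h_j)^{|S|}$.

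\emph{Main obstacle: backward passes.} A backward pass applies the partial transpose $(v^\dagger)^{T_S}$ on system indices, and a partial transpose does not preserve operator norm in general. My remedy is to perform the transpose locally: each local factor acts on $Z_e$, whose dimension $q^{s_0}$ is fixed, so the partial transpose of any block changes its norm by at most a constant factor depending on $q^{s_0}$. I also need the path expansion to commute with the transpose. This is the point I expect to require care. The transposes are taken factor by factor on distinct system copies, as in Lemma~\ref{loc:lem:transfer}, so they commute with the insertion of bath projectors. Hence the per-bath bounds and the identity for the trivial path survive, at the cost of a constant. Absorbing these constants defines $c_r$, which is uniform in sampled times and terms because the constants of Lemma~\ref{loc:lem:cell} are.

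\emph{Many bins.} I expand $M_J\cdots M_1=\prod_j\bigl(I+\sum_{S_j\ne\varnothing}M_{j,S_j}\bigr)$. The terms with union support exactly $S$ are indexed by tuples $(S_1,\ldots,S_J)$ with $\bigcup_jS_j=S$, where empty $S_j$ means the factor $I$. Submultiplicativity bounds $\norm{R_{J,S}}$ by $\sum\prod_j(c_rh_j)^{|S_j|}$. Reindexing the tuple by assigning to each $e\in S$ the nonempty set of bins containing it factorizes the sum as $\prod_{e\in S}\bigl(\sum_{\varnothing\ne B}\prod_{j\in B}c_rh_j\bigr)=\bigl(\prod_j(1+c_rh_j)-1\bigr)^{|S|}$. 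The final inequality in \eqref{loc:eq:support-powers} follows from $1+x\le e^x$.
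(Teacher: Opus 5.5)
Your proposal is correct. Its many-bin step is the paper's own argument: reindexing the tuples $(S_1,\dots,S_J)$ by the nonempty bin sets $B_e$ is the generating-function count $\prod_j\prod_e(1+c_rh_jz_e)$ with $z_e^2=z_e$. You also handle the backward passes as the paper does, with a factor $d_0=q^{s_0}$ per local partial transpose.

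The single-bin step is organized differently. The paper splits every oriented factor into its zero-coupling value plus a difference, and groups terms by the interactions that carry a difference. If an interaction has exactly one difference, the vacuum projectors commute through the zero-coupling factors, which preserve $P^0$ and $P^1$. Only the vacuum block of that difference survives, and \eqref{loc:eq:cell-vac} bounds it. Two or more differences are bounded crudely by \eqref{loc:eq:cell-weak}, which yields $c_r=rd_0b+d_0^r\sum_{u\ge2}\binom ru a^uh_0^{u/2-1}$.

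You instead insert $P^0+P^1$ on each bath and charge excursions out of the vacuum through the off-diagonal bounds \eqref{loc:eq:cell-off}. Your route therefore never uses \eqref{loc:eq:cell-weak}, so you need no control of the occupied-to-occupied block of $u(h)$. The paper notes that this block may scatter differently at different sample times. Since \eqref{loc:eq:cell-off} follows from \eqref{loc:eq:cell-weak} and \eqref{loc:eq:cell-zero}, your hypotheses are strictly weaker. The cost is a larger path count, which only affects the constant.

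The paper's grouping gives one thing for free that you should state explicitly. Your $M_{j,S}$, grouped by nontrivial paths, must coincide with the paper's $M_{j,S}$, grouped by selected differences. This holds because zero-coupling factors commute with $P^0$, so the all-zero-coupling term on $e$ is exactly your trivial path. By multilinearity in each interaction's factors, the two complements agree, and hence so do the two groupings. The identification matters downstream: \eqref{loc:eq:mobius} relies on interactions outside $S$ behaving exactly as if they were omitted.
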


\begin{proof}
If no difference is selected on an interaction, its bath remains in the vacuum and every
zero-coupling factor acts as the identity there. We can remove that interaction from the
product, even when its zero-coupling action on the nonvacuum space varies with time.

If exactly one difference is selected, commute the corresponding bath-vacuum projectors
through the other interactions. The zero-coupling factors preserve both vacuum and its
complement, leaving only the vacuum-to-vacuum block of the selected difference. This has
norm at most $bh_j$ by Lemma~\ref{loc:lem:cell}. With two or more differences, at least two
weak factors are already present.

Let $d_0=q^{s_0}$, and use the constants $a,b,h_0$ of Lemma~\ref{loc:lem:cell}. Partial
transpose on one local system increases operator norm by at most $d_0$. It is sufficient
to take
\[
c_r=rd_0b+
d_0^r\sum_{u=2}^r\binom ru a^u h_0^{u/2-1},
\]
enlarged to at least one. This accounts for every choice of selected differences and the
intervening zero-order factors. The bath projectors belong to distinct tensor factors,
so the estimates apply simultaneously on all selected terms. Their system support is
$V(S)$.

For the ordered transfer product, we bound each product of operators by the product of
its norms. The resulting count of exact union supports is the scalar product
\[
\prod_j\prod_{e\in E}(1+c_rh_jz_e),\qquad z_e^2=z_e.
\]
Selecting a given interaction at least once has coefficient $\prod_j(1+c_rh_j)-1$. These
choices are independent for distinct interaction labels, giving
\eqref{loc:eq:support-powers}.
\end{proof}

\subsection{A complete spatial word}\label{loc:sec:layers}

Choose an integer spacing $\ell>\rho$. Along each open coordinate, place cuts between
sites at spacing $\ell$, and color them alternately left and right. For coordinate $a$,
let $E_A^{(a)}$ exclude the terms crossing right cuts, let $E_C^{(a)}$ exclude those
crossing left cuts, and put $E_B^{(a)}=E_A^{(a)}\cap E_C^{(a)}$.

For periodic coordinates of length $L_a\ge2\ell$, use
$2\lfloor L_a/(2\ell)\rfloor$ cuts with integer spacings between $\ell$ and $2\ell$, alternating cyclically. If $L_a<2\ell$, use no cut in that coordinate. A local support crosses a cut when its specified coordinate interval or arc crosses that cut.

For a retained interaction set $R$, define
\begin{equation}\label{loc:eq:recursive-word}
\begin{split}
\mathcal W_0(R)&=U_{R,J},\\
\mathcal W_a(R)&=
\mathcal W_{a-1}(R\cap E_A^{(a)})\\
&\quad\cdot\mathcal W_{a-1}(R\cap E_B^{(a)})^\dagger
\mathcal W_{a-1}(R\cap E_C^{(a)}).
\end{split}
\end{equation}
Every factor in this definition acts on the original common bath.

\begin{lemma}[Spatial layout]\label{loc:lem:layout}
The expanded word $\mathcal W_D(E)$ has $r_0=3^D$ passes with alternating forward and
backward orientations, beginning and ending forward. Each pass is a product of unitaries
on disjoint boxes of side at most $4\ell$. Every individual bath is visited at most
$r_0$ times.
\end{lemma}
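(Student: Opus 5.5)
The plan is an induction on the coordinate index \(a\), tracking three things at once: the number of passes, the orientation pattern, and the spatial structure of each pass.

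\emph{Pass count and orientation.} Define the orientation sequence \(\sigma_a\) of \(\mathcal W_a(R)\) as the ordered list of signs of its \(U_{\cdot,J}\) factors. The recursion \eqref{loc:eq:recursive-word} gives
\[
\sigma_a=\sigma_{a-1}\,\overline{\sigma_{a-1}}^{\,\rm rev}\,\sigma_{a-1},
\]
because taking the adjoint of a word reverses its factor order and flips every sign. The base case is \(\sigma_0=(+)\), and the length triples at each step, so \(|\sigma_D|=3^D\).

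For the alternation claim I will use the inductive hypothesis that \(\sigma_{a-1}\) alternates and begins and ends with \(+\). Its reversed negation then begins and ends with \(-\) and also alternates. At each of the two junctions a \(+\) meets a \(-\), so the concatenation alternates and still begins and ends with \(+\).

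\emph{Visits per bath.} Each bath \((e,j)\) is touched only by the passes that retain \(e\). There are at most \(3^D\) passes in total, and within any single pass \(U_{R',J}\) each bath \((e,j)\) is used at most once, because the fixed bin order applies each interaction once per bin. Hence every bath is visited at most \(r_0=3^D\) times.

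\emph{Box structure of each pass.} A pass in \(\mathcal W_D(E)\) is \(U_{R',J}\) with
\[
R'=\bigcap_{a=1}^D E_{X_a}^{(a)},\qquad X_a\in\{A,B,C\}.
\]
Fix a coordinate \(a\). For \(X_a=A\) the terms crossing right cuts are removed. The right cuts are spaced \(2\ell\) apart in the open case, and up to \(4\ell\) apart in the periodic case, where consecutive spacings lie in \([\ell,2\ell]\). The slabs between consecutive right cuts therefore have width at most \(4\ell\), and no retained term crosses a slab boundary. The same holds for \(C\) with left cuts, and \(B\) is finer than both. If \(L_a<2\ell\) there are no cuts, but then the whole coordinate has length below \(2\ell\le4\ell\).

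Intersecting over all coordinates partitions the lattice into boxes of side at most \(4\ell\), and every term of \(R'\) has support inside a single box. Its support diameter is at most \(\rho<\ell\), so it crosses no remaining cut.

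\emph{Main obstacle.} The step needing care is showing that \(U_{R',J}\) factors as a product over these disjoint boxes even though the interaction order is fixed across the whole lattice. I will argue it as follows. Terms in different boxes act on disjoint systems and on distinct baths, so they commute. Reordering within each bin therefore groups the factors by box, which gives a tensor product of box unitaries bin by bin, and hence for the whole product over bins. For periodic wrap-around, a support whose arc crosses the boundary is still assigned to the single slab containing its arc, so the argument goes through unchanged.
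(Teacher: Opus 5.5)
Your proposal is correct and follows the same route as the paper. It uses induction on the recursion (an adjoint reverses factor order and flips orientations), the cut spacings (intervals of length at most $2\ell$ for open and $4\ell$ for periodic coordinates) to place every retained support in one box, and commutation of sweeps on disjoint system and bath registers, which also gives one visit per bath per pass; the remark that $\rho<\ell$ makes a retained term cross no remaining cut is unnecessary, since containment in a single box already follows from the definitions of $E_A^{(a)},E_B^{(a)},E_C^{(a)}$, but it does no harm.
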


\begin{proof}
Taking an adjoint reverses the order of a subword and changes every orientation. Thus an
odd alternating word that begins and ends forward becomes one that begins and ends
backward. Induction in \eqref{loc:eq:recursive-word} gives the claimed $3^D$-pass pattern.

Each pass retains the intersection of one set from $E_A^{(a)},E_B^{(a)},E_C^{(a)}$ in
every coordinate. The remaining cuts bound intervals of length at most $2\ell$ for open
boundaries and $4\ell$ for periodic boundaries, so each retained support lies in one of
the resulting boxes. Different boxes have disjoint system and bath registers. Their
sweeps therefore commute and can be run in parallel, with each bath visited at most once
in the pass.
\end{proof}

We will estimate the overlap $U_{E,J}^\dagger\mathcal W_D(E)$ directly. It has $r=3^D+1$
alternating passes, so the preceding transfer bounds apply to the whole product.

\subsection{Connected-support cancellation}\label{loc:sec:cuts}

Connect two labels in $E$ when their supports intersect. This interaction graph has degree at most
\[
\Delta_I\le s_0(g-1).
\]
For $Q\subseteq E$, define the restricted physical overlap
\[
F(Q)=\bra{\vac}U_{Q,J}^\dagger\mathcal W_D(Q)\ket{\vac}.
\]
Replacing all other cells by zero-coupling factors gives the same vacuum overlap as
omitting them. Hence
\begin{equation}\label{loc:eq:mobius}
F_S=\cC_r(R_{J,S})
=\sum_{Q\subseteq S}(-1)^{|S|-|Q|}F(Q).
\end{equation}

\begin{lemma}[Cancellation on small supports]\label{loc:lem:cancel}
Exact-support coefficients factor over the connected components of the interaction graph
induced by $S$. A nonempty connected coefficient vanishes if, in every coordinate, $S$
misses at least one cut color. In particular,
\[
\begin{gathered}
F_S=0\quad\text{for connected }S\text{ with }|S|<m_\ell,\\
m_\ell=\lfloor\ell/\rho\rfloor+1.
\end{gathered}
\]
\end{lemma}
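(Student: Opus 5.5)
The proof has three parts. First, a factorization property of the restricted overlap $F(Q)$. Second, an exact telescoping of the recursive word \eqref{loc:eq:recursive-word} when a cut color is missing. Third, a geometric counting argument for small connected supports. Throughout I use only the Möbius form \eqref{loc:eq:mobius}, $F_S=\sum_{Q\subseteq S}(-1)^{|S|-|Q|}F(Q)$, and the fact that every factor of $\mathcal W_D$ acts on the original common bath.

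\emph{Factorization.} Suppose $S=S_1\sqcup S_2$ with no edge of the interaction graph between $S_1$ and $S_2$. Then the supports of $S_1$ and $S_2$ are disjoint, and so are their cell baths.
\begin{itemize}
\item For $Q\subseteq S$, write $Q_i=Q\cap S_i$. Every unitary in $U_{Q,J}$ and in $\mathcal W_D(Q)$ is a product of commuting factors on the registers of $Q_1$ and of $Q_2$.
\item The restrictions $R\cap E_A^{(a)}$, $R\cap E_B^{(a)}$ and $R\cap E_C^{(a)}$ commute with this splitting. By induction on $a$, $\mathcal W_a(Q)=\mathcal W_a(Q_1)\mathcal W_a(Q_2)$ with commuting factors.
\item Hence $F(Q)=F(Q_1)\otimes F(Q_2)$, as operators on disjoint system factors.
\item The Möbius sum over $Q\subseteq S$ is a product of independent sums over $Q_1\subseteq S_1$ and $Q_2\subseteq S_2$, since $|S|-|Q|=(|S_1|-|Q_1|)+(|S_2|-|Q_2|)$. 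So $F_S=F_{S_1}\otimes F_{S_2}$.
\end{itemize}
Iterating over connected components gives the first claim.

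\emph{Vanishing.} Let $S$ be connected, nonempty, and missing a cut color in every coordinate. This property passes to every $Q\subseteq S$. I will prove by induction on $a$ that $\mathcal W_a(R)=U_{R,J}$ for every $R\subseteq S$. Assume it holds at level $a-1$.
\begin{itemize}
\item If no term of $S$ crosses a right cut in coordinate $a$, then $R\cap E_A^{(a)}=R$ and $R\cap E_B^{(a)}=R\cap E_C^{(a)}$. The last two factors of $\mathcal W_a(R)$ are $\mathcal W_{a-1}(R\cap E_C^{(a)})^\dagger\mathcal W_{a-1}(R\cap E_C^{(a)})=I$. These are genuine unitaries on the common bath, so the cancellation is exact, and $\mathcal W_a(R)=\mathcal W_{a-1}(R)$.
\item If instead no term of $S$ crosses a left cut, then $R\cap E_C^{(a)}=R$ and $E_B$ coincides with $E_A$ on $R$. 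Now the first two factors cancel.
\end{itemize}
Consequently $U_{Q,J}^\dagger\mathcal W_D(Q)=I$, so $F(Q)=I$ for every $Q\subseteq S$. The alternating sum $\sum_{Q\subseteq S}(-1)^{|S|-|Q|}$ vanishes for $S\neq\varnothing$, which gives $F_S=0$.

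\emph{Small supports.} It remains to show that a connected $S$ with $|S|=k\le\lfloor\ell/\rho\rfloor$ cannot cross both colors in any coordinate. Consecutive cuts have spacing at least $\ell$.
\begin{itemize}
\item Crossing a left cut between sites $x,x+1$ and a right cut at least $\ell$ further requires the union support to contain $x$ and $x+\ell+1$. That is a coordinate extent of at least $\ell+1$.
\item A connected chain of $k$ supports, each of coordinate diameter at most $\rho$ and overlapping its neighbour in a site, has union extent at most $k\rho\le\ell$.
\end{itemize}
So only one color can be crossed, and the previous paragraph applies.

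The main obstacle is the periodic case. There the arcs and the cyclic alternation of cuts must be checked: spacings lie between $\ell$ and $2\ell$, and there are no cuts when $L_a<2\ell$. Since every pair of adjacent cuts of opposite color is still at distance at least $\ell$, and $k\rho\le\ell<L_a$ prevents wrapping, the same extent argument goes through on the arc containing the union support.
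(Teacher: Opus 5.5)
Your proposal is correct and follows the paper's proof essentially step for step. It factorizes $F(Q)$ and the M\"obius sum over components with disjoint supports and baths, cancels the mutually inverse subwords exactly on the common bath when a cut color is missing (so $F(Q)=I$ for every $Q\subseteq S$ and the alternating sum vanishes), and compares the union extent $k\rho\le\ell$ with the length greater than $\ell$ needed to cross cuts of opposite colors, with your explicit treatment of cut-free and periodic coordinates only spelling out what the paper states briefly.
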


\begin{proof}
Different connected components have disjoint physical supports and baths. The restricted
words and their vacuum overlaps therefore factor over these components, as does the
subset sum in \eqref{loc:eq:mobius}.

Suppose $S$ misses the right cuts in coordinate $a$. Then
\[
S\cap E_A^{(a)}=S,\qquad
S\cap E_B^{(a)}=S\cap E_C^{(a)}.
\]
The last two subwords in \eqref{loc:eq:recursive-word} are mutual inverses and cancel,
leaving $\mathcal W_{a-1}(S)$. If $S$ misses the left cuts, the first two cancel.
Repeating this in every coordinate gives $\mathcal W_D(S)=U_{S,J}$. Every subset of $S$
has the same property, so its nonempty M\"obius difference vanishes.

For $k$ connected local terms, the union of their coordinate intervals is connected and
has length at most $k\rho$. This also holds for local arcs in a periodic coordinate
until their union covers the circle. Opposite cut colors are separated by at least
$\ell$, and crossing both requires length strictly greater than $\ell$, since the cuts
lie between sites. Covering the periodic circle requires at least this length as well.
Hence a contributing support must satisfy $k\rho>\ell$.
\end{proof}

\begin{proposition}[Uniform lattice decomposition]\label{loc:prop:layers}
Let $r=3^D+1$, and put
\[
\begin{gathered}
\chi=q^{(r-1)/2},\qquad A_I=\max\{1,\Delta_I^2\},\\
\xi=A_I\chi^{s_0-1}(e^{c_rt}-1).
\end{gathered}
\]
For $\xi<1$, set
\[
B_*=\frac{|E|\chi\,\xi^{m_\ell}}{1-\xi}.
\]
Then, uniformly in the sampled operator values and the finite mesh,
\begin{equation}\label{loc:eq:layer-bound}
\norm{(U_{E,J}-\mathcal W_D(E))P_{\vac}}
\le\sqrt{2(e^{B_*}-1)}.
\end{equation}
\end{proposition}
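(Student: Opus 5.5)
The plan is to reduce the isometry error to a vacuum overlap and then expand that overlap over exact supports. Both $U_{E,J}$ and $\mathcal W_D(E)$ are unitary on system and common bath. So for a normalized input $\ket{\vac}\ket\psi$, with $\ket\psi$ possibly entangled with a reference,
\[
\norm{(U_{E,J}-\mathcal W_D(E))\ket{\vac}\ket\psi}^2
=2-2\operatorname{Re}\bra\psi F(E)\ket\psi ,
\]
where $F(E)=\bra{\vac}U_{E,J}^\dagger\mathcal W_D(E)\ket{\vac}$ is the restricted overlap of Section~\ref{loc:sec:cuts} at $Q=E$. Since $\operatorname{Re}\bra\psi F\ket\psi\ge1-\norm{F-I}$, it suffices to prove
\[
\norm{F(E)-I}\le e^{B_*}-1 .
\]

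Next I expand $F(E)$ by exact union support. The product $U_{E,J}^\dagger\mathcal W_D(E)$ is an alternating word of $r=3^D+1$ passes, by Lemma~\ref{loc:lem:layout}. Lemma~\ref{loc:lem:transfer} and the grouping of Lemma~\ref{loc:lem:support} therefore give
\[
F(E)=\sum_{S\subseteq E}F_S,\qquad F_\varnothing=\cC_r(I)=I,\qquad F_S=\cC_r(R_{J,S}),
\]
consistent with \eqref{loc:eq:mobius}. By Lemma~\ref{loc:lem:cancel}, each $F_S$ factors as a product of the coefficients $F_{C}$ over the connected components $C$ of $S$. These factors act on disjoint supports, so they commute. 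The connected coefficients vanish unless $|C|\ge m_\ell$. Summing over all families of pairwise disjoint, mutually nonadjacent connected sets and bounding each product by the product of norms gives
\[
\norm{F(E)-I}\le\prod_{C}\bigl(1+\norm{F_C}\bigr)-1\le\exp\Bigl(\sum_{C\ \rm connected}\norm{F_C}\Bigr)-1 .
\]

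It remains to show that $\sum_C\norm{F_C}\le B_*$. Here I apply \eqref{loc:eq:boundary} only on the support $V(C)$, which is legitimate because $R_{J,C}$ is supported on $V(C)$ by Lemma~\ref{loc:lem:support}. A connected set of $k$ terms covers at most $1+(s_0-1)k$ sites, since each added term shares a site with an earlier one. Hence the dimension factor is at most $\chi^{1+(s_0-1)k}$. Combined with \eqref{loc:eq:support-powers}, this gives
\[
\norm{F_C}\le\chi\bigl(\chi^{s_0-1}(e^{c_rt}-1)\bigr)^{k}.
\]
For the count, the number of connected $k$-subsets of a graph of degree $\Delta_I$ containing a fixed label is at most $A_I^{k}$. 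I would prove this by encoding a spanning tree via a depth-first walk, which yields a bound like $(e\Delta_I)^{k-1}$, and I would check that it is dominated by $\max\{1,\Delta_I^2\}^k$, adjusting to $\Delta_I^{2k}$ if needed. Summing over a root label among $|E|$ labels and over $k\ge m_\ell$ gives
\[
\sum_C\norm{F_C}\le|E|\chi\sum_{k\ge m_\ell}\xi^k=B_*
\]
for $\xi<1$. Since all bounds depend only on the constants $a,b,h_0$ of Lemma~\ref{loc:lem:cell}, uniformity in the sampled values and the mesh follows.

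I expect the main obstacle to be the component factorization step. To control $F_S$ by the product of $\norm{F_C}$, one has to check that the replicated-system contraction $\cC_r$ of a tensor product of operators on disjoint supports equals the product of the separate contractions. That requires the transposes and index joinings in $\cC_r$ to act locally site by site, and this is what I would verify first from the explicit formula for $\cC_r$. A secondary point is the constant in the connected-set count, where the exact choice $A_I=\max\{1,\Delta_I^2\}$ must absorb the combinatorial factor.
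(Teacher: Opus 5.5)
Your proposal is correct and follows the paper's proof essentially step for step: the unitarity reduction to $\norm{I-F(E)}$, the component factorization from Lemma~\ref{loc:lem:cancel}, the bound $\norm{F_S}\le\chi[\chi^{s_0-1}(e^{c_rt}-1)]^k$ via the site count $1+(s_0-1)k$, the rooted connected-set count, and the product-to-exponential estimate. The factorization you flag as the main obstacle is already supplied by Lemma~\ref{loc:lem:cancel}, which the paper gets from the restricted overlaps $F(Q)$ on disjoint supports and baths, not through $\cC_r$; and in the count, the depth-first walk of length $2(k-1)$ with at most $\max\{1,\Delta_I\}$ choices per step gives $A_I^{k-1}\le A_I^k$ directly, whereas $(e\Delta_I)^{k-1}$ comes from a different argument and is not dominated by $A_I^k$ when $\Delta_I\in\{1,2\}$, so use the walk count.
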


\begin{proof}
A connected set of $k$ interactions covers at most $1+(s_0-1)k$ sites: start with one
support and add the others along a spanning tree, with each new support intersecting its
predecessor. Equations \eqref{loc:eq:boundary} and \eqref{loc:eq:support-powers} therefore give
\[
\norm{F_S}
\le\chi\left[\chi^{s_0-1}(e^{c_rt}-1)\right]^k.
\]

There are at most $|E|A_I^k$ connected sets of $k$ interactions. Choose the least label
in a set as its root and fix a canonical spanning tree. A depth-first traversal has
length $2(k-1)$ and visits exactly that set, with at most $\max\{1,\Delta_I\}$ choices
at each step. Counting these walks bounds the number of sets.

Lemma~\ref{loc:lem:cancel} removes all components except those with $k\ge m_\ell$, whose
norms sum to at most $B_*$. The coefficients for disconnected supports factor into their
connected components. Dropping the disjointness restriction from the sum over
collections of components gives
\[
\norm{I-F(E)}
\le\prod_{\text{connected }S}(1+\norm{F_S})-1
\le e^{B_*}-1.
\]
Since both physical words are unitary on the common bath,
\begin{equation}\label{loc:eq:isometry-overlap}
\begin{split}
\norm{(U_{E,J}-\mathcal W_D(E))P_{\vac}}^2
&=\norm{2I-F(E)-F(E)^\dagger}\\
&\le2\norm{I-F(E)}.
\end{split}
\end{equation}
which proves \eqref{loc:eq:layer-bound}.
\end{proof}

Choose a fixed ${t_\star}>0$ such that
\[
A_I\chi^{s_0-1}(e^{c_r{t_\star}}-1)\le\tfrac12.
\]
For times at most ${t_\star}$, joint error $\eta$ is then obtained with
\[
\ell=O\bigl(\rho[1+\log(n/\eta)]\bigr).
\]
Each region contains $O(\ell^D)$ sites and local terms. The estimate applies to the
complete $3^D$-pass word on the initial vacuum, so the occupied baths encountered inside
that word require no separate vacuum-input approximation.

On a nearest-neighbor chain, connected sets are intervals, with at most $n$ choices at
each length. Thus $D=1$ gives the three-pass construction and allows $A_I=1$ in the
count. In higher dimensions, the bounded-degree interaction graph supplies the
corresponding estimate.

\section{Occupation at regional boundaries}\label{loc:sec:physical-occupation}

The bath starts in the vacuum, but earlier regional factors may leave it occupied. To
use the sparse-input compiler at each subsequent factor, we need a bound on the
occupation at its input.

Write one cell relative to its bath vacuum as
\[
u=\begin{pmatrix}A&B\\C&D\end{pmatrix},
\qquad
\norm B,\norm C\le\kappa\sqrt h.
\]
For $d\ge1$, put $D_d=P^0+dP^1$. Then
\[
D_duD_{2d}^{-1}
=\begin{pmatrix}A&B/(2d)\\dC&D/2\end{pmatrix}.
\]

\begin{lemma}[Weighted physical cell]\label{loc:lem:physical-weight}
The unitary cell satisfies
\begin{equation}\label{loc:eq:physical-weight}
\norm{D_duD_{2d}^{-1}}
\le\exp\left(\tfrac23d^2\kappa^2h\right).
\end{equation}
\end{lemma}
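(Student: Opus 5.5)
The plan is to compute the Gram operator of \(M:=D_duD_{2d}^{-1}\) directly in the vacuum/occupied block decomposition and use only the unitarity relations of \(u\). The block form is displayed just before the lemma. The estimate should be uniform: it should use nothing beyond \(\norm C\le\kappa\sqrt h\) and the unitarity of \(u\), so that it holds with arbitrary system and old-register states attached.

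\textbf{Step 1: expand the quadratic form.} Take an input vector \(x\oplus y\), with \(x\) in the vacuum sector and \(y\) in the occupied sector. Then
\[
\norm{M(x\oplus y)}^2=\norm{Ax+\tfrac{1}{2d}By}^2+\norm{dCx+\tfrac12Dy}^2 .
\]
The column relations of the unitary \(u\) are
\[
A^\dagger A+C^\dagger C=I,\qquad B^\dagger B+D^\dagger D=I,\qquad A^\dagger B+C^\dagger D=0 .
\]
Using them, the three pieces of the form become:
\begin{itemize}
\item vacuum block: \(x^\dagger\bigl(I+(d^2-1)C^\dagger C\bigr)x\le(1+d^2\kappa^2h)\norm x^2\);
\item occupied block: \(y^\dagger\bigl(\tfrac{1}{4d^2}B^\dagger B+\tfrac14(I-B^\dagger B)\bigr)y\le\tfrac14\norm y^2\), because \(d\ge1\);
\item cross term: \(2\operatorname{Re}\,x^\dagger\bigl(\tfrac1{2d}A^\dagger B+\tfrac d2C^\dagger D\bigr)y=(d-\tfrac1d)\operatorname{Re}\,x^\dagger C^\dagger Dy\), after substituting \(A^\dagger B=-C^\dagger D\).
\end{itemize}
Since \(\norm D\le1\), the cross term is at most \(d\kappa\sqrt h\,\norm x\norm y\) in absolute value.

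\textbf{Step 2: absorb the cross term.} This is the only delicate point, because the choice of weights fixes the constant \(\tfrac23\). Apply \(2ab\le\epsilon a^2+b^2/\epsilon\) so that the cross term uses exactly the margin \(\tfrac34\) left in the occupied block:
\[
d\kappa\sqrt h\,\norm x\norm y\le\tfrac13d^2\kappa^2h\,\norm x^2+\tfrac34\norm y^2 .
\]
Adding the three pieces gives
\[
\norm{M(x\oplus y)}^2\le\bigl(1+\tfrac43d^2\kappa^2h\bigr)\norm x^2+\norm y^2\le\exp\!\bigl(\tfrac43d^2\kappa^2h\bigr)\norm{x\oplus y}^2 .
\]
Taking square roots yields \eqref{loc:eq:physical-weight}.

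\textbf{Step 3: uniformity.} No smallness of \(h\) is needed. The inputs \(\norm C\le\kappa\sqrt h\) from \eqref{loc:eq:cell-off} and unitarity are both operator-norm statements that remain valid with reference systems attached. The bound therefore holds uniformly in the sampled term and time, including when the system and older baths are entangled with the occupied sector.
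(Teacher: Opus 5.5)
Your proposal is correct and is essentially the paper's argument. You use the same Gram-block computation, with $A^\dagger B=-C^\dagger D$ reducing the cross block to a multiple of $C^\dagger D$, the same diagonal bounds $I+d^2\kappa^2h$ and $I/4$ (the latter via $B^\dagger B+D^\dagger D=I$ and $d\ge1$), and the same split $\tfrac13 d^2\kappa^2h\|x\|^2+\tfrac34\|y\|^2$ of the cross term, giving Gram norm at most $1+\tfrac43 d^2\kappa^2h$.
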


\begin{proof}
Unitarity gives $A^\dagger B=-C^\dagger D$. The upper Gram block is at most
$I+d^2\kappa^2hI$, the lower one is at most $I/4$, and the cross block has norm at most
$d\kappa\sqrt h/2$. For its quadratic form, absorb the cross term with
\[
2\beta xy\le\tfrac34y^2+\tfrac43\beta^2x^2.
\]
The Gram norm is at most $1+\tfrac43d^2\kappa^2h$. Taking the square root and using
$1+x\le e^x$ proves the claim.
\end{proof}

Fix an interaction label $e$, and let $\cN_e$ count its occupied time-bin baths. The
spatial word visits each bin at most $r_0=3^D$ times. Assign an initial weight that can
be halved at each visit and still finish at two; the largest output weights needed are
$2^{r_0},\ldots,4,2$. All operations on other baths commute with these weights,
including those whose system supports overlap.

Telescoping \eqref{loc:eq:physical-weight} through any physical prefix gives
\begin{equation}\label{loc:eq:physical-tail}
\begin{split}
\norm{2^{\cN_e}V_{\rm pre}P_{\vac}}&\le e^{C_{\rm occ}},\\
\norm{\mathbf1_{\cN_e>k}V_{\rm pre}P_{\vac}}
&\le2^{-k}e^{C_{\rm occ}},
\end{split}
\end{equation}
with
\[
C_{\rm occ}=\tfrac23\kappa^2{t_\star}\sum_{s=1}^{r_0}4^s
=\tfrac89\kappa^2{t_\star}(4^{r_0}-1).
\]
uniformly in the mesh. The initial weights act as the identity on vacuum, and fewer
visits can only lower the estimate. For $D=1$, this constant is $56\kappa^2{t_\star}$.

For a region with at most $p$ interaction labels, define
\[
P_{\rm good}=\prod_{e\text{ in region}}\mathbf1_{\cN_e\le k}.
\]
The projectors commute, and
\[
I-P_{\rm good}\preceq
\sum_{e\text{ in region}}\mathbf1_{\cN_e>k}.
\]
Consequently
\begin{equation}\label{loc:eq:regional-tail}
\norm{(I-P_{\rm good})V_{\rm pre}P_{\vac}}
\le\sqrt p\,2^{-k}e^{C_{\rm occ}}.
\end{equation}
On the good subspace, the regional bath has occupation at most $N_0=pk$.
Lemma~\ref{loc:lem:sparse} therefore applies at every regional boundary. This estimate is
coherent and uniform over the input system and its reference.

\section{A gate implementation of the transducer}\label{loc:sec:router}

A cell's free gates touch only a constant-size ancilla; its controlled SELECT is the
sole operation on the system. We exploit this form to avoid a sequential traversal of
all $M$ cells. Common normalization bounds make the free gates identical at every
sample. The addressed SELECT contains all the time dependence.

\subsection{Explicit singular frames}\label{loc:sec:frames}

Let $R_A,R_B$ be the preparation rotations for $g_A,g_B$. If every active SELECT in
$W_h$ is replaced by its inactive projector, only stack zero and the identity role
remain. Writing
\[
\ket l=\ket0\otimes R_B^\dagger\ket I,\qquad
\ket r=R_A^\dagger\ket0\otimes R_B^\dagger\ket I,
\]
gives the all-inactive block
\begin{equation}\label{loc:eq:dw}
D_W=\ket l\bra r\otimes I_{\rm aux}.
\end{equation}
where the auxiliary space contains the retained jump label and the separate work
registers of Section~\ref{loc:sec:local-dilation}.

Let $P_{\rm in}^{\rm a}$ fix every auxiliary register and let $P_{\rm out}^{\rm a}$ fix
every work register but leave the retained jump label free. They are commuting
projectors with $P_{\rm in}^{\rm a}\preceq P_{\rm out}^{\rm a}$. Since
\[
|\langle0,I|r\rangle|^2=s_B/c,\qquad
|\langle0,I|l\rangle|^2=1/s_B,
\]
the all-inactive block of the full corrected cell is
\begin{equation}\label{loc:eq:dcell}
D_{\rm cell}=\ket l Z\bra r,
\end{equation}
where
\[
\begin{split}
Z={}&g_s\left(I+(u_s-1)\frac{s_B}{c}P_{\rm in}^{\rm a}\right)\\
&\hspace{1.1em}\cdot
\left(I+(u_s-1)\frac1{s_B}P_{\rm out}^{\rm a}\right).
\end{split}
\]
This follows by substituting \eqref{loc:eq:dw} into the all-inactive product
\[
g_sD_W\Phi_{\rm in}D_W^\dagger\Phi_{\rm out}D_W.
\]

Since $Z$ is scalar on three known auxiliary sectors, we can choose its singular frames
explicitly. Take the minimal selector rotations from the reference state to $r$ and $l$,
placing the phases of the three scalars in the latter rotation. Then
\begin{equation}\label{loc:eq:local-svd}
D_{\rm cell}=U_L\,\diag(c_\beta)\,U_R^\dagger,
\qquad 0\le c_\beta\le1.
\end{equation}
with the nonzero $c_\beta$'s equal to the moduli of these scalars. The remaining
selector directions have singular value zero.

At $h=0$, the scalar on the full vacuum is $g_1u_1^2=1$, so these frames satisfy
\begin{equation}\label{loc:eq:public-frame-weak}
\begin{gathered}
U_\star(0)\ket0=\ket0,\qquad \star=L,R,\\
\norm{U_\star(h)-U_\star(0)}=O(\sqrt h).
\end{gathered}
\end{equation}
For sufficiently small $h_0$, the nonzero sector scalars remain bounded away from zero
and their phases can be chosen continuously. This explicit choice gives
\eqref{loc:eq:public-frame-weak} even at degenerate singular values, where an arbitrary
singular-vector routine would not ensure it.

Let $S_j^\circ$ be the local free transducer, with a private port for each of the three
macro SELECT calls. Each macro uses a constant number of supplied local encodings. After
padding unused private directions by identities, the private dimension is at least the
public ancilla dimension, and a cosine--sine decomposition consistent with
\eqref{loc:eq:local-svd} gives
\begin{equation}\label{loc:eq:local-csd}
S_j^\circ=
(U_{L,j}\oplus V_{L,j})\,
\mathsf C_j\,
(U_{R,j}^\dagger\oplus V_{R,j}^\dagger).
\end{equation}
For a bath basis state $\beta$, the factor $\mathsf C_j$ rotates the public rail with
one private bright rail, with cosine $c_\beta$. It preserves $\beta$ and fixes the other
private directions.

We can complete this decomposition directly from the local blocks. If $B,C$ are the
lower-left and upper-right blocks of $S_j^\circ$, the bright columns for $c_\beta<1$ are
\[
\frac{BU_R\ket\beta}{\sqrt{1-c_\beta^2}},
\qquad
-\frac{C^\dagger U_L\ket\beta}{\sqrt{1-c_\beta^2}}.
\]
Unitarity makes these columns orthonormal. Choose the remaining left private columns in
their orthogonal complement and apply the adjoint of the lower-right block to obtain the
right columns. This also covers $c_\beta=1$ and the dark subspace.

All matrices in this local free transducer have fixed dimension. Their entries are fixed
algebraic functions of $\sqrt h$, determined by the common normalizations rather than
the sampled Hamiltonian and jump matrices. On a sufficiently small positive interval,
each nonzero pivot in the construction has finite power-law order in $h$; zero sectors
can be treated exactly. Evaluating the matrices to error $\nu$ and synthesizing the
fixed-size gates therefore costs $\poly(\log(1/h)+\log(1/\nu))$. Throughout this
completion, we keep the public frames chosen in \eqref{loc:eq:public-frame-weak}.

\subsection{Changing frames across a sweep}\label{loc:sec:cascade}

Suppose the regional sweep has distinct baths $1,\ldots,M$. Define public frames
\[
E_{\rm in}=\bigotimes_jU_{R,j},\qquad
E_{\rm out}=\bigotimes_jU_{L,j}.
\]
On private port $j$, use
\begin{equation}\label{loc:eq:private-frame}
E_{{\rm p},j}=
\left(\bigotimes_{i<j}U_{L,i}\right)
\otimes V_{L,j}\otimes
\left(\bigotimes_{i>j}U_{R,i}\right),
\end{equation}
and put $E_{\rm p}=\bigoplus_jE_{{\rm p},j}$. Here the local private factor includes its port type.

\begin{lemma}[Sweep frame identity]\label{loc:lem:gauge}
Let $O_j$ be the direct sum of the local query actions at cell $j$. In these
coordinates, the complete transducer is
\begin{equation}\label{loc:eq:gauged-transducer}
\begin{split}
S'&=(E_{\rm out}^\dagger\oplus E_{\rm p}^\dagger)
S(E_{\rm in}\oplus E_{\rm p})\\
&=(\mathsf C_M\cdots\mathsf C_1)
\left(I_{\rm pub}\oplus
\bigoplus_jV_{R,j}^\dagger O_jV_{L,j}\right).
\end{split}
\end{equation}
\end{lemma}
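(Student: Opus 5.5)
Write the routing circuit of the sweep as the cut-open product \eqref{eq:cut-open-transducer}, \(S=S^\circ(I\oplus\widehat O)\), with \(S^\circ=(F_M\oplus I)X_M\cdots X_1(F_0\oplus I)\). Regroup it cell by cell, so that \(S^\circ=\widetilde S_M^\circ\cdots\widetilde S_1^\circ\). Here \(\widetilde S_j^\circ\) is the local free transducer \(S_j^\circ\) acting on the system, public ancilla \(j\), and private port \(j\), extended by the identity on the other public ancillas and on the other ports. This regrouping is legitimate because distinct cells use distinct baths and distinct private ports, and each cell's free gates touch only its own constant-size ancilla. The system is touched only inside the query actions \(O_j\), which sit on the private side. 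Substituting the decomposition \eqref{loc:eq:local-csd} into each factor gives
\[
 \widetilde S_j^\circ=(\widetilde U_{L,j}\oplus\widetilde V_{L,j})\,\widetilde{\mathsf C}_j\,(\widetilde U_{R,j}^\dagger\oplus\widetilde V_{R,j}^\dagger).
\]
Here the tildes mean that each factor acts on public ancilla \(j\) and port \(j\), and as the identity on everything else.

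The key step is a telescoping of frames. The frame change at cell \(j\) has two parts. Its public right frame \(U_{R,j}^\dagger\) acts on ancilla \(j\), which no earlier cell touches. It therefore commutes leftward past \(\widetilde S_{j-1}^\circ\cdots\widetilde S_1^\circ\) and is absorbed into \(E_{\rm in}\). Its public left frame \(U_{L,j}\) likewise commutes past all later cells and is absorbed into \(E_{\rm out}\).

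The private frames take more care. A state released at port \(j\) re-enters the sweep through \(\widetilde{\mathsf C}_j\) only. Its public ancillas \(i<j\) have already been processed and sit in the left frames \(U_{L,i}\), while ancillas \(i>j\) are not yet processed and sit in the right frames \(U_{R,i}\). This is exactly the mixed frame \(E_{{\rm p},j}\) in \eqref{loc:eq:private-frame}. I would prove the identity by induction on \(M\). The inductive step peels off cell \(M\) and checks that conjugating the direct sum \(I_{\rm pub}\oplus\bigoplus_{j<M}(\cdots)\) by \(U_{R,M}\), which acts on a spectator factor of the earlier ports, commutes with everything in those ports.

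With the frames removed, what remains of each cell is \(\mathsf C_j\), up to the private factor. The remaining private factors \(V_{L,j}\) and \(V_{R,j}^\dagger\) surround the query action, since \(S=S^\circ(I\oplus\widehat O)\). After conjugation by \(E_{\rm p}\) they become \(V_{R,j}^\dagger O_jV_{L,j}\) on port \(j\). This needs the right-hand \(V_{R,j}^\dagger\) of \(\mathsf C_j\)'s decomposition to be transferred through the frame on the port. It also needs \(O_j\) to commute with the spectator public frames on other ancillas, which holds because \(O_j\) acts on the system and on port-\(j\) labels only.

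I expect the main obstacle to be the ordering of \(\mathsf C_M\cdots\mathsf C_1\) against the placement of the query factor on the right. The identity claims that all oracle actions occur before the product of rotations. In the chronological transducer, however, cell \(j\)'s private output is fed to later cells only in the next transducer call. So within one call of \(S\), port \(j\)'s input is queried and then rotated by \(\mathsf C_j\) alone, and the later \(\mathsf C_i\), \(i>j\), act trivially on port \(j\).

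I would verify this by checking the supports of the factors. \(\mathsf C_i\) mixes only public rail \(i\) and port \(i\), and \(O_j\) lives on port \(j\). Hence \(O_j\) commutes with \(\mathsf C_i\) for \(i\ne j\), and the query factor can be moved to the far right. A second point to check is that the system register, shared by the queries, does not obstruct this commutation. The free gates are system-independent by the common normalizations, so the only system operator is \(O_j\), and any system action inside \(\mathsf C\) is the identity.
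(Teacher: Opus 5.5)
Your proposal is correct and follows essentially the same route as the paper. Both arguments telescope the mixed public/private frames cell by cell, substitute the local cosine--sine decomposition so that $\mathsf C_j$ remains with $V_{R,j}^\dagger V_{L,j}$ on port $j$, move the port-$j$ factors to the right past $\mathsf C_i$ for $i\ne j$, and cancel the spectator frame factors against $O_j$. One point of wording should be tightened. Private port $i$ holds a copy of every other bath, so a public frame on bath $j$ commutes past an earlier cell $i$ only if it also acts on the bath-$j$ copy inside port $i$; it is therefore absorbed into $E_{\rm in}$ (or $E_{\rm out}$) together with factors of the $E_{{\rm p},i}$, not into the public frame alone. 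Your mixed-frame paragraph and induction step already supply exactly this.
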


\begin{proof}
Before cell $j$, use the public frame $U_L$ on earlier baths and $U_R$ on that cell and
the later baths. Passing through the cell changes only the $j$-th factor, to $U_{L,j}$.
Substitution of \eqref{loc:eq:local-csd} leaves $\mathsf C_j$, followed on private port $j$
by $V_{R,j}^\dagger V_{L,j}$. The intervening public frames cancel.

A factor supported on private port $j$ commutes with $\mathsf C_i$ whenever $i\ne j$, so
move these factors to the right. In the transformed master query, the frame factors in
\eqref{loc:eq:private-frame} outside bath $j$ commute with $O_j$ and cancel as well. What
remains is
\[
V_{R,j}^\dagger V_{L,j}
V_{L,j}^\dagger O_jV_{L,j}
=V_{R,j}^\dagger O_jV_{L,j}.
\]
\end{proof}

Thus the tensor products defining the private frames need not be applied as large
physical operations. Equation \eqref{loc:eq:gauged-transducer} realizes their effect with
gates on the addressed cell.

The same change of coordinates preserves the graph relation \eqref{loc:eq:graph}:
\begin{equation}\label{loc:eq:transformed-graph}
V'=E_{\rm out}^\dagger VE_{\rm in},\qquad
\Gamma'=E_{\rm p}^\dagger\Gamma E_{\rm in},\qquad
P'=E_{\rm in}^\dagger PE_{\rm in}.
\end{equation}
Unitary conjugation of the private block preserves its history-tail norm; padding adds
only directions with zero history amplitude.

One must also transform the source condition. The reflection associated with $P'$ is
\begin{equation}\label{loc:eq:source-reflection}
R_{P'}=E_{\rm in}^\dagger R_PE_{\rm in},
\end{equation}
with the compiler work constrained to its initialized state. Before the compiled $V'$
apply $E_{\rm in}^\dagger$; afterwards apply $E_{\rm out}$ to recover physical
coordinates.

There is one use of the source reflection in rectangular amplification. Thus
\eqref{loc:eq:source-reflection} accounts for two tensor-product frames, and the input and
output account for two more. Further transducer calls do not increase this number. The
remaining bath gates preserve occupation or touch one coherently addressed cell.

\subsection{Inhomogeneous dyadic routing}\label{loc:sec:dyadic}

Fix a bath configuration. On the public rail and private time labels, write $g_j$ for
\[
\begin{split}
g_j\ket{\rm pub}&=c_j\ket{\rm pub}+s_j\ket j,\\
g_j\ket j&=-s_j\ket{\rm pub}+c_j\ket j,
\end{split}
\qquad s_j=\sqrt{1-c_j^2}.
\]
Thus \eqref{loc:eq:gauged-transducer} contains $g_{M-1}\cdots g_0$; dark ports see the
identity. We add identity cells until $M\ge2$ is a power of two.

For a split into adjacent intervals $L,R$, set
\[
\begin{aligned}
r_L&=\prod_{j\in L}c_j,&
r_R&=\prod_{j\in R}c_j,\\
s_L&=\sqrt{1-r_L^2},&
s_R&=\sqrt{1-r_R^2},
\end{aligned}
\]
and $s_*=\sqrt{1-r_L^2r_R^2}$. When $s_*>0$, define
\begin{equation}\label{loc:eq:split-bases}
\begin{split}
M_{\rm in}&=\frac1{s_*}
\begin{pmatrix}r_Rs_L&s_R\\s_R&-r_Rs_L\end{pmatrix},\\
M_{\rm out}&=\frac1{s_*}
\begin{pmatrix}s_L&r_Ls_R\\r_Ls_R&-s_L\end{pmatrix}.
\end{split}
\end{equation}
If $s_*=0$, use identity matrices. Both subintervals then have trivial rotations.

\begin{lemma}[Dyadic factorization]\label{loc:lem:dyadic}
There are time-register unitaries $W_{\rm in},W_{\rm out}$, each with one multiplexed
rotation per binary level, such that
\begin{equation}\label{loc:eq:dyadic}
g_{M-1}\cdots g_0
=\widehat W_{\rm out}\,
\Rot\left(\prod_jc_j\right)_{{\rm pub},0}
\,\widehat W_{\rm in}^\dagger.
\end{equation}
They preserve the bath configuration and act identically on dark ports.
\end{lemma}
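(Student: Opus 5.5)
The plan is to prove \eqref{loc:eq:dyadic} by induction over the dyadic tree of time intervals, keeping the bath configuration fixed throughout so that all \(c_j\) are scalars.

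\textbf{Inductive claim.} For each dyadic interval \(I\) with leftmost label \(f(I)\), write \(g_I=\prod^{\leftarrow}_{j\in I}g_j\) and \(r_I=\prod_{j\in I}c_j\). I would show
\[
 g_I=W_{{\rm out},I}\,\Rot(r_I)_{{\rm pub},f(I)}\,W_{{\rm in},I}^\dagger ,
\]
where \(W_{{\rm in},I}\) and \(W_{{\rm out},I}\) are orthogonal maps on \(\operatorname{span}\{\ket j:j\in I\}\) that fix \(\ket{\rm pub}\). Here \(\Rot(c)_{{\rm pub},k}\) denotes the rotation with cosine \(c\) in the plane of \(\ket{\rm pub}\) and \(\ket k\), of the same form as \(g_k\). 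For a single label the claim is trivial, with identity frames.

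\textbf{Inductive step.} For \(I=L\cup R\) with \(L\) earlier, \(g_I=g_Rg_L\). Substitute the two inductive forms. The frames for \(L\) and \(R\) act on disjoint label sets and fix \(\ket{\rm pub}\), so they commute with the other interval's rotation. This gives
\[
 g_I=(W_{{\rm out},L}\oplus W_{{\rm out},R})\,G\,(W_{{\rm in},L}\oplus W_{{\rm in},R})^\dagger ,
\]
where \(G=\Rot(r_R)_{{\rm pub},f(R)}\Rot(r_L)_{{\rm pub},f(L)}\) is a real orthogonal \(3\times3\) matrix on \(\ket{\rm pub},\ket{f(L)},\ket{f(R)}\) and is the identity elsewhere.

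\textbf{The \(3\times3\) computation.} Compute \(G\ket{\rm pub}\) and \(G^{\mathsf T}\ket{\rm pub}\). Their pub components are both \(r_Lr_R\). Their normalized components orthogonal to pub are the first columns of \(M_{\rm out}\) and \(M_{\rm in}\), respectively. This uses \(r_R^2s_L^2+s_R^2=s_L^2+r_L^2s_R^2=s_*^2\), which I would verify directly from \(s_L^2=1-r_L^2\) and \(s_R^2=1-r_R^2\).

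The second columns of \(M_{\rm in}\) and \(M_{\rm out}\) are the dark directions. Since \(G\) is orthogonal and maps pub's complement pair to pub's complement pair, it must send the input dark vector to \(\pm\) the output dark vector. One then checks that the sign in \eqref{loc:eq:split-bases} makes this exactly \(+\). The result is
\[
 G=M_{\rm out}\,\Rot(r_Lr_R)_{{\rm pub},f(L)}\,M_{\rm in}^\dagger ,
\]
where \(M_{\rm in}\) and \(M_{\rm out}\) act on the pair \((\ket{f(L)},\ket{f(R)})\); both are real symmetric involutions.

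\textbf{Recursion and level structure.} The frames satisfy
\[
 W_{{\rm in},I}=(W_{{\rm in},L}\oplus W_{{\rm in},R})M_{\rm in},
 \qquad
 W_{{\rm out},I}=(W_{{\rm out},L}\oplus W_{{\rm out},R})M_{\rm out}.
\]
Unrolling this recursion, all intervals at a given binary level contribute pair rotations on \((f(L),f(R))\) for disjoint pairs of labels. Their angles depend only on the products of \(c_j\) over the interval, so each level is one rotation multiplexed on the interval index, which gives one multiplexed rotation per level. At the root, \(f=0\) and the rotation is \(\Rot(\prod_jc_j)_{{\rm pub},0}\), which is \eqref{loc:eq:dyadic}.

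\textbf{Remaining properties.} If \(s_*=0\), then \(r_L=r_R=1\), so both subproducts are the identity and identity \(M\)'s suffice. A dark port has \(c_j=1\), so its \(g_j\) is the identity; this is consistent with the statement that the \(W\)'s act identically on dark ports. The rotation angles are computed from the bath configuration, which is only read as a control and never changed, so the bath configuration is preserved.

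\textbf{Expected obstacle.} The main difficulty is the sign and orientation bookkeeping in the \(3\times3\) step. One has to confirm that the specific reflections in \eqref{loc:eq:split-bases}, rather than rotations, send the dark vector to the dark vector with a \(+\) sign, and that the left and right frames compose in the stated order. I would settle this first by explicitly multiplying \(M_{\rm out}^{\mathsf T}GM_{\rm in}\) on the three basis vectors.
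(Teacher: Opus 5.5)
Your proposal is correct and is essentially the paper's proof: induction over adjacent dyadic halves, the same $3\times3$ matrix $G=\Rot(r_R)\Rot(r_L)$ reduced to $\Rot(r_Lr_R)\oplus1$ by $1\oplus M_{\rm out}$ and $1\oplus M_{\rm in}$, and one multiplexed pair operation per binary level, selected by the interval prefix and the bath configuration. Two small corrections: with the sign convention of $g_j$, the normalized off-pub part of $G^{\mathsf T}\ket{\rm pub}$ is minus the first column of $M_{\rm in}$ (the explicit multiplication you plan confirms the identity and the $+$ sign on the dark vector); and ``dark ports'' means the non-bright private directions, on which the $W$'s are the identity by construction, not labels with $c_j=1$, where the $M$'s can act nontrivially (e.g.\ $c_0=1$, $c_1<1$ makes both $M$'s the swap).
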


\begin{proof}
For one label, the identity is just the definition of $g_0$. Suppose it holds for two
adjacent halves. After reducing each half, the product on the public rail and the two
representatives is
\[
\begin{pmatrix}
r_Rr_L&-r_Rs_L&-s_R\\
s_L&r_L&0\\
s_Rr_L&-s_Rs_L&r_R
\end{pmatrix}.
\]
Multiplying on the left by $1\oplus M_{\rm out}^\dagger$ and on the right by $1\oplus
M_{\rm in}$ gives $\Rot(r_Lr_R)\oplus1$, completing the induction.

At binary level $a$, the two representatives differ in time bit $a$, with every lower
bit zero. The higher bits identify the interval. We can therefore perform the basis
change by rotating bit $a$, controlled on the lower bits and with its angle selected by
the higher prefix and bath configuration.
\end{proof}

If all $c_j$'s agree, \eqref{loc:eq:split-bases} is the homogeneous identity of
\cite{ChenGates}. Allowing them to differ is what lets us use the factorization on
occupied bath configurations.

Suppose at most $K$ bath bins are occupied. All empty bins have the same cell
coefficients and hence the same cosine $c_{\vac}$. The interval product is therefore
\begin{equation}\label{loc:eq:interval-product}
r_I=c_{\vac}^{|I|-k_I}
\prod_{\substack{j\in I\\j\ {\rm occupied}}}c_{\beta_j}.
\end{equation}
Scan the \(K\) records for the occupied factors. For the vacuum contribution, count the
baseline cells of each fixed type and compute the corresponding integer powers; padding
has factor one. The cost is polynomial in \(K,\log M\), precision, and, when variable,
the number of cell types.

In Lemma~\ref{loc:lem:dyadic}, the target bit does not enter the controls. We can therefore
compute flags and rotation data in work registers, rotate the target, and reverse that
computation.

\subsection{Precision of the routing circuit}\label{loc:sec:routing-precision}

An almost inactive interval gives small denominators in the split formula. To control
them, first discretize the elementary angles. This sets a lower bound on every
denominator that is not exactly zero.

Set $c_\beta=\cos\theta_\beta$, with $0\le\theta_\beta\le\pi/2$, and round the
elementary angles to a grid of spacing
\[
\gamma=\Theta(\delta/M),
\]
with both endpoints fixed. Telescoping the $M$ rotations, each perturbed by $O(\gamma)$,
gives $O(\delta)$ total error. This is why we round the angles: a bound from rounding
only sines would deteriorate near $\theta=\pi/2$.

A nonzero rounded angle is at least a constant times $\gamma$, and hence every
nontrivial interval satisfies
\[
\sqrt{1-r_I^2}\ge c\gamma
\]
for an absolute $c>0$. An exactly inactive interval can be recognized from its baseline
type and occupied records, in which case we set $r_I=1$ exactly. The other divisions and
square roots in \eqref{loc:eq:split-bases} then have condition numbers polynomial in
$1/\gamma$.

Evaluate each interval product to absolute error
\[
O(\delta\gamma^3/\log M),
\]
allowing a further constant-power margin for the chosen arithmetic circuit. We compute
integer powers by repeated squaring, taking the local roundoff smaller by a factor $M$.
This requires
\[
b=O\bigl(\log(M/\delta)\bigr).
\]
bits. Approximating each of the $O(\log M)$ multiplexed rotations to error
$O(\delta/\log M)$ gives total error $O(\delta)$, uniformly over valid record
configurations and coherent controls. The arithmetic can be made reversible with
polynomial overhead. On invalid record strings, flag the input and use identity routing.

It follows that a controlled free transducer call, or its adjoint, uses
\begin{equation}\label{loc:eq:router-cost}
\poly\bigl(K,\log M,\log(1/\delta)\bigr)
\end{equation}
gates, plus the addressed local query. To realize the latter for $p$ patch terms,
multiplex their descriptions. From the cell address obtain the interaction and bin,
compute the time, and evaluate the matrices. Apply the controlled encodings and
uncompute the data. Neither address is changed, so the construction acts coherently on
superpositions. It adds polynomial cost in $p$, address length, and precision.

\section{Coherent operations on the compressed bath}\label{loc:sec:bath}

We still need to implement the public frames in \eqref{loc:eq:transformed-graph}. Each is a
product over all time bins, so applying its factors separately would reintroduce a
dependence on the mesh size. We implement each product directly on the compressed bath
register, using a circuit that also works on superpositions of occupied inputs.

We represent a bath configuration by a list of occupied labels $(j,\alpha)$. In each
record, $j$ gives the cell address and $1\le\alpha<d_{\rm b}$ labels the nonvacuum
state. The local bath dimension $d_{\rm b}$ is fixed, and each address $j$ can occur at
most once in a configuration.

\subsection{Two equivalent record encodings}\label{loc:sec:records}

A sorted encoding stores at most $K$ records in increasing address order, with dummy
entries filling the remaining positions. A symmetric encoding uses $K$ slots, each
containing either a dummy $\perp$ or a record. For a configuration with $N\le K$
distinct records $x_1,\ldots,x_N$, define
\begin{equation}\label{loc:eq:symmetric-encoding}
\mathcal I_K\ket{\{x_1,\ldots,x_N\}}
=\frac{\displaystyle\sum_{\pi\ {\rm distinct}}
\ket{\pi(x_1,\ldots,x_N,\perp^{K-N})}}
{\sqrt{K!/(K-N)!}}.
\end{equation}
Different configurations give orthogonal states. The map extends linearly to arbitrary
internal-state superpositions.

Let $P_{\hc}$ reject slot strings with two non-dummy records at the same address.
Pairwise comparisons compute this flag with $O(K^2\log M)$ elementary classical
operations. The valid encoding is the symmetric part of $\ran P_{\hc}$.

\begin{lemma}[Record conversion and addressed access]\label{loc:lem:records}
The sorted and symmetric encodings admit coherent conversion with
$\poly(K,\log M,\log(1/\delta))$ gates to error $\delta$. A unitary on the system and one coherently addressed bath can be applied in the same cost, in addition to its local gate cost, with a unitary guard at occupation $K$.
\end{lemma}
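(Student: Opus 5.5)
The plan is to reduce both claims to standard reversible classical subroutines on a register of \(K\) slots, each of \(O(\log M+\log d_{\rm b})\) bits, followed by a symmetrization or desymmetrization step. This step is the only genuinely quantum part of the conversion.

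\emph{Sorted to symmetric.} Starting from the sorted string \((x_1,\ldots,x_N,\perp^{K-N})\), first compute \(N\) into a work register by scanning for dummies. Then prepare a uniform superposition over the \(K!/(K-N)!\) injective placements of the \(N\) ordered records into \(K\) slots. This is done slot by slot. For \(i=1,\ldots,N\), prepare a uniform superposition over the \(K-i+1\) free slot indices, controlled on \(N\). The amplitude \(1/\sqrt{K-i+1}\) is synthesized to error \(\delta/K\) with \(\poly(\log K,\log(1/\delta))\) gates, and the chosen index is stored in a position register. Next apply a controlled permutation network, for instance a Beneš network of \(O(K\log K)\) swaps controlled by bits computed from the position registers, to move record \(i\) to its chosen slot.

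The garbage consists of the position registers. These are uncomputed because they are a reversible function of the output slot string: for each \(i\), the position of the \(i\)-th smallest address is found by comparisons, at cost \(O(K^2\log M)\). Uncomputing the prepared index superpositions in reverse order then returns the ancillas to zero. The resulting state is exactly \(\mathcal I_K\) applied to the configuration, because distinct configurations and distinct placements give orthogonal strings and every injective placement occurs once with equal amplitude.

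\emph{Symmetric to sorted.} This is the adjoint of the circuit above. On valid inputs, that is, on the symmetric part of \(\ran P_{\hc}\), the adjoint maps \(\mathcal I_K\ket{\{x\}}\) back to the sorted string with clean ancillas, by unitarity and the exactness of the forward map up to the rotation errors. Linearity over internal-state superpositions is automatic, since all controls act on addresses and labels only.

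\emph{Addressed access.} Work in the sorted encoding, converting with the circuit above if the symmetric form is needed, and pay the conversion cost twice. Given a coherent address \(j\), scan the \(K\) sorted slots and compute a flag and index \(p\) giving whether \(j\) occurs and where. If it occurs, swap the local bath value \(\alpha\) into a \(d_{\rm b}\)-dimensional work register. If it does not occur, load the vacuum value \(0\) into that register.

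Apply the given unitary on system \(\otimes\) this register, which is the local gate cost. Then write the result back. If the output is vacuum and the record existed, delete the record and shift later records left. If the output is nonvacuum and the record was absent, insert \((j,\alpha)\) at its sorted position and shift later records right. The fullness case, where \(N=K\) and an insertion is required, is handled by the unitary guard: on that branch the write-back is replaced by an identity, and a flag is set.

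The hard step is reversibility of the write-back, because the flag "record existed" must be uncomputed. I would handle this by making the write-back a single unitary on the pair (sorted list, work register). This unitary is the controlled cyclic swap between the states "record \((j,\beta)\) present" and "work register holds \(\beta\), record absent." It is well defined as an involution on the computational basis for each fixed \(j\), because the address is not altered. With it, the local unitary is conjugated by this involution, and all index and flag registers are recomputed from \((j,\text{list})\) and erased. Each operation costs \(O(K\log M)\) comparisons and shifts. Together with the rotation synthesis, the total is \(\poly(K,\log M,\log(1/\delta))\).
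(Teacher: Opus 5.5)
Your conversion circuit and your extract--apply--reinsert scheme are essentially the paper's. Successive positions are chosen from ranges of size $K,K-1,\ldots,K-N+1$ under control of $N$, the ordered records are moved into place, and the choice data are erased by recomputing them from the output positions and address ranks; the reverse direction is the adjoint. Addressed access swaps the value of record $j$ into a temporary register, applies the local unitary, and reinserts a nonvacuum output. One small correction: do the erasure entirely by recomputation from the output. Running the range-state preparations backwards on registers that are entangled with the slot string would not reset them.

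The gap is the guard. You trigger it on ``$N=K$ and an insertion is required,'' and on that branch you replace only the write-back by the identity. Whether an insertion is required depends on the output of the local unitary, and that output is in superposition. Consider a branch with $K$ records at addresses other than $j$. There the local unitary has already acted on the vacuum work register, and its nonvacuum component is left stranded in that register together with your flag. So the result is not a unitary on the system and record registers with the work reset. Your swap involution cannot repair this: a state with $K$ other records and a nonvacuum work register has no partner, so the involution must fix it.

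The lemma's unitary guard is used later in Lemma~\ref{loc:lem:cutoff}, which needs guarded operations to be unitary, to change occupation by at most one, and to differ from the physical operation only on the top sector. That requires conditioning on a quantity the local unitary leaves unchanged. The paper conditions on the number of records whose address differs from $j$. If this number equals $K$, the entire extract--apply--reinsert is replaced by the identity. The count is computed from $j$ and the list before the operation and is identical afterwards, so its flag can be uncomputed. On the complementary branch at most $K-1$ other records are present, so either a vacuum or an occupied output fits, and the guarded operation agrees with the physical one below occupation $K$.
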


\begin{proof}
For $N$ records, prepare a uniform injection of the ordered labels into $K$ positions.
Choose successive positions from ranges of size $K,K-1,\ldots,K-N+1$. These range states
can be prepared coherently under control of $N$, with polynomial cost in $K$ and the
precision, without storing a list of permutations.

Move each ordered label to its chosen position and fill the rest with dummies. The
output positions and ranks of the distinct addresses determine the injection, so compute
it from the output and erase the choice registers. Each injection has equal amplitude,
giving \eqref{loc:eq:symmetric-encoding}. Reversing the construction returns a valid
symmetric state to the sorted encoding. Repeated dummies are treated as
indistinguishable.

To access address $j$, work in the sorted encoding. Extract its value into a temporary
$d_{\rm b}$-dimensional register, removing the record if present. Apply the local
unitary to the temporary register and the system, and reinsert the output if it is
nonzero. We retain $j$ throughout. The other addresses determine its insertion rank,
allowing all lookup work to be uncomputed.

When the other records already number $K$, use the identity. This condition is
unaffected by the local unitary, so it defines a unitary guard. Otherwise the array has
room for either a vacuum or occupied output, and the guarded operation agrees with the
physical one below the upper occupation boundary.
\end{proof}

The reversible circuits specify an action on the full register space, including invalid
encodings. We bound their approximation error in operator norm on the valid subspace.

\subsection{The generator of a product of weak basis changes}\label{loc:sec:frame-generator}

Consider a public frame
\[
F(h)=\bigotimes_{j=1}^M U_j(h)
\]
satisfying \eqref{loc:eq:public-frame-weak}. First separate the zero-coupling factors
$U_j(0)$. They fix the vacuum and change only the internal states of occupied records,
so applying them costs polynomially many gates in $K$ and $\log M$.

For the remaining factors, take the small Hermitian logarithm
\[
U_j(0)^\dagger U_j(h)=e^{-ih_j},\qquad
\norm{h_j}=O(\sqrt h).
\]
After decreasing $h_0$, these logarithms can all be evaluated by the uniformly
convergent power series around the identity. Since different factors act on distinct
bins, their product is the exponential of $\sum_jh_j$. For a forward frame, apply this
weak rotation first, followed by the zero-coupling factors.

Write each local generator as a scalar, a creation column, and an occupied block:
\[
a_j=\bra0h_j\ket0,\qquad
b_j=P_j^1h_j\ket0,\qquad
q_j=P_j^1h_jP_j^1-a_jP_j^1.
\]
The scalar $a_{\rm tot}=\sum_ja_j$ contributes a known phase, which we retain also for a
controlled frame operation. Define
\begin{equation}\label{loc:eq:collective-column}
\ket b=\sum_{j,\alpha}(b_j)_\alpha\ket{j,\alpha},
\qquad
B=\norm b=\left(\sum_j\norm{b_j}^2\right)^{1/2}.
\end{equation}
For repeated local cells,
\[
B=O(\sqrt{Mh})=O(\sqrt\Omega).
\]
and the state $b/B$ can be prepared from the constant-size internal column and a uniform
time label. Several fixed cell types can be handled by including the $O(p)$ type data in
this preparation.

Let $\widehat N$ count non-dummy slots, and set
\[
f(N)=
\begin{cases}
(K-N)^{-1/2},&N<K,\\
0,&N=K.
\end{cases}
\]
Define
\begin{equation}\label{loc:eq:slot-creation}
T=P_{\hc}
\left(\sum_{p=1}^K\ket b\bra\perp_p\right)
f(\widehat N)P_{\hc},
\end{equation}
where the one-slot operator is placed on slot $p$. Let $q$ be the one-record operator
with blocks $q_j$ and zero on the dummy. Put
\begin{equation}\label{loc:eq:slot-generator}
\widehat H_K=T+T^\dagger+
P_{\hc}\left(\sum_{p=1}^Kq^{(p)}\right)P_{\hc}.
\end{equation}

\begin{lemma}[Exact compressed generator]\label{loc:lem:slot-generator}
On the valid encoded subspace,
\begin{equation}\label{loc:eq:generator-intertwining}
\widehat H_K\mathcal I_K
=\mathcal I_K
P_{\le K}\left(\sum_jh_j-a_{\rm tot}I\right)P_{\le K}.
\end{equation}
The valid symmetric subspace is invariant under $\widehat H_K$.
\end{lemma}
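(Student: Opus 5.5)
We check \eqref{loc:eq:generator-intertwining} separately on each configuration basis state. Because both sides are linear, it suffices to take a configuration \(x=\{x_1,\ldots,x_N\}\) with \(N\le K\) distinct addresses. On the physical side, \(\sum_jh_j-a_{\rm tot}I\) splits into three pieces: the creation parts \(b_j\) acting on vacuum bins, the annihilation parts \(b_j^\dagger\) acting on occupied bins, and the occupied blocks \(q_j\). The vacuum diagonal scalar is removed by \(a_{\rm tot}\), and the occupied bins keep their own diagonal part inside \(q_j\). We then compare each piece with \(T\), \(T^\dagger\), and \(\sum_pq^{(p)}\) applied to \(\mathcal I_K\ket x\).

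\textbf{Diagonal piece.} The one-slot operator \(q\) vanishes on dummies and acts as \(q_j\) on a record at address \(j\). Summing over slots therefore applies \(q_{j_i}\) to each record in every permuted string. This commutes with the symmetrization, so \(\sum_pq^{(p)}\mathcal I_K=\mathcal I_K\sum_jq_j\). The result stays in \(\ran P_{\hc}\), since \(q_j\) maps occupied states of bin \(j\) to occupied states of the same bin. This gives the \(q\) part exactly.

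\textbf{Creation piece.} Acting with \(\sum_p\ket b\bra\perp_p\) on \(\mathcal I_K\ket x\) places \(\ket b\) into each of the \(K-N\) dummy slots of every string. Expanding \(\ket b=\sum_{j,\alpha}(b_j)_\alpha\ket{j,\alpha}\), the term with address \(j\) already present in \(x\) is killed by \(P_{\hc}\). This matches the physical fact that \(b_j\) acts only on the vacuum of bin \(j\). For a fixed new record \(y=(j,\alpha)\) with \(j\) free, each injection string of \(x\cup\{y\}\) arises exactly once: choose the slot of \(y\), and the remainder is a string for \(x\). The resulting coefficient is \((b_j)_\alpha/\sqrt{K!/(K-N)!}\), while the target \(\mathcal I_K\ket{x\cup y}\) has normalization \(\sqrt{K!/(K-N-1)!}\). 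The ratio is \(\sqrt{K-N}\), which \(f(\widehat N)=(K-N)^{-1/2}\) cancels. When \(N=K\), \(f=0\), and this matches the truncation \(P_{\le K}\) on the left. The step that needs care is confirming that \(P_{\le K}\) sits on both sides correctly.

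\textbf{Annihilation piece.} For \(T^\dagger=P_{\hc}f(\widehat N)\sum_p\ket\perp_p\bra b P_{\hc}\), apply it to \(\mathcal I_K\ket x\). Each record slot containing \((j,\alpha)\) is replaced by a dummy with amplitude \(\overline{(b_j)_\alpha}\). After replacement there are \(N-1\) records, so \(f\) evaluates to \((K-N+1)^{-1/2}\). Counting works as in the creation piece: each string of \(x\setminus\{x_i\}\) arises from exactly \(K-N+1\) strings of \(x\), namely those in which \(x_i\) occupies one of that string's dummy slots. The normalization ratio \(\sqrt{(K-N+1)}\) combines with \(f\) to give coefficient one. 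This reproduces \(b_j^\dagger\) acting on bin \(j\), and it is exactly the adjoint of the creation computation, as it must be.

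Adding the three pieces gives \eqref{loc:eq:generator-intertwining}. Invariance of the valid symmetric subspace then follows, because the right side lies in \(\ran\mathcal I_K\) and the image of \(\mathcal I_K\) is precisely that subspace. Strictly, the intertwining identity is needed only on \(\ran P_{\le K}\), where \(\mathcal I_K\) is defined. The main obstacle is the combinatorial bookkeeping: getting the normalizations and the \(f\) factors right in the creation and annihilation counts, and in particular placing \(f\) on the correct side of each operator given how \(\widehat N\) changes. I would settle this first by checking the case \(K=1\).
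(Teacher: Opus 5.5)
Your proposal is correct and follows essentially the same route as the paper. Both arguments split $\sum_j h_j-a_{\rm tot}I$ into creation, annihilation and occupied-block parts. Summing over the $K-N$ dummy slots produces a factor $\sqrt{K-N}$, which $f(\widehat N)$ divides out, with $f(K)=0$ matching the cutoff and $P_{\hc}$ killing occupied targets; annihilation is the adjoint computation, and the $q$ sum acts once on every record. Your explicit bijective counting of injection strings, and your derivation of invariance from $\ran\mathcal I_K$ being the symmetric part of $\ran P_{\hc}$, fill in details the paper states in a few lines. One wording point: in the annihilation step, the net factor $\sqrt{K-N+1}$ is the multiplicity $K-N+1$ times the normalization ratio $(K-N+1)^{-1/2}$, not the normalization ratio alone.
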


\begin{proof}
There are $K-N$ dummy slots when the input has $N$ records. Summing slot creation maps
therefore multiplies a symmetric matrix element by $\sqrt{K-N}$. In
\eqref{loc:eq:slot-creation}, the number function acts first and divides out this factor.
What remains is the entry of $b_j$ at an empty address. The $P_{\hc}$ projection kills
occupied targets, and at $N=K$ the creation map is zero.

Adjointing gives annihilation, now with the number function after the annihilation
operator as in \eqref{loc:eq:slot-creation}. The $q_j$ sum in \eqref{loc:eq:slot-generator} acts
once on every record. These are exactly the projected physical matrix elements. They
commute with slot permutations, and the outer projectors exclude repeated addresses,
proving invariance of the valid symmetric subspace.
\end{proof}

The restriction to one record per address does not remove the dummy-slot multiplicity.
Without the number function, a vacuum-to-record matrix element would still acquire the
square root of the number of empty slots.

\subsection{Gates for a compressed frame}\label{loc:sec:frame-gates}

We now block encode the terms of \eqref{loc:eq:slot-generator}. Prepare $b/B$ and apply the
dummy projector to encode
$\ket b\bra\perp/B$ with normalization one. Since $f(\widehat N)$ is bounded by one, a number-controlled rotation encodes it. Computing the validity flag gives an encoding of $P_{\hc}$.

Taking an LCU over the $K$ slots, their adjoints, and the one-record $q$ blocks gives
\begin{equation}\label{loc:eq:frame-normalization}
\alpha_{\rm enc}\le K(2B+D_0),\qquad
D_0\ge\max_j\norm{q_j},
\end{equation}
with gate cost polynomial in $K,p,\log M$ and the bit precision. We may use a fixed
upper bound for $D_0$. The normalization depends on the column norm $B$; replacing it by
$\sum_j\norm{b_j}$ would introduce a positive power of the mesh size.

Apply Hamiltonian simulation by qubitization \cite{LowChuang} to this Hermitian block
encoding. For unit time and error $\delta$, its cost is polynomial in
\begin{equation}\label{loc:eq:frame-cost}
K,\ p,\ 1+\Omega,\ \log M,\ \log(1/\delta).
\end{equation}
The projector and LCU circuits act on the full $K$-slot register space and may leave the
valid record subspace internally. On a valid input, \eqref{loc:eq:generator-intertwining}
identifies the target evolution, while the Hamiltonian-simulation error bounds the
failure to return to that subspace.

Use local logarithms accurate to $O(\delta/M)$, requiring $O(\log(M/\delta))$ bits.
Their errors sum to $O(\delta)$ in the generator; use this precision for the known
scalar phase as well. There are two cases for creation. If $B\sqrt K$ is below its error
allowance, omit the term. Otherwise the normalized column $b$ can be prepared with
condition cost polynomial in $K/\delta$. Thus no additional inverse power of the small
coupling is charged.

Controls and adjoints use the same circuits. For $F(h)^\dagger$, reverse the compressed
weak rotation and the zero-coupling factors. The following occupation estimate covers
both orders.

\section{Uniform cutoffs and the total gate count}\label{loc:sec:assembly}

The physical tail estimate \eqref{loc:eq:physical-tail} bounds the occupation entering each
regional factor. During its compiled implementation, however, queries and changes of
basis may create further records. We need a cutoff that also controls these intermediate
states.

\subsection{A cutoff for the compiled program}\label{loc:sec:compiled-cutoff}

Consider a circuit with three kinds of operations: unitaries preserving occupation, $Q$
unitaries changing occupation by at most one, and collective rotations of the form
treated in Section~\ref{loc:sec:frame-generator}. Let $B_s$ be the norm
\eqref{loc:eq:collective-column} for rotation $s$, and set $B_{\rm tot}=\sum_sB_s$. The
input has occupation at most $N_0$.

Define the ideal cutoff circuit $V_K$ in physical configuration space as follows.
Compress each collective generator to $P_{\le K}$ and put the guard of
Lemma~\ref{loc:lem:records} on each addressed operation.

\begin{lemma}[Compiled-circuit cutoff]\label{loc:lem:cutoff}
For $K\ge N_0$,
\begin{equation}\label{loc:eq:cutoff-bound}
\begin{aligned}
&\norm{(V-V_K)P_{\le N_0}}\\
&\quad\le
\bigl(B_{\rm tot}\sqrt{K+1}+2Q\bigr)\\
&\qquad\cdot
e^{N_0+Q\log5+2\sinh(1)B_{\rm tot}\sqrt K-K}.
\end{aligned}
\end{equation}
In particular, error at most $\delta$ follows from
\begin{equation}\label{loc:eq:cutoff-choice}
K=O\bigl(1+N_0+Q+B_{\rm tot}^2+\log(1/\delta)\bigr).
\end{equation}
The estimate allows arbitrary coherent controls and an external reference.
\end{lemma}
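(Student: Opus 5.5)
The plan is a telescoping comparison of $V$ with $V_K$, combined with an exponentially weighted norm on the occupation $\cN$. Write both circuits as products of the same sequence of operations, $V=G_L\cdots G_1$ and $V_K=G_L^{(K)}\cdots G_1^{(K)}$. Then
\[
(V-V_K)P_{\le N_0}=\sum_{s}G_L\cdots G_{s+1}\bigl(G_s-G_s^{(K)}\bigr)G_{s-1}^{(K)}\cdots G_1^{(K)}P_{\le N_0}.
\]
The factors $G_L\cdots G_{s+1}$ are unitary, so each term is bounded by $\norm{(G_s-G_s^{(K)})\,y_s}$, where $y_s$ is the cutoff prefix state. Occupation-preserving unitaries contribute nothing, since they commute with the cutoff.

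\textbf{Step 1: local defects.} A guarded addressed operation differs from the physical one only on configurations where the other records already number $K$. By Lemma~\ref{loc:lem:records}, the difference is supported on $\mathbf 1_{\cN=K}$ and has norm at most $2$, and there are at most $Q$ such operations. For a collective rotation $e^{-iH_s}$ versus $e^{-iP_{\le K}H_sP_{\le K}}$, I will use Duhamel's formula. The defect is then an integral over $u\in[0,1]$ of a unitary times $P_{>K}H_sP_{\le K}$ applied to the cutoff-evolved state. Only the creation part $\sum_j b_j$ maps occupation $K$ into $K+1$. Its norm on the occupation-$K$ sector is at most $B_s\sqrt{K+1}$: the columns $b_j$ sit at distinct addresses, and at most $K+1$ records share the resulting state. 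This produces the prefactor $B_{\rm tot}\sqrt{K+1}+2Q$, multiplied by a uniform bound on $\norm{\mathbf 1_{\cN=K}\,y}$ over all intermediate cutoff states $y$.

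\textbf{Step 2: weighted growth.} I will bound $\norm{e^{\cN}y}$ along the cutoff circuit.
\begin{itemize}
\item The input gives $\norm{e^{\cN}P_{\le N_0}}\le e^{N_0}$.
\item An occupation-preserving unitary commutes with $e^{\cN}$.
\item A unitary changing occupation by at most one is block tridiagonal in $\cN$. Hence $e^{\cN}ue^{-\cN}$ has blocks scaled by $e$, $1$ and $e^{-1}$, with norm at most $e+1+e^{-1}<5$.
\item For a compressed collective rotation, $e^{\cN}e^{-iH_K}e^{-\cN}=\exp(-ie^{\cN}H_Ke^{-\cN})$. The conjugated generator differs from the Hermitian $H_K$ by $(e-1)$ times the creation part and $(e^{-1}-1)$ times the annihilation part. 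Both parts have norm at most $B_s\sqrt K$ below the cutoff.
\item The non-Hermitian deviation therefore has norm at most $(e-e^{-1})B_s\sqrt K=2\sinh(1)B_s\sqrt K$. A Dyson/Gr\"onwall bound gives the growth factor $e^{2\sinh(1)B_s\sqrt K}$.
\end{itemize}
Multiplying these factors gives $\norm{e^{\cN}y}\le e^{N_0}5^Qe^{2\sinh(1)B_{\rm tot}\sqrt K}$. Then $\norm{\mathbf 1_{\cN=K}y}\le e^{-K}\norm{e^{\cN}y}$, which yields the exponent in \eqref{loc:eq:cutoff-bound}. All estimates are operator-norm bounds, so they hold with coherent controls and a reference register attached.

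\textbf{Step 3: choosing $K$.} For \eqref{loc:eq:cutoff-choice}, use $2\sinh(1)B_{\rm tot}\sqrt K\le K/4$ once $K\ge cB_{\rm tot}^2$. Choose $K\ge 4(N_0+Q\log5)$ and also absorb the polynomial prefactor, which is $O(K+Q)\le e^{K/4}$ up to constants. The exponent is then at most $-K/4$, and $K=O(1+N_0+Q+B_{\rm tot}^2+\log(1/\delta))$ suffices.

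The main obstacle is Step 2 for the collective rotations. One must confirm that the conjugated generator's deviation is controlled by $B_s\sqrt K$, which relies on the number-function normalization of Lemma~\ref{loc:lem:slot-generator} and on one record per address. One must also confirm that Gr\"onwall applies with a non-Hermitian perturbation of a Hermitian generator. I would first check the creation-norm bound $\norm{\sum_jb_j\,\mathbf 1_{\cN=N}}\le B\sqrt{N+1}$ directly in the symmetric encoding.
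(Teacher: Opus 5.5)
Your proposal is correct and follows the paper's proof essentially step for step. It uses the same telescoping (full future unitaries, truncated prefixes), the same $e^{\cN}$-weighted prefix bound with factors $e^{N_0}$, $e+1+e^{-1}<5$ and $e^{2\sinh(1)B_s\sqrt K}$, and the same top-sector defects of size $B_s\sqrt{K+1}$ and $2$; the only differences are cosmetic. You bound the creation norm by a direct Cauchy--Schwarz count rather than the paper's bosonic Fock embedding, and you should state explicitly why each of the three number diagonals of a bandwidth-one unitary (guarded or not) has norm at most one, which the paper does via Fourier coefficients of $e^{i\theta\cN}Ue^{-i\theta\cN}$.
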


\begin{proof}
For a physical creation operator with column $b_s$,
\[
\norm{T_sP_N}\le B_s\sqrt{N+1}.
\]
To see this, embed the configurations with distinct addresses into bosonic Fock space
and use the creation-operator norm on the $N$-particle sector. Projecting back can only
decrease the norm. On the cutoff space, the creation part consequently has norm at most
$B_s\sqrt K$.

Conjugation by $e^{\cN}$ multiplies creation by $e$ and annihilation by $e^{-1}$. The
difference between the conjugated Hermitian generator and the original one has norm at
most
$2\sinh(1)B_s\sqrt K$. Duhamel's formula, or the interaction picture with the Hermitian part, gives
\[
\norm{e^{\cN}e^{-iH_{s,K}t}e^{-\cN}}
\le e^{2\sinh(1)B_s\sqrt K\,t}.
\]

A unitary that changes occupation by at most one has three number diagonals. Each is a
Fourier coefficient of
$e^{i\theta\cN}Ue^{-i\theta\cN}$
and hence has norm at most one. Its weighted norm is therefore bounded by
$e+1+e^{-1}<5$. An occupation-preserving unitary has weighted norm one. The guarded operations remain unitary with the same number bandwidth, so these estimates apply to them too.

For every truncated prefix, including times inside a collective rotation, we obtain
\begin{equation}\label{loc:eq:weighted-prefix}
\norm{e^{\cN}V_{K,{\rm pre}}P_{\le N_0}}
\le e^{N_0+Q\log5+2\sinh(1)B_{\rm tot}\sqrt K}.
\end{equation}
Its amplitude on the top sector is bounded by an additional factor $e^{-K}$.

Use Duhamel's formula to compare a full collective rotation with its compressed
generator. The coupling out of $P_{\le K}$ is supported on the top sector and has norm
at most $B_s\sqrt{K+1}$. A full addressed unitary also differs from its guard only on
that sector, by norm at most two. Telescope through the circuit, keeping full future
unitaries on one side and truncated prefixes on the other. The preceding top-sector
estimate then gives \eqref{loc:eq:cutoff-bound}.

Finally,
\[
2\sinh(1)B_{\rm tot}\sqrt K
\le\tfrac14K+4\sinh^2(1)B_{\rm tot}^2.
\]
The logarithm of the prefactor in \eqref{loc:eq:cutoff-bound} can be absorbed by another
fixed fraction of $K$ and the terms in \eqref{loc:eq:cutoff-choice}. Increasing its constant
proves the final assertion.
\end{proof}

The cutoff bounds the logical bath occupation. Arithmetic work and block-encoding
ancillas are supplied by their own unitary circuits, whose finite synthesis errors will
be added separately. They require no further physical time-bin registers.

\subsection{One regional circuit}\label{loc:sec:regional-circuit}

Consider a sweep with at most $p$ local terms and $M\le pJ$ physical cells before
padding. On a segment of length $t\le{t_\star}$, its action is $\Omega=O(pt)$.
Lemma~\ref{loc:lem:sparse} approximates it on $P_{\le N_0}$ using
\[
Q=O\bigl(1+N_0+\Omega+\log(1/\delta)\bigr).
\]
Implement the transformed transducer \eqref{loc:eq:gauged-transducer} by the preceding
routing construction. The free routing preserves occupation, and the addressed query,
including its local private-frame factors, changes it by at most one.

The reuse and coefficient registers leave the bath occupation unchanged. This leaves the
four collective public-frame factors described after \eqref{loc:eq:source-reflection}. Each
has $B_s=O(\sqrt\Omega)$, giving $B_{\rm tot}^2=O(\Omega)$.

Take a precision parameter
\[
b\ge C\bigl(1+\log(2M)+\log(1/\delta)\bigr).
\]
with its constant increased to cover time-address arithmetic and the allocation of error
among subroutines. We can pad the record capacity to
\begin{equation}\label{loc:eq:regional-capacity}
K=\Theta(1+N_0+p+b).
\end{equation}
which suffices for Lemma~\ref{loc:lem:cutoff} and gives $Q=O(K)$, $p\le K$, and $b=O(K)$.

Let $a\ge1$ bound the polynomial cost of the reversible $b$-bit arithmetic and
local-matrix evaluation. Choose $a$ large enough to include fixed-dimensional gate
synthesis, coefficient preparation, and elementary functions as well. The preceding
constructions permit a fixed choice, independent of $n,J,p,T,\eps$.

\begin{proposition}[Regional implementation]\label{loc:prop:regional}
A forward or backward regional sweep can be approximated on $P_{\le N_0}$ to operator
error $\delta$, with additional work reset in the target, using
\begin{equation}\label{loc:eq:regional-gates-explicit}
G_{\rm reg}=O\bigl(K^3(1+\sqrt p)b^a\bigr),
\qquad
W_{\rm reg}=O(K^2b^a).
\end{equation}
Here $K$ is chosen as in \eqref{loc:eq:regional-capacity}. The gate bound is before imposing
geometric routing.
\end{proposition}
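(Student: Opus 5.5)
The plan is to assemble the regional circuit from the pieces already built, in physical-to-compressed order, and to count gates per component. Lemma~\ref{loc:lem:sparse} fixes the compiled structure: with $Q=O(1+N_0+\Omega+\log(1/\delta))=O(K)$, the circuit consists of $O(Q)$ controlled calls to the transducer $S$ or $S^\dagger$, together with coefficient preparations for the two reuse lengths and one source reflection. Lemma~\ref{loc:lem:gauge} and \eqref{loc:eq:transformed-graph} let us run everything in the gauged coordinates. Each transducer call then becomes the routed product $\mathsf C_M\cdots\mathsf C_1$ followed by one addressed query $V_{R,j}^\dagger O_jV_{L,j}$. The only collective operations left are the four public-frame factors: $E_{\rm in}^\dagger$ at the input, $E_{\rm out}$ at the output, and the two inside the source reflection \eqref{loc:eq:source-reflection}.

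We first fix the record capacity $K$ by \eqref{loc:eq:regional-capacity} and apply Lemma~\ref{loc:lem:cutoff}. The occupation-changing unitaries number $Q=O(K)$, namely the addressed queries, while the routing preserves occupation. Each of the four collective rotations has $B_s=O(\sqrt\Omega)$, so $B_{\rm tot}^2=O(\Omega)=O(p)\le O(K)$. Condition \eqref{loc:eq:cutoff-choice} therefore holds with $K$ as chosen, once the constant in \eqref{loc:eq:regional-capacity} is enlarged, and we may replace the ideal compiled circuit by its cutoff version $V_K$ at cost $\delta/3$. This step also lets every subsequent subroutine work in the sorted or symmetric $K$-record encoding of Lemma~\ref{loc:lem:records}.

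Next we cost each component at per-operation precision $\delta/\poly(K)$. Since $b\gtrsim\log(M/\delta)$, all bit precisions are $O(b)$ and the error allocation only changes constants. A routed free transducer call costs $\poly(K,\log M,b)$ by \eqref{loc:eq:router-cost}. To make the exponent explicit, we count it as $O(\log M)=O(b)$ multiplexed rotations. Each rotation needs interval products \eqref{loc:eq:interval-product}, which take a scan over $K$ records with $b^{O(1)}$ arithmetic, giving $O(Kb^a)$ per call. The addressed query multiplexes $p$ patch descriptions; with coherent lookup I expect it to cost $O(\sqrt p\,b^a)$ plus $\poly(K)$ record access from Lemma~\ref{loc:lem:records}. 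Record conversions cost $O(K^2 b^a)$ from the pairwise validity flags. Multiplying by $Q=O(K)$ calls gives $O(K^3(1+\sqrt p)b^a)$, after absorbing $b$-powers into $a$. The four frames cost $\poly(K,p,1+\Omega,b)$ by \eqref{loc:eq:frame-cost}, and with $\Omega=O(p)\le K$ this is dominated by the same bound after fixing $a$. The working space $W_{\rm reg}=O(K^2b^a)$ comes from storing $K$ records of $O(b)$ bits together with reversible arithmetic workspace, reused across calls.

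The main obstacle is the error accounting across three nested approximations: the compiler error from Lemma~\ref{loc:lem:sparse}, the cutoff error from Lemma~\ref{loc:lem:cutoff}, and the finite synthesis errors. These synthesis errors arise from routing, frame simulation, and local evaluation, and they act on states that may leave the valid encoded subspace. I would telescope in operator norm over $P_{\le N_0}$, first using that the ideal cutoff circuit keeps valid inputs valid, and then bound each synthesized operation's error on the valid subspace. Leakage is then charged to the operation that creates it. Checking that the exponent in $G_{\rm reg}$ is exactly $K^3$ rather than a larger polynomial requires the per-call router cost to be linear in $K$. If that fails, I would accept a larger fixed power and fold it into $a$ through $K=O(b+N_0+p)$.
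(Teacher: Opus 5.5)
Your overall assembly matches the paper's: Lemma~\ref{loc:lem:sparse} with $Q=O(K)$ transducer calls, the gauged transducer of Lemma~\ref{loc:lem:gauge}, the cutoff of Lemma~\ref{loc:lem:cutoff} with $B_{\rm tot}^2=O(\Omega)=O(p)$, and a unitary hybrid for the synthesis errors. The gap is in the four collective frames, which is the step that actually fixes the stated exponent.

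You bound the frames by a polynomial in $K,p,1+\Omega,b$ and say this is ``dominated by the same bound after fixing $a$.'' But $a$ only raises the power of $b$, and $K=\Theta(1+N_0+p+b)$ is not bounded by a fixed power of $b$ in this proposition. So an unspecified power of $K$ or $p$ cannot be absorbed. The same objection rules out your fallback of folding a larger power of $K$ into $a$. That would change the statement, and the exponent later derived from it in \eqref{loc:eq:explicit-resources}.

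The missing computation is explicit:
\begin{itemize}
\item With $B=O(\sqrt\Omega)=O(\sqrt p)$, \eqref{loc:eq:frame-normalization} gives block-encoding normalization $O(K(1+\sqrt p))$.
\item Qubitization to error $O(\delta)$ uses that encoding $O(K(1+\sqrt p)+b)$ times.
\item One use costs $O(K^2b^a)$: pairwise duplicate-address comparisons, slot selection and number arithmetic, and the one-record preparation of $b/B$ and the blocks $q_j$.
\end{itemize}
Since $b=O(K)$, the frames cost $O(K^3(1+\sqrt p)b^a)$. This is where the factor $1+\sqrt p$ in $G_{\rm reg}$ comes from.

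Conversely, the $\sqrt p$ you attach to the addressed query is not justified. The input model charges $p$ evaluations for multiplexing $p$ patch descriptions, so that step costs $O(pb^a)$ per call. This is harmless because $p\le K$. With $O(\log M)$ dyadic levels at $O(Kb^a)$ each and $\log M\le b=O(K)$, a routed call costs $O(K^2b^a)$, including addressed extraction and reinsertion. The $Q=O(K)$ calls then total $O(K^3b^a)$, with no $\sqrt p$ at all. Your concern that the router must be linear in $K$ is therefore misplaced: a quadratic per-call cost already suffices.

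Two smaller points:
\begin{itemize}
\item Record conversion is needed only to enter and leave the symmetric encoding around the $O(1)$ frame operations, where $O(K^3b^a)$ each suffices. It should not be charged per transducer call.
\item The backward sweep needs a sentence. Compile the adjoint cells in reverse order; Lemma~\ref{loc:lem:cell} gives weakness of $u(h)^\dagger$. Its inactive block is $D_{\rm cell}^\dagger$, which interchanges the singular frames and leaves every cost unchanged.
\end{itemize}
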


\begin{proof}
Apply Lemma~\ref{loc:lem:sparse} with a constant fraction of the error budget. The frame
identity preserves the history-tail bound on the transformed source subspace, and
Lemma~\ref{loc:lem:cutoff} bounds the error of truncating to the capacity in
\eqref{loc:eq:regional-capacity}. It remains to count the gates and allocate the synthesis
error.

A routing call uses $O(\log M)$ dyadic levels. At each level, scan the records, count
the baseline bins, and evaluate the integer powers. These steps cost $O(Kb^a)$ gates
after increasing $a$. Since $\log M\le b=O(K)$, one routing call costs $O(K^2b^a)$,
including addressed extraction and reinsertion. Selecting among $p$ local descriptions,
evaluating the sampled matrices, and applying their constant-size encodings adds
$O(pb^a)$ gates. Over $Q=O(K)$ calls, the total is $O(K^3b^a)$.

For each collective frame, \eqref{loc:eq:frame-normalization} gives block-encoding
normalization $O(K(1+\sqrt p))$. One use costs $O(K^2b^a)$ gates: compute the
duplicate-address flag by pairwise comparisons, perform reversible slot selection and
number arithmetic, and use the one-record preparations and blocks of
Section~\ref{loc:sec:frame-gates}. Qubitization to error $O(\delta)$ takes $O(K(1+\sqrt
p)+b)$ uses. The four frames therefore satisfy the gate bound in
\eqref{loc:eq:regional-gates-explicit}.

Record conversion fits within $O(K^3b^a)$ gates. Choose successive positions from those
remaining, route the $K$ records by reversible swaps, and recover the choices from the
output ranks. Repeated scans and comparisons implement this without a permutation table
or factorial-size register. The records and comparisons use $O(K^2b)$ qubits, while the
supplied evaluation circuit and arithmetic may require $O(b^a)$ additional work. Thus
$O(K^2b^a)$ qubits suffice for both, including reuse indices and Hamiltonian-simulation
controls.

Give each repeated subroutine error $O(\delta/Q)$ and each collective frame a constant
fraction of $\delta$. The conditioning bounds in Section~\ref{loc:sec:routing-precision} and
the local-frame calculation require $O(b)$ precision bits, with constants enlarged as
needed. Adding the errors by a unitary hybrid makes the total from compilation, cutoff,
and synthesis at most $\delta$.

For a backward regional factor, apply the compiler to the adjoint cells in reverse order
on their own sparse input subspace. The sampled matrices are those of the forward sweep,
read in reverse chronological order. Its inactive block is the adjoint of
\eqref{loc:eq:dcell}, which interchanges the singular frames. All the preceding costs and
error estimates are unchanged.
\end{proof}

To implement the physical source reflection, count the occupied records and test the
initialized compiler work. Conjugating it uses the two frame factors in
\eqref{loc:eq:source-reflection}, with their controlled scalar phases retained. These
operations are included in Proposition~\ref{loc:prop:regional}.

\subsection{A common spatial encoding}\label{loc:sec:spatial-encoding}

Store the $J$ baths of each physical interaction label in a record array. While a
regional circuit runs, merge the arrays in its patch and attach the local interaction
label to each record. Sorting, merging, and conversion between the two encodings have
the costs given above. The labels specify where the records must return, so the merge
can be undone coherently.

Give each interaction array the largest regional capacity. If a region exceeds its
allowed total input capacity, use an identity extension. On a valid branch, the guarded
logical circuit stays within the capacity and its flag can be uncomputed; finite
synthesis errors are charged separately. After applying the regional output frame,
return the records to their physical arrays. The next layer then acts on the same bath
in the same basis.

For the work registers, partition the lattice into the elementary boxes formed by all
cuts in all coordinates. A regional box contains at most $2^D$ such boxes. Give each
elementary box a workspace pool of the largest size required by
\eqref{loc:eq:regional-gates-explicit}. A regional circuit uses only constantly many pools,
together with $O(pKb)$ bath-array qubits. As $p\le K$, the complete patch occupies
$O(K^2b^a)$ qubits. There are at most $n$ elementary boxes, which also bounds the total
workspace allocation.

Arrange the registers of each elementary box in a contiguous local block. A path within
a regional patch implements any two-qubit gate with $O(K^2b^a)$ neighboring swaps and
gates. Swap the registers back afterwards to preserve their locations for the next
layer. A sufficient geometric gate bound is therefore
\begin{equation}\label{loc:eq:geometric-regional}
G_{\rm reg}^{\rm geom}
=O\bigl(K^5(1+\sqrt p)b^{2a}\bigr).
\end{equation}
This also bounds the sequential depth of a patch circuit. The disjoint patches in each
layer run in parallel.

For periodic coordinates on hardware with open boundaries, fold each coordinate into two
sheets over an interval. Periodic neighbors then occupy neighboring or coincident sheet
positions. This increases local site density by at most $2^D$ and preserves bounded
interaction range \cite{HHKL}, allowing the same layout.

\subsection{Proof of Theorem~\ref{loc:thm:main}}\label{loc:sec:main-proof}

Fix the short time ${t_\star}$ chosen after Proposition~\ref{loc:prop:layers}, and put
\[
R=\max\{1,\lceil T/{t_\star}\rceil\}.
\]
For $T=0$ or empty $E$, the identity channel suffices. More generally,
\[
\norm{\mathcal E(T,0)-\id}_\diamond\le2gn\Lambda_{\rm loc} T.
\]
so we can also use the identity whenever this bound is at most $\eps$. Otherwise
$\log(1/T)$ is at most $O(\log(n/\eps))$, which will bound the precision needed for
small times. Divide $[0,T]$ into $R$ equal segments of length $t=T/R$, with fresh baths
for each segment. Using the same power-of-two $J$ on every segment gives a uniform mesh
of width $h=T/(RJ)$.

By Lemma~\ref{loc:lem:temporal-mesh}, the temporal discretization error is at most $\eps/4$
for sufficiently large $J$. For a fixed constant $C$, it suffices to choose
\begin{equation}\label{loc:eq:mesh-choice}
\begin{split}
J\ge C\biggl[&
1+\frac{{t_\star}}{h_0}
+\frac{n^2(T+1)}{\eps}
+\frac{nB}{\eps}\\
&+\left(\frac{nK_t(T+1)}{\eps}\right)^{1/\alpha}
\biggr].
\end{split}
\end{equation}
Take the smallest power of two satisfying this bound. Then
\[
\log J=O(\alpha^{-1}\log X)=O(\log X)
\]
since the H\"older exponent is fixed. The small-time estimate above also gives
$\log(1/h)=O(\log X)$, covering the precision required for the local free frames.

On each segment, apply Proposition~\ref{loc:prop:layers} with joint error $\eta=\eps/(16R)$.
We may take
\[
\ell=O(\log X),\qquad \ell>\rho.
\]
leaving shorter coordinates unsubdivided. The resulting word has $r_0=3^D$ layers, with at most
\[
p=O(\ell^D)=O(\log^D X)
\]
sites and local terms in each regional factor. Across the full evolution, the number of
nonidentity factors is bounded by
\[
L_{\rm tot}=O(r_0Rn).
\]
This count allows unequal box volumes and is sufficient for the gate bound.

Let $b=C'(1+\log X+\log J)$. Choose the per-interaction physical cutoff $k$ large enough
that $k\ge b$ and that \eqref{loc:eq:regional-tail}, summed over all regional inputs and
multiplied by two, is at most $\eps/16$. Equation \eqref{loc:eq:physical-tail} gives such a
choice with
\[
k=O(b),\qquad N_0=pk=O(pb),\qquad
\Omega=O(p).
\]
Assign regional approximation error
$\delta=\eps/(16L_{\rm tot})$.
Its logarithm is $O(b)$. Proposition~\ref{loc:prop:regional} then permits
\begin{equation}\label{loc:eq:global-capacities}
Q=O(pb),\qquad K=O(pb),\qquad b=O(\log X).
\end{equation}

Now telescope the approximate regional circuits against the exact physical word. At each
input, the exact prefix satisfies \eqref{loc:eq:regional-tail}. The regional error on its
good component is at most $\delta$; on the complement, the difference of two isometries
has norm at most two. Summing over regional factors and segments gives
\[
L_{\rm tot}\delta+
2\sum_{\text{regional inputs}}
\norm{(I-P_{\rm good})V_{\rm pre}P_{\vac}}
\le\eps/8.
\]
The occupation estimate is applied to exact prefixes throughout this hybrid, so
approximate intermediate states need not have exact sparse support.

For isometries with a common environment,
\[
\norm{\Tr_E(V\,\cdot\,V^\dagger)
-\Tr_E(W\,\cdot\,W^\dagger)}_\diamond
\le2\norm{V-W}.
\]
The spatial joint error summed over segments is at most $\eps/16$. Adding the regional
implementation error and temporal channel error gives total diamond error at most
\[
2(\eps/8+\eps/16)+\eps/4<\eps.
\]

For the resource count, substitute $K=O(pb)$ into \eqref{loc:eq:regional-gates-explicit} and
\eqref{loc:eq:geometric-regional}. For $p,b\ge1$,
\[
G_{\rm reg}=O(p^4b^{a+3}),\qquad
G_{\rm reg}^{\rm geom}=O(p^6b^{2a+5}).
\]
There are $L_{\rm tot}=O(n(T+1))$ regional factors and $p=O(b^D)$, giving
\begin{equation}\label{loc:eq:explicit-resources}
\begin{split}
G_{\rm geom}&=O\bigl(n(T+1)b^{6D+2a+5}\bigr),\\
\operatorname{depth}&=O\bigl((T+1)b^{6D+2a+5}\bigr),\\
\operatorname{space}&=O\bigl(nb^{2D+a+2}\bigr).
\end{split}
\end{equation}
Run disjoint patches in parallel through the $3^D$ layers to obtain the depth bound. The
common workspace pools and $W_{\rm reg}=O(p^2b^{a+2})$ give the space bound; discarding
the bath after each segment removes a factor $T$ from peak space. This proves
\eqref{loc:eq:explicit-resources} and the theorem.

These exponents come from the specified reversible routines and have not been optimized.
The fine mesh enters only through the length of its binary address. In particular,
temporal variation affects the gate count through $\log J$, while the bath action within
a short patch remains $O(p)$.

\subsection{An effective temporal modulus}\label{loc:sec:effective-modulus}

H\"older regularity was used to choose a mesh with logarithmic address length. A
different effective regularity bound can be used in the same construction, provided we
include the resulting address length in the cost.

\begin{corollary}[General freezing modulus]\label{loc:cor:modulus}
Suppose the bounded local generator has an effective bound
\[
\int_0^T\norm{\cL(s)-\cL_h(s)}_\diamond\,ds\le F(h)
\]
for midpoint freezing on a uniform mesh. Choose a power-of-two $J$ on each of the $R$
equal segments such that, for $h=T/(RJ)$,
\[
F(h)+Cn^2Th\le\eps/4,\qquad h\le h_0.
\]
For coherently evaluable local matrices as above, the gate count is
\[
n(T+1)\poly\bigl(\log(n(T+1)/\eps)+\log J\bigr)
\]
and \eqref{loc:eq:explicit-resources} gives the corresponding depth and peak space.
\end{corollary}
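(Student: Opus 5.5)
The plan is to rerun the proof of Theorem~\ref{loc:thm:main} and change only the step where the mesh is chosen, which is the only place H\"older regularity entered. First, in place of Lemma~\ref{loc:lem:temporal-mesh} we use the assumed bound $F(h)$. The variation-of-constants identity in that proof, with both propagators CPTP, bounds the freezing error by $\int_0^T\norm{\cL(s)-\cL_h(s)}_\diamond\,ds\le F(h)$. The Lie-product and local-cell errors, $O(n^2Th)$ for $h\le h_0$, are unchanged, since they used only the uniform normalizations and Lemma~\ref{loc:lem:cell}. The hypothesis $F(h)+Cn^2Th\le\eps/4$ therefore gives temporal error at most $\eps/4$. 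We may also keep the identity-channel shortcut when $2gn\Lambda_{\rm loc}T\le\eps$, which controls $\log(1/T)$ as before.

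Second, the spatial and compilation steps do not depend on the mesh. Proposition~\ref{loc:prop:layers} holds uniformly in the finite mesh and the sampled values. The physical occupation tail \eqref{loc:eq:physical-tail} is uniform in $J$, because $C_{\rm occ}$ depends only on $\kappa$, $t_\star$ and $r_0$. Lemma~\ref{loc:lem:sparse} and Lemma~\ref{loc:lem:cutoff} involve the action $\Omega=O(p)$ and not $M$. With the same choices $\eta=\eps/(16R)$, $\delta=\eps/(16L_{\rm tot})$, $\ell=O(\log(n(T+1)/\eps))$, $p=O(\ell^D)$ and $L_{\rm tot}=O(n(T+1))$, the telescoping hybrid and the diamond-error bookkeeping carry over verbatim and give total error below $\eps$.

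Third, we recount resources with a new precision parameter. The mesh size enters only through $\log M\le\log(pJ)$ in the routing, through the time-address arithmetic, and through the precision $\log(1/h)$ of the local free frames. So we set $b=C'(1+\log(n(T+1)/\eps)+\log J)$. Here $\log(1/h)\le\log(RJ/T)$, which is $O(b)$ by the small-time shortcut. We then require $k\ge b$, $K=O(pb)$, $Q=O(pb)$, and $\log(1/\delta)=O(b)$. Substituting into \eqref{loc:eq:regional-gates-explicit} and \eqref{loc:eq:geometric-regional} gives the bounds \eqref{loc:eq:explicit-resources} with this $b$. Since $p=O(\ell^D)$ with $\ell=O(b)$, the gate count is $n(T+1)\poly(\log(n(T+1)/\eps)+\log J)$, with the stated depth and space.

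The main point to check is that no hidden step used H\"older regularity beyond the mesh choice. The likely places are the routing precision in Section~\ref{loc:sec:routing-precision}, where $\gamma=\Theta(\delta/M)$ needs only $b=O(\log(M/\delta))$, and the cost of the local frames, which is $\poly(\log(1/h)+\log(1/\nu))$. Both are absorbed once $\log J$ is included in $b$. The other thing to confirm is that $F$ is only used at the chosen $h$, with $h\le h_0$ imposed separately, so no monotonicity of $F$ is needed.
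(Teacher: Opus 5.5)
Your proposal is correct and follows the paper's proof. The temporal error is bounded by $F(h)+Cn^2Th\le\eps/4$ using variation of constants and the unchanged Lie-product and local-cell estimates. The mesh-uniform finite-bin estimates and the resource count are then reused with $b=O(1+\log(n(T+1)/\eps)+\log J)$ and $\ell,k=O(b)$. Your extra checks (routing precision through $\log M$, the $\log(1/h)$ frame precision via the small-time shortcut, and the observation that no monotonicity of $F$ is needed) only make explicit what the paper leaves implicit.
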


\begin{proof}
The temporal error is at most \(F(h)+Cn^2Th\le\eps/4\), by variation of constants and
the local-cell/Lie-product bound. Thereafter the proof uses finite-bin norm estimates
only. Reuse its compiler and resource count with
\(b=O(1+\log(n(T+1)/\eps)+\log J)\)
with \(\ell,k=O(b)\). The stated polynomials follow.
\end{proof}

For the H\"older class, Theorem~\ref{loc:thm:main} supplies \(\log J=O(\log X)\). The more
general statement must retain \(\log J\): a weaker modulus may force a finer mesh.

\section{Conclusion}\label{sec:conclusion}

Causal query compression gives a mesh-independent query bound for Lindblad evolution
under coherent access to its Hamiltonian and individual jump operators. The bound is
worst-case optimal for \(\tau\ge1\). Its proof uses a chronological query history,
weak entry and clean-return amplitudes, and a two-length reuse construction that removes
the auxiliary input. Retaining intervention records also gives the oracle-action
characterization for the bounded adaptive Markovian model.

For efficiently evaluable local generators, the spatial and occupation estimates turn
this query construction into an elementary-gate implementation. The common bath must
be retained across regional factors, and the occupation cutoff must remain valid during
queries and collective frame changes. Under these conditions,
\eqref{loc:eq:explicit-resources} gives nearly linear dependence on \(n(T+1)\),
with polylogarithmic dependence on precision and temporal variation. The stated
polynomial exponents include arithmetic and geometric routing and have not been optimized.

The general oracle theorem does not by itself supply such a gate bound: the local
implementation uses fixed spatial parameters, common normalizations, and efficient
coherent evaluation. Corollary~\ref{loc:cor:modulus} states the remaining mesh-address
cost for weaker effective temporal regularity. Improving the explicit routing overheads,
or replacing the short step by a higher-order circuit with verified weak-cell bounds,
would refine these implementation costs.

\section*{Acknowledgment}
AI-assisted tools were used for limited editorial support in refining the prose.
The mathematical arguments, results, and conclusions are solely those of the author.

\end{document}